\newcounter{lst}[section]
\lstdefinestyle{easslisting}{basicstyle=\footnotesize\sffamily, mathescape=true, frame=tb,  numbers=right, numberstyle=\footnotesize, stepnumber=1, numbersep=-5pt, captionpos=b}
\lstdefinestyle{listingl}{basicstyle=\footnotesize\sffamily, mathescape=true, numberstyle=\footnotesize, stepnumber=1, numbersep=-5pt, captionpos=b}
\lstdefinestyle{eass}{basicstyle=\sffamily, mathescape=true}
\lstdefinelanguage{Pseudocode}{%
    morekeywords={mod,or,and,if,end,then,else,loop,while},
    morecomment=[l]{//},
 literate= {<-}{{$\leftarrow$}{$\:$}}2
           {.B}{{${\cal B}$}}2
           {.G}{{${\cal G}$}}2
           {lnot}{{$\sim$}}2
           {assert_shared}{{$+_{\Sigma}$}}2
           {remove_shared}{{$-_{\Sigma}$}}2
           {True}{{$\top$}}2
           {(perform)}{}0
}
\lstdefinelanguage{Gwendolen}{%
    morekeywords={Plans,Initial,Beliefs,Goals,name,fof-parse,Rules,Belief},
    morecomment=[l]{//},
 literate= {<-}{{$\leftarrow$}{$\:$}}2
           {.B}{{${\cal B}$}}2
           {.G}{{${\cal G}$}}2
           {lnot}{{$\sim$}}2
           {assert_shared}{{$+_{\Sigma}$}}2
           {remove_shared}{{$-_{\Sigma}$}}2
           {True}{{$\top$}}2
           {(perform)}{}0
}
\title{Multi-Scale Verification of Distributed Synchronisation\thanks{
	This work was supported by
    the Sir Joseph Rotblat Alumni Scholarship at Liverpool
and 
    the Engineering and Physical Sciences Research Council, under grants EP/N007565/1 (S4: Science of Sensor Systems Software),  EP/L024845/1 (Verifiable Autonomy), 
and the 
 FAIR-SPACE (EP/R026092/1), RAIN (EP/R026084/1).
and ORCA (EP/R026173/1) RAI Hubs.
}
}
\author{ Paul Gainer \and
Sven Linker \and
Clare Dixon \and
Ullrich Hustadt \and
Michael Fisher
}
\begin{document}
\maketitle

\begin{abstract}
Algorithms for the synchronisation of clocks across networks are both
common and important within distributed systems. We here address not
only the formal modelling of these algorithms, but also the formal
verification of their behaviour. Of particular importance is the
strong link between the very different levels of abstraction at which
the algorithms may be verified. Our contribution is primarily the
formalisation of this connection between individual models and
population-based models, and the subsequent verification that
is then possible.
While the technique is applicable across a range of synchronisation
algorithms, we particularly focus on the synchronisation
of (biologically-inspired) pulse-coupled oscillators, a widely used
approach in practical distributed systems. For this application
domain, different levels of abstraction are crucial: models based
on the  behaviour of an individual process
are able to capture the details of distinguished nodes in possibly 
heterogenous networks, where each node may exhibit different behaviour.
On the other hand, collective models assume homogeneous sets
of processes, and allow the behaviour of the network to be analysed at the
global level. System-wide parameters may be easily adjusted,
for example environmental factors
inhibiting the reliability of the shared communication medium.
This work provides a formal bridge across the ``abstraction gap''
separating the individual models and the population-based models for
this important class of synchronisation algorithms.
\end{abstract}

\section{Introduction}
\label{sec:intro}

 Small computing devices comprising networks, be it commercial wireless sensor networks, or
communicating devices in the Internet of Things, become increasingly common.
However, to enable these devices to communicate efficiently, they have to employ
methods to use the shared communication medium without too many conflicts, e.g., 
in the form of collisions. Several protocols to organise shared medium access have
been developed and analysed~\cite{Akyildiz2002,yick2008survey}. These
protocols typically identify a 
common time frame and divide this frame into slots associated to each node. Thus
every node has an allocated time slot that it may use to communicate its messages
onto the shared medium. 

Such an approach introduces the need for a common clock 
between the nodes, i.e., they need to synchronise. A valuable approach 
to achieve synchrony of nodes is the implementation of biologically-inspired 
\emph{pulse-coupled oscillators} (PCOs)~\cite{mirollo1990synchronization}. A network 
of PCOs synchronises in the following way: all oscillators have a similar
\emph{clock cycle} at the end of which they \emph{fire}. That is, they transmit
a broadcast message which is received by all oscillators in their
communication range. These oscillators then adjust their own position within
their clock cycle according to a \emph{phase response function}.
Depending on the concrete implementation, they may move their current
position within the clock cycle closer to its end, or closer to its start. 

Most analyses of the synchronisation behaviour of PCOs are concerned with
continouous clock cycles, i.e., where clocks take real values
from the interval \([0,1]\). However, the smaller
devices get, the more important it is to save memory and computing time for 
such a low-level functionality. Even a floating point number may need too much memory, 
compared to an implementation with, for example, a four-bit vector. Hence, in previous
work, we chose to analyse the behaviour of \emph{discrete time PCOs} \cite{gainer2017investigating}. 

In contrast to continuous time PCOs, networks of discrete time PCOs are not always
guaranteed to synchronise. Instead, whether they synchronise or not depends on the
type of coupling between the oscillators and their common phase-response function.
We analysed the behaviour of such networks for different parameters via
model-checking, to check both qualitatively for which parameters the networks
synchronise, as well as quantitatively for how long they need to achieve a
synchronised state and how
much energy  is used to achieve this~\cite{gainer2017power}.
In the context of large numbers of single oscillators, for example in the context
of wireless sensor networks, the well-known state-space explosion problem of
the model-checking approach is extremely important ~\cite{chen13review}.
 We formalised a network of oscillators as \emph{population models}~\cite{donaldson2006symmetry} which 
exploit the behavioural homogeneity of the nodes  to encode the global state efficiently.
This allows the network size to be increased above what would be feasible when distinguishing
each node.
But the
construction of a population model from a given oscillator specification is not straightforward,
and in particular, it is not obvious whether the constructed population model correctly 
reflects the behaviour of the oscillators. This results in an `abstraction gap': after abstracting into populations, how can we be sure that the abstraction process was correct and that the results of verification of population models actually hold for the concrete models on which they are based? 

%
%

In this paper, we remedy this lack of certainty, by proving the correspondence of 
our population model with an explicit formalisation of the oscillators. To that
end, we present the concrete oscillator model as well as its formalisation as 
a discrete-time Markov chain. Subsequently we describe the corresponding population model,
and show how we can, in addition to the abstraction created by the populations,
 reduce the state space even further to facilitate the analysis. Finally, we prove
that the behaviour of a network of concrete oscillators can be simulated by the population model.
We cannot prove a one-to-one correspondence, since the concrete model 
implicitly includes the possibility of identifying individual oscillators, which is exactly what the population model abstracts from. However, by providing a formal 
notion of abstraction, we prove that population models are a truthful abstraction of concrete models.

The paper is structured as follows. In Sect.~\ref{sec:related}, we review a selection of related work,
both for models of pulse-coupled oscillators, as well as approaches for their verification. After an
introduction of preliminary notions in Sect.~\ref{sec:preliminaries},
we present the concrete model of single oscillators, both as an algorithm and as a discrete-time Markov
chain derived from this algorithm, in Sect.~\ref{sec:concrete}. The abstract model in terms 
of population models and proofs about their properties are contained in Sect.~\ref{sec:population}.
In Sect.~\ref{sec:connection}, we prove the correspondence between these two types of models, and conclude
 in Sect.~\ref{sec:conclusion}. 


\section{Related Work}
\label{sec:related}

The canonical model of pulse-coupled oscillators, and their synchronisation, 
was formulated by Mirollo
and Strogatz~\cite{mirollo1990synchronization}, and based on Peskin's
model of a cardiac pacemaker~\cite{peskin1975mathematical}. 
Here the progression of an oscillator through its oscillation cycle is given
by a real value in the interval $[0,1]$. Mirollo
and Strogatz proved that with a convex phase
response function, a network of mutually coupled oscillators always converges, i.e.,
their position within the oscillation cycle eventually coincides. 
Such a model has been shown to be applicable to the clock
synchronisation of wireless sensor nodes~\cite{tyrrell2006fireflies} and
swarms of robots~\cite{perez2015firefly}.
 
Synchronisation algorithms based on pulse-coupled oscillators are often benefitial in unreliable, decentralised networks, where other synchronisation 
algorithms are not appropriate. For example, the Flooding Time Synchronisation Protocol (FTSP)~\cite{Maroti:2004:FTS:1031495.1031501} requires the 
use of an arbitrary root node. In situations where the root becomes unavailable due to communication failure or power outage, FTSP will have to 
assign another root node. When implemented on unreliable, decentralised networks, FTSP may spend considerable resources on repeatedly assigning root nodes, which may slow down or prevent synchronisation~\cite{breza13phd}. 
Other algorithms such as the Berkeley algorithm~\cite{berkeley-alg} and Cristian's algorithm~\cite{Cristian1989} require the use of centralised time servers, which is problematic for unreliable, decentralised networks. 


Several decentralised network algorithms for synchronisation are based on pulse-coupled oscillators~\cite{tyrrell2006fireflies,werner2005firefly}. For example, the Gradient Time Synchronisation Protocol (GTSP) by Sommer and 
Wattenhofer~\cite{sommer09} achieves synchronisation by having nodes send their current clock value to their neighbours. Each node then calculates the average of the clock values received and its own clock value. This process is then repeated to maintain synchronisation. Another approach to synchronisation, the Pulse-Coupled Oscillator Protocol~\cite{Pagliari:2007:DIP:1322263.1322308}, makes use of refractory periods after sending messages containing time information. During the refractory period, no more messages are sent, which reduces network bandwidth and energy usage. A similar approach is used in the FiGo protocol~\cite{breza13phd}, which combines biologically inspired synchronisation with information distribution via gossiping. All of these approaches use  different phase response functions. 

In general, synchronisation algorithms based on PCOs are more robust for unreliable networks, as they do not require centralised nodes and can work with only partial network connectivity~\cite{breza13phd}. They are particularly useful for battery-powered nodes in wireless networks, as the node can be placed in a low-power node during the refractory period, thus reducing energy usage. (The clock keeps ticking even in low-power mode, thanks to the design of  microcontrollers such as the `Atmel ATmega128L'~\cite{atmel}.)

Synchronisation of clocks for networks of nodes  has been investigated
from different perspectives.  Heidarian et al.~\cite{heidarian2012analysis}
analysed the behaviour of a synchronisation protocol based on time allocation
slots for up to four nodes and different topologies, from fully connected
networks to line topologies. They modelled the 
protocol as timed automata~\cite{alur1994theory}, and used the model-checker
UPPAAL~\cite{Behrmann2006} to examine its worst-case behaviour.
Their model is based on continuous time, and in particular, they
did not model pulse-coupled oscillators.

 Bartocci et al.~\cite{bartocci2010detecting} described pulse-coupled oscillators
as extended timed automata with suitable semantics to model their peculiarities. 
They defined a dedicated logic to analyse
the behaviour of a network of such automata along traces, and used 
a pacemaker as a case study to verify the eventual synchronisation and
the time needed to achieve this. 

Our models and methods are slightly different to all of these approaches. 
This is, of course, evident for all the mentioned work that is not concerned with
pulse-coupled oscillators. 
However, we also define the oscillation cycle  to consist of discrete steps. To
the best of our knowledge, with the exception the paper by Webster et al.~\cite{Webster2018} and  our previous work~\cite{gainer2017investigating,gainer2017power}, 
there is no other work concerned with PCOs with discrete oscillation cycles.
Furthermore, all of these approaches distinguish between single oscillators in the network, 
while the properties of interest relate to global behaviour.
This discrepancy between local modelling and global analysis 
 restricts the size of networks that can be analysed,
 due to the state-space explosion. 
To extend the size of analysable networks, we employ
\emph{population models},
a counting-abstraction of such networks~\cite{Del03}.
Instead of identifying each oscillator on its own, we record how many
oscillators are in each step of the oscillation cycle. 
This reduces the state-space quite tremendously by exploiting the symmetries in the model~\cite{donaldson2006symmetry},  and we are hence able to extend the 
size of networks.

The notion of population models should not be confused with \emph{population protocols} \cite{Angluin2006}, a formalism
to express distributed algorithms. 
In contrast to our setting, communication in population protocols is always between two agents, where 
one agent initiates the communication and the other responds. Furthermore, even though the agents cannot
identify the other agents in the network, within the global model each agent is uniquely associated with
a state. In our model, we cannot distinguish between two different agents sharing the same state, even at the
global level. Finally, our oscillators may change their state without interacting with other oscillators,
while the agents in a population protocol must communicate with another agent to change their internal state.

We will present a relation between the concrete models, where each oscillator can
be identified, and corresponding population models, and show that these two models are in a 
\emph{simulation relation} \cite{Milner1971}. More precisely,  the concrete model
\emph{weakly} simulates its abstraction, since
the oscillators have to take transitions independently,
while in the population model, all oscillators evolve in a single step.   

Similarly to typical definitions
of counter abstractions \cite{emerson1999asymmetry,Basler2009}, we use counters to model concurrent entities
that are  indistinguishable for our purposes. For example, to analyse the probability of
eventually reaching a synchronised state, we are not interested in an order of oscillators, which
would be artificial anyway. However, in contrast to these approaches, we do not include means
to introduce new entities into a model. That is, the values within our population models are naturally
bounded by the number of oscillators within the network.






\section{Preliminaries}
\label{sec:preliminaries}

%
In this section we define \emph{discrete-time Markov chains} (DTMCs),
stochastic processes with discrete state space and discrete time, and
introduce Probabilistic Computation Tree Logic (PCTL), a logic that can be
used to reason about probabilistic reachability and rewards in these
processes.

Throughout this paper, we use the notation
\(\override{f}{x}{y}\), where \(f\) is a function,  to express \emph{updating \(f\) at \(x\) by \(y\)}. 
That is, the function that coincides with \(f\), except for \(x\), where it 
takes the value \(y\).

\subsection{Discrete-Time Markov Chains}

DTMCs can be used to model systems where the
evolution of the system at any moment in time can be represented by
a discrete probabilistic choice over several outcomes.

\begin{definition}
A discrete-time Markov chain $D$ is a tuple $(Q, \init, \prmatrix, L)$
where $Q$ is a finite set of states. $\init$ is the initial state,
and $L : Q \to \mathbb{P}(\labels)$ is a labelling function that assigns 
properties of interest from a set of labels $\labels$ to states.
$\prmatrix : Q \times Q \to [0, 1]$ is the \emph{transition probability matrix}
subject to $\sum_{\state^\prime \in Q} \prmatrix(\state, \state^\prime) = 1$
for all $\state \in Q$,
where $\prmatrix(\state, \state^\prime)$ gives the probability of
transitioning from $\state$ to $\state^\prime$.
We say that there is a transition between two states $\state, \state^\prime \in Q$
if $\prmatrix(\state, \state^\prime) > 0$.
\end{definition}
Intuitively, a DTMC is a state transition system where
transitions between states
are labelled with probabilities greater than $0$ 
and where states are labelled with properties of interest.
An execution \emph{path} $\path$ of a DTMC $D = (Q, \init, \prmatrix, L)$
is a non-empty finite, or infinite, sequence
$\state_0 \state_1 \state_2 \cdots$
where $\state_i \in Q$ and $\prmatrix(\state_i, \state_{i+1}) > 0$ for $i \geqslant 0$.
We denote 
the set of all paths starting in state $\state$ by $\paths^D(\state)$, and the
set of all finite paths starting in $\state$ by $\paths_f^D(\state)$. For paths
where the first state along that path is the initial state $\init$ we will
simply use $\paths^D$ and $\paths_f^D$, and we will simply use $\paths$ and
$\paths_f$ if $D$ is
clear from the context. For a finite
path $\path_f \in \paths_f$ the \emph{cylinder set} of $\path_f$
is the set of all infinite paths in $\paths$ that share
prefix $\path_f$.
The probability of taking a finite
path $\state_0 \state_1 \cdots \state_n \in \paths_f$
is given by $\prod_{i=1}^n \prmatrix(\state_{i-1}, \state_i)$.
This measure over finite paths can be extended to a probability measure
$\pmeasure$ over the set of infinite paths $\paths$, where
the smallest $\sigma$-algebra over $\paths$ is the smallest
set containing all cylinder sets for paths in $\paths_f$. For a detailed
description of the construction of the probability measure we refer the
reader to~\cite{kemeny2012denumerable}.

\subsection{Probabilistic Computation Tree Logic}

\emph{Probabilistic Computation Tree Logic~\cite{hansson1994logic}}
(PCTL) is a probabilistic extension of the temporal logic CTL. Properties for
DTMCs can be formulated in PCTL and then checked against the DTMCs using
\emph{model checking}.

\begin{definition}
The syntax of PCTL is given by:
\begin{align*}
	\Phi & = p \mid \lnot \Phi \mid \Phi \land \Phi \mid \pctlp_{\bowtie \lambda}[\Psi] \\
	\Psi & = \pctlnext \ \Phi \mid \Phi \ \pctluntil^{\leqslant k} \ \Phi
\end{align*}
where $p$ is an atomic proposition,
$\bowtie \ \in \{<,\leqslant,\geqslant,>\}$, $\lambda \in [0, 1]$,
and $k \in \mathbb{N} \cup \{\infty\}$.
\end{definition}
Formulas denoted by $\Phi$ are
\emph{state formulas} and formulas denoted by $\Psi$ are \emph{path formulas}.
A PCTL formula is always a state formula, and a path formula can only occur
inside the $\pctlp$ operator. We now give the semantics of PCTL over a DTMC.

\begin{definition}
Given a DTMC $D = (Q, \init, \prmatrix, L)$, we inductively define the
satisfaction relation $\models$ for any state $\state \in Q$ as follows:
\begin{align*}
\state & \models p &
	\qquad \Leftrightarrow \qquad &
		p \in L(\state) \\
\state & \models \lnot \Phi &
	\qquad \Leftrightarrow \qquad &
		\state \not \models \Phi \\
\state & \models \Phi \land \Phi^\prime &
	\qquad \Leftrightarrow \qquad &
		\state \models \Phi \text{ and } \state \models \Phi^\prime \\
\state & \models \pctlp_{\bowtie \lambda}[\Psi] &
	\qquad \Leftrightarrow \qquad &
		\pmeasure\{\path \in \paths(\state) \mid \path \models \Psi\}
			\bowtie \lambda \\
\intertext{where $v \in \labels$, and for any path
$\path = \init \state_1 \state_2 \cdots$ of  $D$  as follows:}	
\path & \models \pctlnext \ \Phi &
	\qquad \Leftrightarrow \qquad &
		\state_1 \models \Phi \\
\path & \models \Phi \ \pctluntil^{\leqslant k} \ \Phi^\prime &
	\qquad \Leftrightarrow \qquad &
		\exists i \in \nat (i \leqslant k \text{ and } \state_i \models \Phi^\prime
			\text{ and } \forall j < i. \state_j \models \Phi).
\end{align*}
\end{definition}
\noindent{Disjunction}, $\ltrue$, $\lfalse$,  and implication are derived as usual, and we define
eventuality as
$\pctlsometime^{\leqslant k} \ \Phi \equiv \ltrue \ \pctluntil^{\leqslant k} \ \Phi$.
We simply use $\pctlsometime \ \Phi$ and $\Phi\ \pctluntil \ \Phi^\prime$ when $k = \infty$.

\section{Concrete Model of a Network of Pulse-Coupled Oscillators}
\label{sec:concrete}

In this section we give a brief introduction to the formal model of a single pulse-coupled oscillator,
as originally presented in previous work~\cite{gainer2017investigating}. Subsequently,
we encode fully-coupled networks of such oscillators  as discrete
time Markov chains.

\subsection{Pulse-Coupled Oscillator Model}
\label{sec:oscillator_model}
We consider a fully-coupled network of  pulse-coupled oscillators
with identical dynamics over discrete time.
The \emph{phase} of an oscillator $u$ 
at time $t$ is denoted by $\phase_u(t)$.
The phase of an oscillator progresses through a sequence of discrete
integer values bounded by some $T \ge 1$.
The phase progression over time of a single uncoupled oscillator
is determined by the successor function, where the phase
increases over time until it equals $T$, at which point the
oscillator will fire in the next moment in time and the phase
will reset to one. The phase progression of an uncoupled
oscillator is therefore cyclic with period $T$, and we refer to
one cycle as an \emph{oscillation cycle}.

When an oscillator fires, it may happen that its firing is not perceived by
any of the other oscillators coupled to it. We call this a
\emph{broadcast failure} and denote its probability by $\mu\in [0,1]$.
Note that $\mu$ is a global parameter, hence the chance of broadcast
failure is identical for all oscillators.
When an oscillator fires, and a broadcast failure does not occur,
it perturbs the phase of all oscillators to which it is coupled;
we use $\alpha_u(t)$ to denote the number of all other
oscillators 
that are coupled to $u$ and will fire
at time $t$.
\begin{definition}
The \emph{phase response function} is a positive increasing function
$\pert : \{1, \ldots, T\} \times \mathbb{N} \times \mathbb{R}^+
\to \mathbb{N}$
that maps the phase of an oscillator $u$,
the number of other oscillators perceived to be firing by $u$,
and a real value defining the strength of the coupling between oscillators, to an integer value corresponding to the perturbation to phase
induced by the firing of oscillators where broadcast failures did
not occur. We require \(\pert(\Phase, 0, \epsilon)=0\) for all possible
phase response functions, that is, oscillators are only perturbed if they
perceive at least one firing oscillator.
\label{def:pert}
\end{definition}

We can introduce a \emph{refractory period} into the oscillation cycle
of each oscillator. 
A refractory period is an interval of discrete values
$[1, R] \subseteq [1, T]$ where $ R \le T$ is the size of the
refractory period, such that if $\phase_u(t)$ is inside the interval, for some
oscillator $u$ at time $t$, then $u$ cannot be perturbed by other
oscillators to which it is coupled.
If $R = 0$ then we set $[1, R] = \emptyset$, and there is no refractory period at all.
\begin{definition}
The \emph{refractory function}
$\refr : \{1, \ldots, T\} \times \mathbb{N} \to \mathbb{N}$
is defined as $\refr(\Phase, \delta) = \Phase$ if $\Phase \in [1, R]$,
or $\refr(\Phase,\delta) = \Phase + \delta$ otherwise, and
takes as parameters $\delta$, the degree of perturbance to the phase of
an oscillator, and $\Phase$, the phase,
and returns \(\Phase\) if it is in the refractory period, 
or $\Phase + \delta$ otherwise.
\label{def:ref}
\end{definition}

The phase evolution of an oscillator $u$ over time is then defined
as follows, where the \emph{update function}
and \emph{firing predicate}, respectively denote the updated
phase of oscillator $u$ at time $t$ in the
next moment in time, and the firing of oscillator $u$ at time $t$,
\begin{align*}
	\update_u(t) & = 1 + \refr(\phase_u(t), \pert(\phi_u(t), \alpha_u(t), \epsilon)), \\
	\fire_u(t) & = \update_u(t) > T, \\
	\phase_u(t+1) & = 	
		\begin{cases}
			1 & \text{if } \fire_u(t) \\
			\update_u(t) & \text{otherwise}.
		\end{cases}
\end{align*}


\subsection{Modelling the Network Using a DTMC}
\label{sec:concrete_absorbtion}
%
We  model the whole network of oscillators
as a single DTMC \(D = (Q, \concstate_0, \prmatrix, L)\), where each state \(\concstate \in Q\)  denotes a
global state of the network. 
More precisely, the labelling function uniquely  maps each state \(\concstate\) 
 to an combined   encoding of 
 the individual  state of each oscillator. For simplicity, we identify the label of a state
with the state itself, and hence we omit \(L\) from the DTMC, but describe each member
of \(Q\) via its internal state.   

We model each
transition of an oscillator as a single transition within the DTMC. However, since 
the oscillators may influence each other within a single time step (that is,
when they are firing), we cannot simply allow for arbitrary sequences of transitions.
For instance,   to model that all the oscillators progress on a similar time-scale,
we need to prevent a single oscillator from taking a transition and thus progressing
its phase without giving the other oscillators a chance to do the same.
We achieve this by the following means: 
\begin{itemize}
\item we divide the internal computation of each oscillator
into two modes: \emph{start} and \emph{update}, and
\item we add a counter to the model, containing the number of oscillators that fire.
\end{itemize}
The counter also possesses both modes, and  resets at the start of each ``round'' of computation. 
First, in the \emph{start} mode, each oscillator checks whether it would fire, 
according to its phase response function and the current number of oscillators that
already fired, as given by the counter. If it does, it
increases the counter and updates its mode to \emph{update}, otherwise it just
updates its mode. If all oscillators are in the update mode, they compute their 
new phases in a single step, according to the phase response function and the current
state of the environment counter.  
Furthermore, we impose
 an order on the evaluation 
on the oscillators in the start mode if at least one oscillator fires, starting from the
highest phase to the lowest. This ensures that firing  oscillators are perceived by 
the other nodes, and thus may lead to the firing of the latter. This way
of modelling the nodes implies the assumption that the time window during which each oscillator 
listens on the 
shared medium is long enough to perceive the firing of any other oscillator. 

The general idea of the progress of the network of oscillators is visualised
in Fig.~\ref{fig:concrete_example}. In the figure, each rounded rectangle shows a state of a network
of four oscillators. The circles represents the nodes, where we inscribe its current phase and
an abbreviation of its mode.
A node that is about to fire is indicated by a starred circle, while a shaded circle
indicates a node that is within the refractory period.
The rectangle denotes the environment counter, with its corresponding value
and mode. The phase response function is arbitrarily chosen, and of minor importance
for the example.

In the first state, all outgoing transitions only check whether
to increase the counter. Since no oscillator is in the firing phase, all oscillators
just update their mode (observe that the single arrow actually denotes four transitions).
In the next step, all oscillators increase their phase by one, and reset their mode 
to \emph{start}. In the next four transitions, oscillator \(2\) fires and increases
the counter, which in turn is sufficient for oscillator \(3\) to fire as well. Hence
they both increased the counter by one, while oscillators \(1\) and \(4\) did not. 
During the last transition of the example, oscillator \(2\) and \(3\) reset their
phase to one, while oscillator \(1\) is perturbed and increases its phase by
two steps at once.  
Oscillator
\(4\)  is within its refractory period, which means that it is not perturbed, and simply 
increments its phase. In addition to these transitions, we also need some bookkeeping
transitions, to ensure that the counter is reset before the oscillators check their phase
response. Furthermore, observe that in the example, it is crucial that oscillator \(3\) checks
its response after oscillator \(2\) increased the counter, since otherwise \(3\) would not have 
been perturbed to fire.   

 Formally, we conflate the states of the oscillators
and the environment into a single state of the DTMC.
Each oscillator 
can be described by a tuple consisting of the current \emph{phase} \(\Phase\) of the 
oscillator and the \emph{mode} \(\location\) within this phase. The phase ranges
from \(1\) to \(T\), while the mode takes values from \(\{\start, \update\}\). 
Furthermore, we use a single counter to keep track of the number of oscillators that
fired successfully within a single phase computation. 

For a fixed sequence of \(N\) oscillators, a state of the concrete model
consists of a function $\netstate$ that associates a phase and mode 
with each oscillator,
\begin{align*}
  \netstate \colon \{1,\dots,N\} \to (\{1, \dots, T\}\times\{\start, \update\}),
\end{align*}
and the state of the environment $\envstate$ that counts the number of
oscillators that fired,
\begin{align*}
  \envstate \in \{\start, \update\}\times\{0,\dots, N\}.
\end{align*}
A state is therefore a tuple \(\concstate = (\envstate, \netstate)\), where
\(\envstate\) is the state of the environment, and \(\netstate\) is the state of
the network.
We denote the set of all concrete
system states by \(\concstates\).
For simplicity, we use the notation \(\projSing{\phase}\) (\(\projSing{\location}\), respectively) for
the corresponding projection function of the network states, i.e., if \(\netstate(u) = (\Phase_u, \location_u)\), then
\(\proj{\phase}{\netstate(u)} = \Phase_u\) and \(\proj{\location}{\netstate(u)} = \location_u\).
Similarly, for an environment state \(\envstate = (\location, c)\), we will refer to \(\location\)
by \(\proj{\location}{\envstate} \) and to \(c\) by \(\proj{c}{\envstate}\). 
We use the notation
\(\initloc{\concstate}{\Phase} = \{u \mid \proj{\location}{\netstate(u)} = \start \land \proj{\phase}{\netstate(u)} = \Phase\}\) for the set of all oscillators sharing phase
$\Phase$ and mode $\start$ in the state \( \concstate = (\envstate, \netstate)\).
Furthermore, we simply use the notation \(\initloc{\concstate}{} = \{u \mid \proj{\location}{\netstate(u)} = \start\}\).

\tikzset{
  osc/.style={ circle, thick, draw},
  ref/.style={draw=lightgreyblue,  fill=lightgreyblue!35},
  fire/.style={draw=lightgreyred},
  firestar/.style={fill=\firenodecolour, star, star points=5, star point ratio=2.2, scale=1.15, opacity=\fireandrefracopacity, rotate=0},  
  counter/.style={draw=black!35, rectangle, thick, rounded corners=0pt}
}

\begin{figure}
  \centering  
  \begin{tikzpicture}[align=center, >=stealth, remember picture]
    \node (state1) { 
      \begin{tikzpicture}[framed, rounded corners=2pt]
        
    \node[osc] (1) at (0,0) {3,s};
    \node[osc] (2) at (2,0) {8,s};  
    \node[osc] (3) at (4,0) {5,s};  
    \node[osc,ref] (4) at (6,0) {1,s};  
    \node[counter] (count) at (8,0) {0,u};  
      \end{tikzpicture}
    };
    \node[above =.25cm of 1] (n1) {1};
    \node[above =.25cm of 2] (n2) {2};
    \node[above =.25cm of 3] (n3) {3};
    \node[above =.25cm of 4] (n4) {4};

    \node[below=of state1] (state2) { 
      \begin{tikzpicture}[framed, rounded corners=2pt]
    \node[osc] (1) at (0,0) {3,u};
    \node[osc] (2) at (2,0) {8,u};  
    \node[osc] (3) at (4,0) {5,u};  
    \node[osc,ref] (4)  at (6,0){1,u};  
    \node[counter] (count) at (8,-.250)  {0,u};  
      \end{tikzpicture}
};
    \node[below=of state2] (state3) { 
      \begin{tikzpicture}[framed, rounded corners=2pt]
    \node[osc] (1) at (0,0) {4,s};
    \node[osc,fire] (2) at (2,0) {9,s}; 
    \node[firestar] (f) at (2,0) {};
    \node[osc] (3) at (4,0) {6,s};  
    \node[osc,ref] (4) at (6,0) {2,s};  
    \node[counter] (count) at (8,-.250) {0,s};  
      \end{tikzpicture}
};

    \node[below=of state3] (state4) { 
      \begin{tikzpicture}[framed, rounded corners=2pt]
    \node[osc] (1) at (0,0) {4,s};
    \node[osc,fire] (2) at (2,0) {9,s}; 
    \node[firestar] (f) at (2,0) {};    
    \node[osc] (3) at (4,0) {6,s};  
    \node[osc,ref] (4) at (6,0) {2,s};  
    \node[counter] (count) at (8,-.250) {0,u};  
      \end{tikzpicture}
};

    \node[below=of state4] (state5) { 
      \begin{tikzpicture}[framed, rounded corners=2pt]
    \node[osc] (1) at (0,0) {4,u};
    \node[osc,fire] (2) at (2,0) {9,u}; 
    \node[firestar] (f) at (2,0) {};    
    \node[osc] (3) at (4,0) {6,u};  
    \node[osc,ref] (4) at (6,0) {2,u};  
    \node[counter] (count) at (8,-.250) {2,u};  
      \end{tikzpicture}
};

    \node[below=of state5] (state6) { 
      \begin{tikzpicture}[framed, rounded corners=2pt]
    \node[osc] (1) at (0,0) {6,s};
    \node[osc,ref] (2) at (2,0) {1,s};  
    \node[osc,ref] (3) at (4,0) {1,s};  
    \node[osc] (4) at (6,0) {3,s};  
    \node[counter] (count) at (8,-.250){2,s};  
      \end{tikzpicture}
};

\node(int1) at ($(state1)!.5!(state2)$) {\dots};
\node[right =.25cm of int1] { check if osc. fire};

\node(int2) at ($(state4)!.5!(state5)$) {\dots};
\node[right =.25cm of int2] { osc. 2 fires, perturbs 3 to fire};

\draw[-] (state1) to  (int1);
\draw[->] (int1) to (state2);
\draw[->] (state2) to node[right] {update phases } (state3);
\draw[->] (state3) to node[right] {reset counter } (state4);
\draw[-] (state4) to  (int2);
\draw[->] (int2) to  (state5);
\draw[->] (state5) to node[right] {update phases  } (state6);

  \end{tikzpicture}
  \caption{Transitions in the Concrete Oscillator Model (\(N=4, T = 9, R = 2\))}
  \label{fig:concrete_example}
\end{figure}
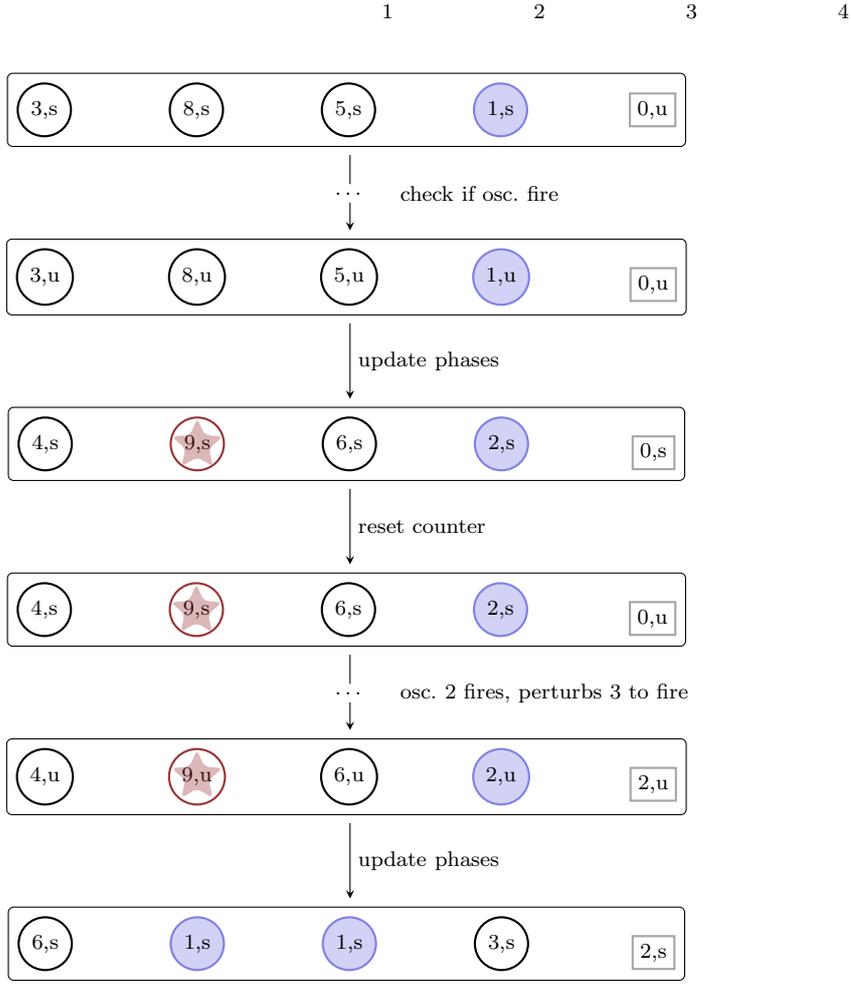

We now define the transition probabilities between states. To do this we first distinguish the following cases:
\begin{enumerate}
\item the environment resets its counter; \label{it:env_reset_absorb}
\item no oscillator has a clock value of \(T\); \label{it:no_fire_absorb}
\item an oscillator is in the mode \(\start\), has a clock value lower than \(T\), is perturbed, but not enough to fire; \label{it:in_cycle_no_flash_absorb}
\item an oscillator is in the mode \(\start\), has a clock value lower than \(T\) and is perturbed enough to fire; \label{it:in_cycle_flash_absorb}  
\item an oscillator is in the mode \(\start\), has a clock value of \(T\), and broadcasts its pulse; \label{it:end_cycle_bc_absorb}
\item an oscillator is in the mode \(\start\), has a clock value of \(T\), and fails to broadcast its pulse; \label{it:end_cycle_bc_fail_absorb}
\item all oscillators are in the mode \(\update\), update their clock and reset their state to \(\start\). \label{it:update_absorb}
\end{enumerate}
We will impose an order on certain transitions for two reasons. Firstly, we will restrict
transitions that are only used for bookkeeping purposes. For example, we will require that the reset transition
of the environment is taken before 
any of the transitions for the oscillators within a phase 
are activated. In particular, this means that each computation starts with a transition of the type \ref{it:env_reset_absorb}.
 Secondly, we need
to ensure that, if at least one oscillator fires, the phase response of all oscillators is evaluated starting with
oscillators in the highest phase, down to the lowest phase, as described above.
The cases stated above are reflected in the following definitions for the transition probability between 
two states \(\concstate = (\envstate, \netstate)\) and \(\concstate^\prime = (\envstate^\prime, \netstate^\prime)\).

Case~\ref{it:env_reset_absorb}, where the environment resetting its counter is treated as follows. In the
precondition, we require that the mode of the counter is \emph{start}, and the state of the oscillators
 does not change from \(\concstate\) to \(\concstate^\prime\). Furthermore, the mode of the counter 
changes to \emph{update} in \(\concstate^\prime\), and its value is set to \(0\). Since this 
transition is mandatory at the beginning of each round, its probability is \(1\). 
\begin{align}
\text{If } &  \proj{\location}{\envstate} = \start \land \proj{\location}{\envstate^\prime} = \update \land \proj{c}{\envstate^\prime} = 0 \land  \forall u \colon  \netstate(u) = \netstate^\prime(u),   \label{trans:env_reset}\\
\text{then }& \prmatrix(\concstate, \concstate^\prime) = 1 \nonumber.
\end{align}

Now we turn to the cases \ref{it:end_cycle_bc_absorb} and \ref{it:end_cycle_bc_fail_absorb} where some oscillator is at the end of its cycle. The preconditions of both cases are similar: the counter is required to be in the
\emph{update} mode, and there is an oscillator \(w\), whose phase is \(T\) and mode is \emph{start}. Furthermore,
in \(\concstate^\prime\), the mode of \(w\) is \emph{update}, and the state of all other oscillators does not
change. The difference between the cases is whether the counter is increased, that is, whether the 
oscillator manages to broadcast its signal. The probability of succeeding is \(\frac{1 - \failureprob}{|\initloc{\concstate}{T}|}\), since there may be more than one oscillator in phase \(T\) at state \(\concstate\). Hence
we have to normalise the tranistion probability accordingly. Similarly, the probability of failing
to fire is \(\frac{ \failureprob}{|\initloc{\concstate}{T}|}\).
\begin{align}
\text{If }&\proj{\location}{\envstate} = \update \text{ and there is a } w \text{ s.t.} \label{trans:fire_add}\\
& \proj{\location}{\netstate(w)} = \start \land \proj{\phase}{\netstate(w)} = T \land   \proj{\location}{\netstate^\prime(w)} = \update \nonumber\\
  & \land \proj{\phase}{\netstate(w)} = \proj{\phase}{\netstate^\prime(w)} \land \forall u \colon u \neq w \implies  \netstate(u) = \netstate^\prime(u)     \nonumber \\
& \land \proj{c}{\envstate^\prime} = \proj{c}{\envstate}+1  \nonumber\\
\text{then }& \prmatrix(\concstate, \concstate^\prime) = \frac{1 - \failureprob}{|\initloc{\concstate}{T}|}\nonumber.\\
\text{If }&\proj{\location}{\envstate} = \update \text{ and there is a } w \text{ s.t.} \label{trans:fire_failure_add} \\
& \proj{\location}{\netstate(w)} = \start \land \proj{\phase}{\netstate(w)} = T \land   \proj{\location}{\netstate^\prime(w)} = \update \nonumber\\
  & \land \proj{\phase}{\netstate(w)} = \proj{\phase}{\netstate^\prime(w)} \land \forall u \colon u \neq w \implies  \netstate(u) = \netstate^\prime(u)     \nonumber \\
& \land \proj{c}{\envstate^\prime} = \proj{c}{\envstate}  \nonumber\\
\text{then }& \prmatrix(\concstate, \concstate^\prime) = \frac{\failureprob}{|\initloc{\concstate}{T}|}\nonumber. 
\end{align}

If no oscillator is at the end of its cycle, that is, in case \ref{it:no_fire_absorb}, we define the probability of one oscillator updating
its mode as follows. Observe that we have to normalise the transition probability by the number of
all oscillators that have not transitioned to their update mode yet. This is correct, since 
no oscillator fires, which also means that no oscillator can be activated beyond the maximum phase. This 
implies in particular that the order of oscillator transitions does not matter in this round.
\begin{align}
\text{If }&\proj{\location}{\envstate} = \update \text{ and there is a } w \text{ s.t.} \label{trans:nothing_fires_add}\\
& \proj{\location}{\netstate(w)} = \start \land  \proj{\location}{\netstate^\prime(w)} = \update \land \proj{\phase}{\netstate(w)} = \proj{\phase}{\netstate^\prime(w)} \nonumber\\
& \land \forall u \colon \proj{\phase}{\netstate(u)} < T  \land \forall u \colon u \neq w \implies  \netstate(u) = \netstate^\prime(u)  \land \envstate = \envstate^\prime   \nonumber \\
\text{then }& \prmatrix(\concstate, \concstate^\prime) = \frac{1}{|\initloc{\concstate}{}|}\nonumber.
\end{align}

Now we will consider the cases \ref{it:in_cycle_no_flash_absorb} and \ref{it:in_cycle_flash_absorb}, 
where some oscillator already fired (i.e., \(\proj{c}{\envstate} > 0\)), and other oscillators
are perturbed. We distinguish between two cases: either an oscillator is sufficiently perturbed to also fire
or the perturbation does not cause the phase to exceed the firing threshold. One complication arises in these cases:
we have to ensure that we only allow the oscillators to update their
mode once all oscillators with a higher phase have been considered. Since the perturbation
function is increasing, a higher phase may result in a higher perturbation. That is, 
oscillators with a higher phase need to be perturbed by fewer firing oscillators before their
phase is increased beyond the threshold and they in turn fire. Hence, if we
did not enforce such an order, oscillators with a lower phases might not be perturbed when oscillators
with a higher phase fire.
Again, observe that we normalise the transition probabilities according to the number of 
oscillators satisfying similar conditions. That is, this time we need to normalise 
on the number of oscillators with the same phase in the \emph{start} mode.

\begin{align}
\text{If }&\proj{\location}{\envstate} = \update \text{ and there is a } w \text{ s.t.} \label{trans:perturbed_but_not_fire_add}\\
& \proj{\location}{\netstate(w)} = \start \land  \proj{\location}{\netstate^\prime(w)} = \update \land \proj{\phase}{\netstate(w)} = \proj{\phase}{\netstate^\prime(w)} \nonumber\\
& \land \proj{\phase}{\netstate(w)} < T \land \exists u \colon \proj{\phase}{\netstate(u)} = T   \nonumber\\
&\land \forall u \colon u \neq w \implies (\proj{\location}{\netstate(u)} = \update \lor \proj{\phase}{\netstate(u)} \leq \proj{\phase}{\netstate(w)})\nonumber\\
& \land \proj{\phase}{\netstate(w)} + \pert( \proj{\phase}{\netstate(w)},  \proj{c}{\envstate},\epsilon) + 1 \leq T \nonumber\\
& \land \forall u \colon u \neq w \implies \netstate(u) = \netstate^\prime(u)\nonumber\\
& \land \envstate^\prime = \envstate \nonumber \\
\text{then }& \prmatrix(\concstate, \concstate^\prime) = \frac{1}{|\initloc{\concstate}{\proj{\phase}{\concstate(w)}}|}\nonumber.
\end{align}

The cases where a perturbed oscillator fires are analogous to oscillators with a maximal
phase, except for the addititional conditions that some other oscillator fired, and that
all oscillators with higher phases have already been considered.
\begin{align}
\text{If }&\proj{\location}{\envstate} = \update \text{ and there is a } w \text{ s.t.} \label{trans:perturbed_fire_failure_add}\\
& \proj{\location}{\netstate(w)} = \start \land  \proj{\location}{\netstate^\prime(w)} = \update \land \proj{\phase}{\netstate(w)} = \proj{\phase}{\netstate^\prime(w)} \nonumber\\
& \land \proj{\phase}{\netstate(w)} < T \land \exists u \colon \proj{\phase}{\netstate(u)} = T   \nonumber\\
&\land \forall u \colon u \neq w \implies (\proj{\location}{\netstate(u)} = \update \lor \proj{\phase}{\netstate(u)} \leq \proj{\phase}{\netstate(w)})\nonumber\\
& \land \proj{\phase}{\netstate(w)} + \pert( \proj{\phase}{\netstate(w)},  \proj{c}{\envstate},\epsilon) + 1 > T \nonumber\\
& \land \forall u \colon u \neq w \implies \netstate(u) = \netstate^\prime(u)\nonumber\\
& \land \envstate^\prime = \envstate \nonumber \\
\text{then }& \prmatrix(\concstate, \concstate^\prime) = \frac{\failureprob}{|\initloc{\concstate}{\proj{\phase}{\concstate(w)}}|}\nonumber 
\end{align}

\begin{align}
\text{If }&\proj{\location}{\envstate} = \update \text{ and there is a } w \text{ s.t.} \label{trans:perturbed_fire_add}\\
& \proj{\location}{\netstate(w)} = \start \land  \proj{\location}{\netstate^\prime(w)} = \update \land \proj{\phase}{\netstate(w)} = \proj{\phase}{\netstate^\prime(w)} \nonumber\\
& \land \proj{\phase}{\netstate(w)} < T \land \exists u \colon \proj{\phase}{\netstate(u)} = T   \nonumber\\
&\land \forall u \colon u \neq w \implies (\proj{\location}{\netstate(u)} = \update \lor \proj{\phase}{\netstate(u)} \leq \proj{\phase}{\netstate(w)})\nonumber\\
& \land \proj{\phase}{\netstate(w)} + \pert( \proj{\phase}{\netstate(w)},  \proj{c}{\envstate},\epsilon) + 1 > T \nonumber\\
& \land \forall u \colon u \neq w \implies \netstate(u) = \netstate^\prime(u)\nonumber\\
          & \land \proj{c}{\envstate^\prime} = \proj{c}{\envstate} +1 \nonumber \\
          & \land \proj{\location}{\envstate^\prime} = \proj{\location}{\envstate}  \nonumber \\  
\text{then }& \prmatrix(\concstate, \concstate^\prime) = \frac{1 - \failureprob}{|\initloc{\concstate}{\proj{\phase}{\concstate(w)}}|}\nonumber.
\end{align}

The final case \ref{it:update_absorb}, where all oscillators update their clock values simultaneously, is given by the following equation. It requires that all oscillators have finished their computation, whether they fire,
and both the counter and the oscillators will reset their mode to \emph{start} after the transition.
\begin{subequations}\label{trans:update_add}
\begin{align}
\text{If }& \proj{\location}{\envstate} = \update \text{ and } \proj{\location}{\envstate^\prime} = \start \text{ and }\tag{\ref{trans:update_add}}\\
& \text{for all } u \text{ we have } \proj{\location}{\netstate(u)} = \update \land  \proj{\location}{\netstate^\prime(u)} = \start  \land F_\update  \nonumber\\
\text{then } &\prmatrix(\concstate,\concstate^\prime) = 1 \nonumber.
\end{align}
The formula \(F_\update\) is an abbreviation for the conjunction of the following four conditions, which
model the update of the phases of the oscillators, according to the phase response function. Observe
that the phases of the oscillators had not been updated by the previously defined transitions. Hence, we
now update the phases of all oscillators at once.
\begin{align}
  &\forall u \colon \proj{\phase}{\netstate(u)} = T  \implies \label{trans:update_end_cycle_add}\\ 
& \hspace{2cm}\proj{\phase}{\netstate^\prime(u)} =1  \nonumber  \\
 & \forall u \colon \proj{\phase}{\netstate(u)} < T \land \proj{\phase}{\netstate(u)} \leq R \implies \label{trans:update_in_cycle_no_pulse_add} \\
& \hspace{2cm} \proj{\phase}{\netstate^\prime(u)} = \proj{\phase}{\netstate(u)} + 1  \nonumber\\
 & \forall u \colon \proj{\phase}{\netstate(u)} < T \land \proj{\phase}{\netstate(u)} > R \;\land      \label{trans:update_in_cycle_pulse_stays_in_cycle_add}\\
 & \hspace{1cm}\proj{\phase}{\netstate(u)} + \pert(\proj{\phase}{\netstate(u)},  \proj{c}{\envstate},\epsilon)  +1 \leq T \implies \nonumber\\
 &\hspace{2cm}\proj{\phase}{\netstate^\prime(u)} = \proj{\phase}{\netstate(u)} + \pert(\proj{\phase}{\netstate(u)},  \proj{c}{\envstate},\epsilon)  +1   \nonumber\\
 & \forall u \colon \proj{\phase}{\netstate(u)} < T \land \proj{\phase}{\netstate(u)} > R \;\land  \label{trans:update_in_cycle_pulse_exceeds_cycle_add}\\
 & \hspace{1cm}\proj{\phase}{\netstate(u)} + \pert(\proj{\phase}{\netstate(u)}, \proj{c}{\envstate}, \epsilon)  +1 > T \implies \nonumber\\ 
 & \hspace{2cm}\proj{\phase}{\netstate^\prime(u)} = 1  \nonumber.
\end{align}
\end{subequations}

In this formula, (\ref{trans:update_end_cycle_add}) handles the simple case of firing oscillators,   while (\ref{trans:update_in_cycle_no_pulse_add}) 
defines the behaviour of oscillators within their refractory period. The formulas (\ref{trans:update_in_cycle_pulse_stays_in_cycle_add}) and
(\ref{trans:update_in_cycle_pulse_exceeds_cycle_add}) reflect the two cases where oscillators are perturbed, either not exceeding their
oscillation cycle, or firing, respectively.  

With this model, we could begin to analyse the synchronisation behaviour with respect
to different phase response functions or broadcast failure probabilities. However, 
the state space of the model increases exponentially with the number of oscillators,
which makes an analysis beyond small numbers of infeasible. To overcome
this restriction, we increase the level of abstraction as presented in the next section.
\section{Population Model}
\label{sec:population}

In this section, we define a \emph{population model}
of a network of pulse-coupled oscillators for parameters as defined in Sect.~\ref{sec:oscillator_model} as $\psystemfull$.
Oscillators in our model have
identical dynamics, and two oscillators are indistinguishable if they
share the same phase. That is, we can reason about groups of
oscillators, instead of individuals.
We therefore encode the global state of the model
as a tuple $\langle k_1, \ldots, k_T \rangle$
where each $k_\Phase$ is the number of oscillators sharing a
phase value of $\Phase$.
The population model does not account for the introduction of additional
oscillators to a network, or the loss of existing
coupled oscillators. That is, the population $N$ remains constant. 
\begin{definition}
A global state of a population model $\psystemfull$ is a $T$-tuple
$\state \in \{0, \ldots, N\}^T$, where
$\state = \langle k_1, \ldots, k_T\rangle$
and $\sum_{\Phase = 1}^T k_\Phase = N$.
The set of all global states of $\psystem$ is
$\states(\psystem)$, or simply \(\states\) when \(\psystem\) is clear from the context.
\label{def:globalstates}
\end{definition}

\begin{example}
\label{ex:globalstates}
Figure~\ref{fig:example1} shows four global states for an instantiated
population model of $N = 8$ oscillators with $T = 10$ discrete values for
their phase and a refractory period of length $R = 2$. We assume that
the phase response function is linear, that is, \(\pert(\Phase, \alpha, \epsilon) = [\Phase \cdot \alpha \cdot \epsilon]\),
where \([\cdot]\) denotes rounding to the closest integer. Furthermore, let \(\epsilon = 0.115\). 
For example \mbox{$\state_0 = \langle 0, 0, 2, 1, 0, 0, 5, 0, 0, 0 \rangle$}
is the global state where two oscillators have a phase of three,
one oscillator has a phase of four, and five oscillators have a phase of seven.
The starred node indicates the number of oscillators with phase 
ten that will fire in the next moment in time, while the shaded nodes
indicate oscillators with phases that lie within the refractory period
(one and two).
If no oscillators have some phase $\Phase$ then we omit the $0$ in the
corresponding node.
Observe that, while going from \(\state_{i-1}\) to \(\state_{i}\) (\(1 \le i \le 3\)), the oscillator phases increase by one.
In the next section, we will explain how transitions between these
global states are made.
Note that directional arrows indicate cyclic direction, and do not represent transitions.
\end{example}
\begin{figure}[tb]
\begin{center}
\begin{tabular}{@{}c@{}c@{}c@{}c@{}}
\begin{tikzpicture}
	\runningexampleparams
	\def \statel {\state_0}	
	\dnac{1}{0};\dnac{2}{0};\dnac{3}{2};\dnac{4}{1};\dnac{5}{0};
	\dnac{6}{0};\dnac{7}{5};\dnac{8}{0};\dnac{9}{0};\dnac{10}{0};
\end{tikzpicture} &
\begin{tikzpicture}
	\runningexampleparams
	\def \statel {\state_1}	
	\dnac{1}{0};\dnac{2}{0};\dnac{3}{0};\dnac{4}{2};\dnac{5}{1};
	\dnac{6}{0};\dnac{7}{0};\dnac{8}{5};\dnac{9}{0};\dnac{10}{0};
	\dta{7}{1}{0.8cm}{0.8};	
	\dta{4}{1}{0.6cm}{0.6};	
	\dta{3}{1}{0.8cm}{0.8};	
\end{tikzpicture} &
\begin{tikzpicture}
	\runningexampleparams
	\def \statel {\state_2}	
	\dnac{1}{0};\dnac{2}{0};\dnac{3}{0};\dnac{4}{0};\dnac{5}{2};
	\dnac{6}{1};\dnac{7}{0};\dnac{8}{0};\dnac{9}{5};\dnac{10}{0};
	\dta{8}{1}{0.8cm}{0.8};	
	\dta{5}{1}{0.6cm}{0.6};	
	\dta{4}{1}{0.8cm}{0.8};	
\end{tikzpicture} &
\begin{tikzpicture}
	\runningexampleparams
	\def \statel {\state_3}	
	\dnac{1}{0};\dnac{2}{0};\dnac{3}{0};\dnac{4}{0};\dnac{5}{0};
	\dnac{6}{2};\dnac{7}{1};\dnac{8}{0};\dnac{9}{0};\dnac{10}{5};
	\dta{9}{1}{0.8cm}{0.8};	
	\dta{6}{1}{0.6cm}{0.6};	
	\dta{5}{1}{0.8cm}{0.8};	
\end{tikzpicture} \\
\end{tabular}
\caption{Evolution of the global state over four discrete time steps.}
\label{fig:example1}
\end{center}
\end{figure}
With every state $\state \in \states$ we associate a non-empty set
of \emph{failure vectors}, where each failure vector is a tuple
of broadcast failures that could occur in $\state$.
\begin{definition}
A \emph{failure vector} is a $T$-tuple $\fvec = \langle f_1, \dots, f_T\rangle \in (\{0, \ldots, N\} \cup \{\star\})^T$, where 
 \(f_i = \star\) implies \(f_j =\star\) for all \(1 \leq j \leq i\).
We denote the set of all possible failure vectors by $\failvecs$.
\label{def:failvec}
\end{definition}
Given a failure vector $\fvec = \langle f_1, \ldots, f_T \rangle$,
$f_\Phase \in \{0, \ldots, N\}$
indicates
the number of broadcast failures that occur for all
oscillators with a phase of $\Phase$.
If $f_\Phase = \star$ then
no oscillators with a phase of $\Phase$ fire.
Semantically,
$f_\Phase = 0$ and $f_\Phase = \star$ differ in that
the former indicates that all (if any) oscillators with
phase $\Phi$ fire and no broadcast failures occur,
while the latter indicates that all (if any) oscillators
with a phase of $\Phase$ do not fire.
If no oscillators fire at all in a global state then we
have only one possible failure vector, namely $\{\star\}^T$.

\subsection{Transitions}
In Section~\ref{subsec:fveccalc} we will describe how we can calculate
the set of all possible failure vectors for a
global state, and thereby identify all of its successor states.
However we must first show how we can calculate the single
successor state of a global state $\state$,
given some failure vector $\fvec$.

\paragraph{Absorptions.}
For real deployments of synchronisation protocols it is often the case
that the duration of a single oscillation cycle will be at least
several seconds \cite{christensen2009fireflies,perez2016emergence}.
The perturbation induced by the firing of a group of oscillators
may lead to groups of other oscillators to which they are coupled firing
in turn. The firing of these other oscillators may then
cause further oscillators to fire, and so forth, leading
to a ``chain reaction'', where each group of oscillators triggered
to fire is \emph{absorbed} by the initial group of firing oscillators.
Since the whole chain reaction of absorptions may occur within just a
few milliseconds,
and in our model the oscillation cycle is a sequence of
discrete states, when a chain reaction occurs the phases of all
perturbed oscillators are updated at one single time step.

Since we are considering a fully connected network of oscillators,
two oscillators sharing the same phase 
will have their phase updated to the same value in the next
time step. They will always perceive the same number of other
oscillators firing.
Therefore, for each phase $\Phase$ we define the function
$\alpha^\Phase\colon \states \times \failvecs\to \{0,\dots,N\}$, 
where $\alpha^\Phase(\state, \fvec)$ is the number of 
oscillators with a phase greater than $\Phase$ perceived to be firing by
oscillators with phase $\Phase$, in some global state, incorporating
the broadcast failures defined in the failure vector $\fvec$.
This allows us to encode the aforementioned chain reactions of
firing oscillators. Note that our encoding of chain reactions results in a
global semantics that differs from typical parallelisation operations, 
for example, the construction of the cross product of the individual oscillators.
Observe that, in the concrete model of Sect.~\ref{sec:concrete_absorbtion}, we modelled
such a behaviour by case \ref{it:in_cycle_flash_absorb}.

Given a global state $\state = \langle k_1, \ldots, k_T \rangle$
and a failure vector $\fvec = \langle f_1, \ldots, f_T \rangle$,
the following mutually recursive definitions show how we
calculate the values $\alpha^1(\state, \fvec),\ldots,\alpha^T(\state, \fvec)$, and how 
functions introduced in Sect.~\ref{sec:oscillator_model}
are modified
to indicate the update in phase, and firing, of \emph{all} oscillators
sharing the same phase $\Phase$.
Observe that to calculate
any $\alpha^\Phase(\state, \fvec)$ we only refer to definitions for phases greater
than $\Phase$ and the base case is $\Phase = T$, that is, values are computed
from $T$ down to $1$. The function \(\refr\) is the refractory function as defined in Sect.~\ref{sec:oscillator_model}.
\begin{align}
	\update^\Phase(\state, \fvec) & = 1 + \refr(\Phase, \pert(\Phase, \alpha^\Phase(\state, \fvec), \epsilon)) \\
	\fire^\Phase(\state, \fvec) & = \update^\Phase(\state, \fvec) > T \\
	\alpha^\Phase(\state, \fvec) & =
	\begin{cases}
		0 & \text{if } \Phase{=}T\\
		\alpha^{\Phase{+}1}(\state{,}\fvec){+}k_{\Phase{+}1}{-}f_{\Phase{+}1}		
		& \text{if } \Phase{<}T, f_{\Phase{+}1}{\ne}\star \text{ and } \fire^{\Phase{+}1}(\state{,}\fvec) \\
		\alpha^{\Phase + 1}(\state, \fvec) & \text{otherwise}
	\end{cases} \label{eq:alpha}
\end{align}

\paragraph{Transition Function.}
We now define the transition function that maps phase values to their updated
values in the next time step.
Note that since we no longer 
distinguish different oscillators with the same phase we
only need to calculate a single value for their evolution and perturbation.
\begin{definition}
The \emph{phase transition function}
$\ptf : \states \times \{1, \ldots, T\} \times \failvecs \to \nat$
maps a global state $\state$, a phase $\Phase$, and some possible failure vector
$\fvec$ for $\state$,
to the updated phase  in the next discrete time step,
with respect to the broadcast
failures defined in $\fvec$, and is defined as
\begin{align}
	\ptf(\state, \Phase, \fvec) =
	\begin{cases}
		1 & \text{if } \fire^\Phase(\state, \fvec) \\
		\update^\Phase(\state, \fvec) & \text{otherwise}. \\
	\end{cases}
\end{align}
\label{def:taufailure}
\end{definition}

Let $\phiupdate_\Phase(\state, \fvec)$ be the set of phase values $\PhaseTwo$ where all
oscillators with phase $\PhaseTwo$ in $\state$ will have their phase updated to
$\Phase$ in the next time step,
with respect to the broadcast failures defined in $\fvec$. Formally, 
\begin{align}
	\phiupdate_\Phase(\state, \fvec) = \{\PhaseTwo \mid \PhaseTwo \in \{1,\ldots,T\} \land
		\ptf(\state, \PhaseTwo, \fvec) = \Phase\}.
\end{align}
We can now calculate the successor state of a global state $\state$
and define how the model evolves over time.
\begin{definition}
The \emph{successor function}
$\suc : \states \times \failvecs \to \states$ maps
a global state $\state$ and a failure vector $\fvec$ to a  state $\state^\prime$, and is defined as
$\suc(\langle k_1, \ldots, k_T \rangle, \fvec) = \langle k_1^\prime, \ldots, k_T^\prime \rangle$, 
where $k_\Phase^\prime{=}\sum_{\PhaseTwo \in \phiupdate_\Phase(\state, \fvec)} k_\PhaseTwo$ for $1 \le \Phase \le T$.
\label{def:fsucc}
\end{definition}

\begin{example}
  \label{ex:successor}
  Recall that the perturbation function of our example was
  given as \(\pert(\Phase, \alpha, \epsilon) = [\Phase \cdot \alpha \cdot \epsilon]\),
  where \([\cdot]\) denotes rounding and \(\epsilon = 0.115\).
Consider the global state $\state_2$ of Fig~\ref{fig:example2} where
no oscillators will fire since $k_{10} = 0$. We therefore have one
possible failure vector for $\state_0$, namely $\fvec = \{\star\}^{10}$.
Since no oscillators fire the dynamics of the oscillators are
determined solely by their standalone evolution, and all oscillators
simply increase their phase by $1$ in the next time step.
Now consider the global state $\state_3$ and 
$\fvec = \langle \star, \star, \star, \star, \star, \star, 1, 0, 0, 0 \rangle$,
a possible failure vector for $\state_3$, indicating that oscillators
with phases of $7$ to $10$ will fire and one broadcast failure
will occur for the single oscillator that will fire with phase $7$.
Here a chain reaction occurs as the perturbation induced by the
firing of the $5$ oscillators causes the single oscillator with a 
phase of $7$ to also fire. A broadcast failure occurs when this single
oscillator fires, and the perturbation of the $5$ firing oscillators
is insufficient to cause the $2$ oscillators with a phase of $6$ to
also fire. In the next state the oscillator with phase $7$ has been
absorbed by the group of the $5$ oscillators that had phase $10$.

More explicitly, 
since $\fire^{10}(\state_3, \fvec)$ holds we have that
$\alpha^{9}(\state_3, \fvec)$ = $\alpha^{10}(\state_3, \fvec) + k_{10} - f_{10} = 0 + 5 - 0 = 5$.
Now, since \(\pert(9, 5, 0.14) = [9 \cdot 5 \cdot 0.115] = [5.175] = 5\), we have \(\update^9(\state_3, F) = 15 > 10\), and thus, $\fire^{9}$ holds. Hence, we have that
$\alpha^{8}(\state_3, \fvec) = \alpha^{9}(\state_3, \fvec) + k_{9} - f_{9} = 0 + 5 - 0 = 5$, and similarly, due to \(\pert(8, 5, 0.115) =5\), $\fire^{8}$ holds. That is, we have that
$\alpha^{7}(\state_3, \fvec) = \alpha^{8}(\state_3, \fvec) + k_{8} - f_{8} = 0 + 5 - 0 = 5$. We then continue calculating $\alpha^{\Phase}(\state_3, \fvec)$ for
$6 \ge \Phase \ge 1$, and noting that \(\pert(6,5,0.115) = [3.45] = 3\). Hence   $\fire^{6}(\state_3, \fvec)$
does not hold, and we  obtain
$\alpha^{1}(\state_3, \fvec) = \alpha^{2}(\state_3, \fvec) =
\alpha^{3}(\state_3, \fvec) = \alpha^{4}(\state_3, \fvec) =
\alpha^{5}(\state_3, \fvec) = \alpha^{6}(\state_3, \fvec) =
\alpha^{7}(\state_3, \fvec) = 5$. We conclude that
$\phiupdate_{1}(\state_3, \fvec) = \{10, 9, 8, 7\}$,
$\phiupdate_{10}(\state_3, \fvec) = \{6, 5\}$,
$\phiupdate_{9}(\state_3, \fvec) = \{4, 3\}$,
and $\phiupdate_{\Phase}(\state_3, \fvec) = \emptyset$ for $9 > \Phase > 3$.
Since $R = 2$ we have that
$\phiupdate_3(\state_3, \fvec) = \{2\}$ and $\phiupdate_2(\state_3, \fvec) = \{1\}$.
We calculate the successor of $\state_3$ as
$\state_4 = \suc( \langle \star, \star, \star, \star, \star, \star, 1, 0, 0, 0 \rangle, \fvec) =
\langle k_{10} + k_9 + k_8 + k_7, k_1, k_2, 0, 0, 0, 0, 0, k_4 + k_3, k_6 + k_5 \rangle = \langle 6, 0, 0, 0, 0, 0, 0, 0, 0, 2 \rangle$.
\end{example}

\begin{figure}[bt]
\begin{center}
\begin{tabular}{@{}c@{}c@{}c@{}c@{}}
\begin{tikzpicture}
	\runningexampleparams
	\def \statel {\state_2}
	\dnac{1}{0};\dnac{2}{0};\dnac{3}{0};\dnac{4}{0};\dnac{5}{2};
	\dnac{6}{1};\dnac{7}{0};\dnac{8}{0};\dnac{9}{5};\dnac{10}{0};
\end{tikzpicture} &
\begin{tikzpicture}
	\runningexampleparams
	\def \statel {\state_3}	
	\dnac{1}{0};\dnac{2}{0};\dnac{3}{0};\dnac{4}{0};\dnac{5}{0};
	\dnac{6}{2};\dnac{7}{1};\dnac{8}{0};\dnac{9}{0};\dnac{10}{5};
	\dta{5}{1}{0.8cm}{1};	
	\dta{6}{1}{0.6cm}{0.8};	
	\dta{9}{1}{0.8cm}{0.6};	
\end{tikzpicture} &
\begin{tikzpicture}
	\runningexampleparams
	\def \statel {\state_4}	
	\dnac{1}{6};\dnac{2}{0};\dnac{3}{0};\dnac{4}{0};\dnac{5}{0};
	\dnac{6}{0};\dnac{7}{0};\dnac{8}{0};\dnac{9}{0};\dnac{10}{2};
	\dta{6}{4}{0.8cm}{1};
	\dta{7}{4}{0.6cm}{0.8};
	\dta{10}{1}{0.4cm}{0.6};
\end{tikzpicture} &
\begin{tikzpicture}
	\runningexampleparams
	\def \statel {\state_5}	
	\dnac{1}{2};\dnac{2}{6};\dnac{3}{0};\dnac{4}{0};\dnac{5}{0};
	\dnac{6}{0};\dnac{7}{0};\dnac{8}{0};\dnac{9}{0};\dnac{10}{0};
	\dta{10}{1}{0.8cm}{0.8};
	\dta{1}{1}{0.6cm}{0.6};
\end{tikzpicture} \\
\end{tabular}
\caption{Evolution of the global state over four discrete time steps.}
\label{fig:example2}
\end{center}
\end{figure}
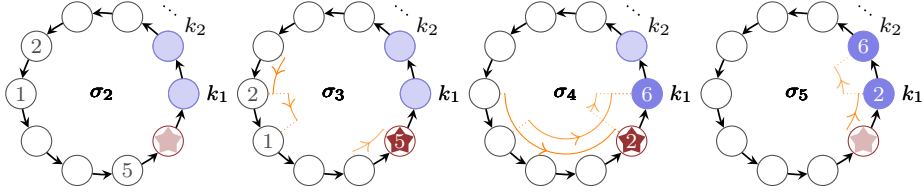

\begin{lemma}
The number of oscillators is invariant during transitions, i.e., the successor
function only creates tuples that are states of the given model. Formally,
let \(\state= \langle k_1, \ldots, k_T \rangle\) and \(\state^\prime = 
\langle k_1^\prime, \ldots, k_T^\prime \rangle\) be two states
of a model \(\psystem\) such that \(\state^\prime = \suc(\state, \fvec)\), where $\fvec$
is some possible failure vector for \(\state\). Then
$\sum_{\Phase = 1}^T k_\Phase = \sum_{\Phase = 1}^T k_\Phase^\prime = N.$
\end{lemma}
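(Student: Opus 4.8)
The plan is to prove that the successor function merely redistributes the $N$ oscillators across the $T$ phases, creating or destroying none. The crucial structural fact is that, for a fixed state $\state$ and failure vector $\fvec$, the phase transition function $\ptf(\state, \cdot, \fvec)$ is a total function from $\{1, \ldots, T\}$ to $\{1, \ldots, T\}$, and the sets $\phiupdate_\Phase(\state, \fvec)$ are exactly its fibres. Once this is established, the identity $\sum_\Phase k_\Phase' = \sum_\PhaseTwo k_\PhaseTwo$ is just the statement that summing a quantity over the blocks of a partition recovers the sum over the whole domain.

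First I would check the range: for every source phase $\PhaseTwo \in \{1, \ldots, T\}$ we have $\ptf(\state, \PhaseTwo, \fvec) \in \{1, \ldots, T\}$. By Definition~\ref{def:taufailure} the value is $1$ when $\fire^\PhaseTwo(\state, \fvec)$ holds, which lies in the range; otherwise it is $\update^\PhaseTwo(\state, \fvec)$, and the very negation of $\fire^\PhaseTwo$ gives $\update^\PhaseTwo(\state, \fvec) \le T$, while $\update^\PhaseTwo(\state, \fvec) = 1 + \refr(\PhaseTwo, \pert(\PhaseTwo, \alpha^\PhaseTwo(\state,\fvec), \epsilon)) \ge 1$ because $\refr$ never returns less than its phase argument $\PhaseTwo \ge 1$. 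Hence the image is always a legal phase. This is the only step that uses the concrete definitions of $\update$, $\fire$ and $\refr$, and I expect it to be the main (though minor) obstacle.

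Next I would observe that the fibres $\phiupdate_\Phase(\state, \fvec) = \{\PhaseTwo \mid \ptf(\state, \PhaseTwo, \fvec) = \Phase\}$ partition $\{1, \ldots, T\}$: they are pairwise disjoint because $\ptf$ is single-valued, and their union is all of $\{1, \ldots, T\}$ by the range check just performed. Using Definition~\ref{def:fsucc} and this partition I would then compute
$$\sum_{\Phase = 1}^T k_\Phase' = \sum_{\Phase = 1}^T \sum_{\PhaseTwo \in \phiupdate_\Phase(\state, \fvec)} k_\PhaseTwo = \sum_{\PhaseTwo = 1}^T k_\PhaseTwo = N,$$
where the middle equality reindexes the double sum by noting that each $\PhaseTwo$ belongs to exactly one block, and the final equality is the defining constraint of a global state from Definition~\ref{def:globalstates}. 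Since $\state'$ is obtained as $\suc(\state, \fvec)$, this simultaneously shows that $\state'$ satisfies the state constraint and is therefore a genuine state of $\psystem$, completing the argument.
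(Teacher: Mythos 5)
Your proposal is correct and follows essentially the same route as the paper's own proof: establish that $\ptf(\state,\cdot,\fvec)$ maps $\{1,\ldots,T\}$ into $\{1,\ldots,T\}$, note that the sets $\phiupdate_\Phase(\state,\fvec)$ therefore partition $\{1,\ldots,T\}$ (disjoint by functionality, exhaustive by the range check), and reindex the double sum. The only difference is cosmetic: where the paper dismisses the range bound with ``by construction,'' you spell it out from the definitions of $\fire$, $\update$, and $\refr$, which is a slight improvement in rigour rather than a different argument.
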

\begin{proof}
Observe that the range of the function $\ptf$ is bound by $T$. By construction
we can see that for any $\state$, for any possible failure vector $\fvec$ for
$\state$,  and for all $\Phase \in \{1, \ldots, T\}$, we have that
$1 \le \ptf(\state, \Phase, \fvec) \le T$.
Hence for all \(\PhaseTwo\) with \(1 \leq \PhaseTwo \leq T\),
there is a \(\Phase\) such that 
\(\PhaseTwo \in \phiupdate_\Phase(\state, \fvec)\). 
This implies \(\bigcup_{\Phase=1}^{T} \phiupdate_\Phase(\state, \fvec) = \{1, \ldots, T\}\). 
Furthermore, there cannot be more than one \(\Phase\) such that
\(\PhaseTwo \in \phiupdate_{\Phase}(\state, \fvec)\), since \(\ptf\) is functional.
Now we have 
$\sum_{\Phase=1}^T k^\prime_\Phase = \sum_{\Phase=1}^T \sum_{\PhaseTwo \in \phiupdate_\Phase(\state, \fvec)} k_\PhaseTwo = \sum_{\Phase=1}^T k_\Phase = N.$
\qed
\end{proof}

\subsection{Failure Vector Calculation}
\label{subsec:fveccalc}
We construct all possible failure vectors for
a global state by considering every group
of oscillators in decreasing order of phase. At each
stage we determine if the oscillators would fire.
If they fire then we consider each outcome where any, all, or none
of the firings result in a broadcast failure. We
then add a corresponding value to a partially calculated failure vector and consider the
next group of oscillators with a lower phase. If the oscillators do not fire
then there is nothing left to do, since by Def.~\ref{def:pert}
we know that $\pert$ is increasing,
therefore all oscillators
with a lower phase will also not fire. We can then pad the partial failure
vector with $\star$ appropriately to indicate that no failure could happen
since no oscillator fired.
\begin{table}[b]
	\caption{Construction of a possible failure vector for a global state $\state_3 = \langle 0, 0, 0, 0, 0, 2, 1, 0, 0, 5 \rangle$.}
	\setlength\tabcolsep{0.2cm}
	\centering
	\begin{tabular}{|c|c|r|c|c|}
	\hline
	iteration ($i$) & $\pi_1$ & failure vector $B$ & fired & branches \\
	\hline
	$0$ & $\langle 0, 0, 0, 0, 0, 2, 1, 0, 0, 5 \rangle$ & $\langle \rangle$ & -- & $\mathit{false}$  \\
	$1$ & $\langle 0, 0, 0, 0, 0, 2, 1, 0, 0, \underline{5} \rangle$ & $\langle 0 \rangle$ & $\mathit{true}$ & $\mathit{true}$ \\	
	$2$ & $\langle 0, 0, 0, 0, 0, 2, 1, 0, \underline{0}, 5 \rangle$ & $\langle 0, 0 \rangle$ &
$\mathit{true}$ & $\mathit{false}$ \\	
	$3$ & $\langle 0, 0, 0, 0, 0, 2, 1, \underline{0}, 0, 5 \rangle$ & $\langle 0, 0, 0 \rangle$ & $\mathit{true}$ & $\mathit{false}$ \\	
	$4$ & $\langle 0, 0, 0, 0, 0, 2, \underline{1}, 0, 0, 5 \rangle$ & $\langle 1, 0, 0, 0 \rangle$ & $\mathit{true}$ & $\mathit{true}$ \\	
	$5$ & $\langle 0, 0, 0, 0, 0, \underline{2}, 1, 0, 0, 5 \rangle$ & $\langle \star, \star, \star, \star, \star, \star, 1, 0, 0, 0 \rangle$ & $\mathit{false}$ & -- \\	
	\hline
	\end{tabular}
	\label{tab:fvecexample}
\end{table}

Table~\ref{tab:fvecexample}
illustrates how a possible failure vector for global state
$\state_3$ in Fig.~\ref{fig:example2} is iteratively constructed.
The first three columns respectively indicate
the current iteration $i$, the global state $\state_3$ with
the currently considered oscillators underlined, and the elements of the
failure vector $\fvec$ computed so far. The fourth column is $\mathit{true}$ if the
oscillators with phase $T{+}1{-}i$ would fire given the broadcast failures in the
partial failure vector.
We must consider all outcomes of any or all firings resulting in broadcast
failure. The final column therefore
indicates whether the value added to the partial failure vector in
the current iteration is the only possible value ($\mathit{false}$), or 
a choice from one of several possible values ($\mathit{true}$).

Initially we have an empty partial failure vector.
At the first iteration there are $5$ oscillators with a phase
of $10$. These oscillators will fire so we must consider each case
where $0, 1, 2, 3, 4$ or $5$ broadcast failures occur. Here we 
choose $0$ broadcast failures, which is then added to the partial failure vector.
At iterations $2$ and $3$ the oscillators would have fired, but since there are no
oscillators with a phase of $9$ or $8$ we only have one possible
value to add to the partial failure vector, namely $0$. At iteration~$4$
a single oscillator with a phase of $7$ fires, and we
choose the case where the firing resulted in a broadcast failure. In the final
iteration oscillators with a phase of $6$ do not fire, hence we can conclude
that oscillators with phases less than $6$ also do not fire, and can
fill the partial failure vector appropriately with $\star$.

Formally, we define a family of functions $\fail$ indexed by $\Phase$,
where each $\fail_\Phase$ takes as parameters some global
state $\state$, and $V$, a vector of length $T - \Phase$. $V$
represents all broadcast failures for all oscillators with a phase
greater than $\Phase$. The function $\fail_\Phase$ then computes the set
of all possible failure vectors for $\state$ with suffix $V$.
Here we use the notation $v^\frown v^\prime$ to indicate vector
concatenation.
\begin{definition}
We define 
$\fail_\Phase : \states \times \{0, \ldots, N\}^{T - \Phase} \to \mathbb{P}((\{0, \ldots, N\} \cup \{\star\})^T)$,
for $1 \le \Phase \le T$, as the family of functions indexed by $\Phase$,
where $\state = \langle k_1, \ldots, k_T \rangle$ and
\begin{align*}
\fail_\Phase(\state, V) & =
	\begin{cases}
			\textstyle\bigcup_{k = 0}^{k_\Phase} \fail_{\Phase - 1}(\state, \langle k \rangle^\frown V) & \text{if } 1 < \Phase \le T \text{ and } \fire^\Phase(\state, {\{\star\}^\Phase}^\frown V) \\
			\textstyle\bigcup_{k = 0}^{k_1} \left\{ \langle k \rangle^\frown V \right\} & \text{if } \Phase = 1 \text{ and } \fire^1(\state, \langle \star \rangle^\frown V) \\
		\left\{ {\{\star\}^\Phase}^\frown V \right\} & \text{otherwise}
	\end{cases}
\end{align*}\vspace*{-3ex}
\label{def:failvecompute}
\end{definition}
Observe that the result of $\fail_T$ is always a set of well defined failure vectors, since whenever $\star$ is introduced into a failure vector at index $\Phase$, all preceding indices are also filled with $\star$, as required by Definition~\ref{def:failvec}.

\begin{definition}
Given a global state $\state \in \states$, we define 
$\failvecs_{\state}$, the set of
all possible failure vectors for that state, as
$\failvecs_{\state} = \fail_T(\state, \langle \rangle)$, and define
$\nextstates(\state)$,
the set of all successor states of $\state$, as
$\nextstates(\state) = \{\suc(\state, \fvec) \mid \fvec \in \failvecs_{\state}\}$.
\label{def:next}
\end{definition}
Note that for some global states $|\nextstates(\state)| < |\failvecs_{\state}|$,
since we may have that $\suc(\state, \fvec) = \suc(\state, \fvec^\prime)$ for
some $\fvec, \fvec' \in \failvecs_{\state}$ with $\fvec \not = \fvec'$.

Given a global state $\state$ and a failure vector $\fvec \in \failvecs_{\state}$,
we will now compute the probability of a transition being made
to state $\suc(\state, \fvec)$ in the next time step.
Recall that $\mu$ is the probability with which a
broadcast failure occurs.
Firstly we define the probability mass function
$\pfail : \{1, \ldots, N\}^2 \rightarrow [0, 1]$,
where $\pfail(k, f)$ gives the probability of $f$ broadcast failures
occurring given that $k$ oscillators fire, as
$
\pfail(k, f) =  \mu^{f} (1 - \mu)^{k - f} {{k }\choose{f}}.
$
We then denote by $\pfailvec : \states \times \failvecs_\state \to [0, 1]$ the function
mapping a possible broadcast failure vector $\fvec$ for $\state$, to
the probability of the failures in $\fvec$ occurring. That is,
\begin{align}
\label{eqn:pfailvec}
\pfailvec(\langle k_1, \ldots, k_T \rangle, \langle f_1, \ldots, f_T \rangle) =
		\prod_{\Phase = 1}^T
		\begin{cases}		
			\pfail(k_\Phase, f_\Phase) & \text{if } f_\Phase \ne \star \\
			1 & \text{otherwise}		
		\end{cases}
\end{align}
\begin{lemma}
	For any global state $\state$, $\pfailvec$ is a 
	discrete probability distribution over $\failvecs_\state$. Formally,
	$\sum_{\fvec \in \failvecs_\state} \pfailvec(\state, \fvec) = 1$.
\end{lemma}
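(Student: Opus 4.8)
The plan is to prove the identity by induction on the phase index $\Phase$ that drives the recursive construction of $\fail_\Phase$ in Definition~\ref{def:failvecompute}, with the binomial theorem supplying the essential arithmetic. Concretely, the single fact about $\pfail$ that I will use is that for every $k$,
\[
\sum_{f=0}^{k}\pfail(k,f)=\sum_{f=0}^{k}\binom{k}{f}\mu^{f}(1-\mu)^{k-f}=(\mu+(1-\mu))^{k}=1,
\]
i.e., conditioned on $k$ oscillators firing, the number of broadcast failures is binomially distributed.

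Before the induction I would record the observation that makes it work: by the recursion for $\alpha^\Phase$ in~\eqref{eq:alpha}, the predicate $\fire^\Phase(\state,\fvec)$ depends only on the components $f_{\Phase+1},\dots,f_T$ of $\fvec$. Hence whether oscillators of phase $\Phase$ fire is already fixed by the suffix alone, independently of how the first $\Phase$ entries are later chosen; this is precisely why Definition~\ref{def:failvecompute} may evaluate $\fire^\Phase$ on ${\{\star\}^\Phase}^\frown V$, and it guarantees that the firing test used during construction agrees with the actual $\fire^\Phase$ value for every vector produced.

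For the induction, for a failure vector $\fvec=\langle f_1,\dots,f_T\rangle$ and $1\le\Phase\le T$ write $P_\Phase(\fvec)=\prod_{\Psi=1}^{\Phase}c_\Psi$, where $c_\Psi=\pfail(k_\Psi,f_\Psi)$ if $f_\Psi\ne\star$ and $c_\Psi=1$ otherwise, so that $\pfailvec(\state,\fvec)=P_T(\fvec)$ by~\eqref{eqn:pfailvec}. I would then prove the strengthened claim that for every $\Phase$ and every $V\in\{0,\dots,N\}^{T-\Phase}$,
\[
\sum_{\fvec\in\fail_\Phase(\state,V)}P_\Phase(\fvec)=1,
\]
from which the lemma follows by taking $\Phase=T$ and $V=\langle\rangle$, since $\failvecs_\state=\fail_T(\state,\langle\rangle)$ by Definition~\ref{def:next}. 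In the base case $\Phase=1$, the non-firing branch yields the singleton $\{\langle\star\rangle^\frown V\}$ with $P_1=1$, while the firing branch yields $\bigcup_{k=0}^{k_1}\{\langle k\rangle^\frown V\}$, a disjoint union (distinct first components) whose total is $\sum_{k=0}^{k_1}\pfail(k_1,k)=1$. In the inductive step with $\fire^\Phase$ holding, every vector of $\fail_{\Phase-1}(\state,\langle k\rangle^\frown V)$ has its $\Phase$-th coordinate equal to $k$, so $P_\Phase$ factors as $\pfail(k_\Phase,k)\cdot P_{\Phase-1}$; the union over $k$ is again disjoint, giving
\[
\sum_{\fvec\in\fail_\Phase(\state,V)}P_\Phase(\fvec)=\sum_{k=0}^{k_\Phase}\pfail(k_\Phase,k)\sum_{\fvec\in\fail_{\Phase-1}(\state,\langle k\rangle^\frown V)}P_{\Phase-1}(\fvec),
\]
which collapses to $\sum_{k=0}^{k_\Phase}\pfail(k_\Phase,k)=1$ by the induction hypothesis and the binomial identity; the non-firing branch again contributes a single vector with $P_\Phase=1$.

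I expect the only real difficulty to be bookkeeping rather than anything conceptual: I must keep track that the recursive call attaches $k$ in exactly the $\Phase$-th coordinate, so that the factor extracted from $P_\Phase$ is $\pfail(k_\Phase,k)$ and the induction hypothesis applies verbatim to the lengthened suffix $\langle k\rangle^\frown V$, and I must confirm the disjointness of the unions in Definition~\ref{def:failvecompute} so that summation distributes over them. The underlying reason the total is $1$ is simply that $\pfailvec$ assigns to each admissible failure vector the probability of an independent sequence of binomial broadcast-failure experiments, one per firing phase.
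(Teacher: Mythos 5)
Your proof is correct, and it rests on exactly the same arithmetic core as the paper's: the binomial identity $\sum_{f=0}^{k}\pfail(k,f)=1$, the factor $1$ contributed by non-firing phases, and the disjointness of the branches created at each firing phase. The inductive scaffolding, however, is genuinely different. The paper builds an explicit probability tree whose nodes at depth $i$ carry the partial suffix vectors obtained by scanning phases downward, labels each edge with $\pfail(k_\Phase,\omega)$ or $1$, and proves by induction on depth that the total mass $\sum_{d \in D^\Phase} \ppath(d)$ at every level remains $1$; the lemma then follows by identifying leaf labels with $\failvecs_\state$ and $\ppath$ with $\pfailvec$. You instead perform a structural induction directly on Definition~\ref{def:failvecompute}, with the strengthened hypothesis that for \emph{every} suffix $V$ the partial products over $\fail_\Phase(\state,V)$ sum to $1$: in tree language, the paper shows every \emph{level} carries mass $1$ working down from the root, while you show every \emph{subtree} carries mass $1$ working up from the innermost recursive calls. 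Your version buys two things. First, it is anchored to the recursion that formally defines $\failvecs_\state = \fail_T(\state,\langle\rangle)$ (Definition~\ref{def:next}), so the identification of the index set of the sum --- which the paper handles by asserting that the leaf labels are exactly $\failvecs_\state$, and which tacitly needs distinct leaves to carry distinct labels --- comes for free from the disjointness you check. Second, your preliminary observation that $\fire^\Phase(\state,\fvec)$ depends only on the entries $f_{\Phase+1},\ldots,f_T$ makes explicit why the firing test used inside the recursion (evaluated on star-padded vectors) is consistent with the final vectors produced, a point the paper leaves implicit. What the paper's tree buys in exchange is the concrete probabilistic picture --- a failure vector as the outcome of a sequence of independent binomial experiments, one per firing phase, with mass preserved level by level --- which makes the result intuitively transparent.
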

\begin{proof}
Given a global state $\state = \langle k_1, \ldots, k_T \rangle$ we can
construct a tree of depth $T$ where each leaf node is labelled with
a possible failure vector for $\state$, and each node $\node$ at depth
$\Phase$ is labelled with a vector of length $\Phase$
corresponding to the last $\Phase$ elements of a possible failure vector
for $\state$. We denote the label of a node $\node$ by $V(\node)$.
We label each node $\node_\omega$ with
$\langle \omega \rangle^\frown V(\node)$.
We iteratively construct the tree, starting with the root node,
$\mathit{root}$,
at depth $0$, which we label with the empty tuple $\langle \rangle$.
For each node $\node$ at depth $0 \le \Phase < T$
we construct the children of $\node$ as follows:
\begin{enumerate}
\item
If oscillators with phase $\Phase$ fire
we define the sample space $\Omega = \{0, \ldots, n_\Phase\}$
to be a set of disjoint events, where each $\omega \in \Omega$
is the event where $\omega$ broadcast failures occur, given that
$k_\Phase$ oscillators fired.
For each $\omega \in \Omega$ there is a child $\node_\omega$ of
$\node$ with label $\langle \omega \rangle^\frown V(\node)$,
and we label the edge from $\node$ to $\node_\omega$ with
$\pfail(k_\Phase, \omega)$.
\item
\label{item:nofire}
If oscillators with phase $\Phase$ do not fire then $\node$ has
a single child $\node_\star$ labelled with
$\langle \star \rangle^\frown V(\node)$, and we label the edge
from $\node$ to $\node_\star$ with $1$.
\end{enumerate}
We denote the label of an edge from a node $\node$ to its
child $\node^\prime$ by $L(\node, \node^\prime)$.
For case~\ref{item:nofire} we can observe that if oscillators with
phase $\Phase$ do not fire then we know that oscillators with any
phase $\PhaseTwo < \Phase$ will also not fire, since from Def.~\ref{def:pert}
we know that $\pert$ is an increasing function.
Hence, all descendants of $\node$ will also have
a single child, with an edge labelled with $1$, and each node
is labelled with the label of its parent, prefixed with $\langle \star \rangle$.

After constructing the tree we have a vector of length $T$
associated with each leaf node, corresponding to a failure
vector for $\state$. The set $\failvecs_\state$ of all possible failure vectors for
$\state$ is therefore the set of all vectors labelling leaf nodes.
We denote by $\ppath(\node)$ the product of all labels on edges
along the path from $\node$ back to the root. Given a global state
$\state = \langle k_1, \ldots, k_T \rangle$ and a failure vector
$\fvec = \langle f_1, \ldots, f_T \rangle \in \failvecs_\state$ labelling some
leaf node $\node$ at depth $T$, we can see that
\begin{align*}
\ppath(\node) = 1 \cdot \prod_{\Phase=1}^T
	\begin{cases}
		\pfail(k_\Phase, f_\Phase) & \text{if } f_\phase \ne \star \\
		1 & \text{otherwise}
	\end{cases}
	= \pfailvec(\state, \fvec).
\end{align*}

Let $D^\Phase$ denote the set of all nodes at depth $\Phase$.
We show $\sum_{d \in D^\Phase} \ppath(d) = 1$ by induction
on $\Phase$. For $\Phase = 0$, i.e., $D^\Phase = \{\mathit{root}\}$,
the property holds by definition. 
Now assume that $\sum_{d \in D^\Phase} \ppath(d) = 1$ holds for
some $0 \le \Phase < T$.
Let $\node$ be some node in $D^\Phase$, and let $C^\node$ be the set of all
children of $\node$. Consider the following two cases:
If oscillators with phase $\Phase$ do not
fire then $|C^\node| = 1$, and for the only $c \in C^\node$ we have that
$L(\node, c) = 1$. If oscillators with phase $\Phase$ 
fire observe that $\pfail$ is a probability mass
function for a random variable defined on the sample space
$\Omega = \{0, \ldots, k_\Phase\}$.
In either case we can see that $\sum_{c \in C^\node} L(\node, c) = 1$.
Note that $D^{\Phase + 1} = \bigcup_{d \in D^\Phase} C^d$, and
recall that $L(d, c) \cdot \ppath(d) = \ppath(c)$.
Therefore,
\begin{align*}
\sum_{d \in D^{\Phase + 1}} \ppath(d) = \sum_{d \in D^\Phase} \sum_{c \in C^d} L(d, c) \cdot \ppath(d) = \sum_{d \in D^\Phase} \left( \ppath(d) \sum_{c \in C^d} L(d, c) \right).
\end{align*}
Since $\sum_{c \in C^d} L(d, c) = 1$ for each $d \in D^\Phase$,
and from the induction hypothesis, we then have that 
\begin{align*}
\sum_{d \in D^\Phase} \left( \ppath(d) \sum_{c \in C^d} L(d, c) \right)
=\sum_{d \in D^\Phase}\ppath(d) = 1.
\end{align*}
We have already shown that $\ppath(\node) = \pfailvec(\state, \fvec)$ for any
leaf node $\node$ labelled with a failure vector $\fvec$, and since the set
of all labels for leaf nodes is $\failvecs_\state$ we can conclude that
\begin{align*}
\sum_{\fvec \in \failvecs_\state} \pfailvec(\state, \fvec) = \sum_{d \in D^T} \ppath(d) = 1.
\end{align*}
This proves the lemma.~\qed
\end{proof}

\begin{example}
\label{ex:prob}
We consider again the global states
$\state_3 = \langle 0, 0, 0, 0, 0, 2, 1, 0, 0, 5 \rangle$
and
$\state_4 = \langle 6, 0, 0, 0, 0, 0, 0, 0, 0, 2 \rangle $, given in
Fig.~\ref{fig:example2}, of the population model instantiated
in Example~\ref{ex:globalstates}, and the failure vector
$\fvec = \langle \star, \star, \star, \star, \star, \star, 1, 0, 0, 0 \rangle$
given in
Example~\ref{ex:successor}, noting that $\fvec \in \failvecs_{\state_3}$,
$\suc(\state_3, \fvec) = \state_4$, and $\mu = 0.1$. We calculate the
probability of a transition being made from
$\state_3$ to $\state_4$ as
\begin{align*}
 & \pfailvec(\langle 0, 0, 0, 0, 0, 2, 1, 0, 0, 5 \rangle, \langle \star, \star, \star, \star, \star, \star, 1, 0, 0, 0 \rangle)\\
&= 1 \cdot 1 \cdot 1 \cdot 1 \cdot 1 \cdot 1 \cdot \pfail(1, 1) \cdot
   \pfail(0, 0) \cdot \pfail(0, 0) \cdot \pfail(5, 0)\\
&   = (0.1^1 \cdot 0.9^0 \cdot 1)  \cdot (1) \cdot (1) \cdot ( 0.1^0 \cdot 0.9^5 \cdot 1) = 
  0.059049
  \end{align*}
\end{example}

We now have everything we need to fully describe the evolution of the
global state of a population model over time.
An execution path of a population model $\psystem$ is an infinite sequence
of global states
$\path = \state_0 \state_1 \state_2 \state_3 \cdots$,
where $\state_0$ is called the \emph{initial state}, and 
$\state_{k + 1}  \in \nextstates(\state)$ for all $k \ge 0$.

\subsection{Synchronisation}
When all oscillators in a population model have the same phase
in a global state we say that the state is \emph{synchronised}.
Formally, a global state $\state = \langle k_1, \ldots, k_T \rangle$
is \emph{synchronised} if, and only if, there is some
$\Phase \in \{1, \ldots, T\}$ such that $k_\Phase = N$, and hence
$k_{\Phase^\prime} = 0$ for all $\Phase^\prime \neq \Phase$.
We will often want to reason about whether some particular run $\path$
of a model leads to a global state that is
synchronised.
We say that a path $\path = \state_0 \state_1 \cdots$
synchronises if, and only if, there exists some $k \ge 0$
such that $\state_k$ is synchronised. Once a synchronised global
state is reached any successor states will also be synchronised.
Finally we can say that a model synchronises if, and only if,
all runs of the model synchronise.

\subsection{Model Construction}
\label{subsec:model}


Given a population model $\psystemfull$ we 
construct a DTMC $D(\psystem) = (Q, \init, \mathbf{P}, L)$ where $L$ ranges
over the singleton $\{\synchlab\}$. We define the set of
states $Q$ to be $\Gamma(\psystem) \cup \{\init\}$, where $\init$ is
the initial state of the DTMC.
For each $\state = \langle k_1, \ldots, k_T \rangle \in \states(S)$,
we set $L(\state) = \{\synchlab\}$ if $k_T = N$. 

In the initial state all oscillators are
\emph{unconfigured}. That is, oscillators have not yet been assigned a
value for their phase.
For each
$\state = \langle k_1, \ldots, k_T \rangle \in Q \setminus \{\init\}$ we define
\begin{align*}
\label{eq:initialprob}
	\prmatrix(\init, \state) = \frac{1}{T^N} {{N}\choose{k_1, \ldots, k_T}} 
\end{align*}
to be the probability of moving from $\init$ to a state where
$k_i$ arbitrary oscillators are configured with the phase value $i$ for
$1\leq i\leq T$. The multinomial coefficient defines the number of possible
assignments of phases to distinct oscillators that result in the global state
$\state$. The fractional coefficient normalises the multinomial coefficient with
respect to the total number of possible assignments of phases to all oscillators.
In general, given an arbitrary set of initial configurations (global states)
for the oscillators, the total number of possible phase assignments can be
calculated by computing the sum of the multinomial coefficients for each
configuration (global state) in that set.  
Since $\states$
is the set of all possible global states, we have that
\begin{align*}
	\displaystyle\sum_{\langle k_1, \ldots, k_T \rangle \in \states} {{N}\choose{k_1, \ldots, k_T}} = T^N.
\end{align*}

We assign probabilities to the transitions as follows:
for every \(\state \in Q \setminus \{\init\}\), we consider each \(F \in \failvecs_\state\), and
set \(\prmatrix(\state, \suc(\state,F)) = \pfailvec(\state,F)\). For every combination 
of \(\state\) and \(\state^\prime\) where \(\state^\prime \not\in\nextstates(\state)\)
we set \(\prmatrix(\state, \state^\prime) = 0\).


\subsection{Model Reduction}

We now describe a reduction of the population model that 
results in a significant decrease in the size of the model, 
but is equivalent to the original model with respect to the
reachability of synchronised states.
We first distinguish between states where one or more oscillators are about
to fire, and states where no oscillators will fire at all. We refer
to these states as \emph{firing states} and \emph{non-firing} states
respectively.
\begin{definition}
\label{def:firingstates}
Given a population model $\psystem$, a global state
$\langle k_1, \ldots, k_T\rangle \in \states$
is a \emph{firing state} if, and only if, $k_T > 0$.
We denote by $\fstates$ the set of all firing states of $\psystem$,
and denote by $\nfstates = \states \setminus \fstates$ the set of
all non-firing states of $\psystem$. We will
again omit $\psystem$ if it is clear from the context
\end{definition}


Given a DTMC $D = (Q, \init, \prmatrix, L)$
 let 
\(\lvert \prmatrix \rvert =
	\lvert \{(t, t^\prime) \mid t, t^\prime \in Q^2  \text{ and } \prmatrix(t, t^\prime) > 0\} \rvert
\)
 be the number of non-zero transitions in $\prmatrix$, and
$\lvert D \rvert = \lvert Q \rvert + \lvert \prmatrix \rvert$ to be the
total number of states and non-zero transitions in $D$.

\begin{theorem}
\label{theorem:reduction}
For every population model $\psystem$ and its corresponding DTMC
$D(\psystem) = (Q, \init, \prmatrix, L)$, there is a reduced model
$D^\prime(\psystem) = (Q^\prime, \init, \prmatrix^\prime, L^\prime)$ where 
$\lvert D^\prime(\psystem) \rvert < \lvert D(\psystem) \rvert$ and
unbounded-time reachability properties
with respect to synchronised firing states in $D(\psystem)$ are preserved in $D'(\psystem)$.
In particular, the states and transitions in $D(\psystem)$ are reduced in
$D^\prime(\psystem)$ such that $Q^\prime = Q \setminus \nfstates$ and
\begin{align*}
	\lvert Q^\prime \rvert & = 1 + \frac{T^{(N-1)}}{(N-1)!}, \\
	\lvert \prmatrix^\prime \rvert & \le \lvert \prmatrix \rvert -
	2 \lvert \nfstates \rvert
\end{align*}
where $x^{(n)}$ is the rising factorial.
\end{theorem}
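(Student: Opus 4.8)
The plan is to obtain $D^\prime(\psystem)$ by deleting every non-firing state and short-circuiting the deterministic chains that run through them. First I would record the two structural facts that drive the construction. A non-firing state $\state$ has $k_T = 0$, so no oscillator fires, the only failure vector for $\state$ is $\{\star\}^T$, and hence $\lvert \nextstates(\state)\rvert = 1$: every non-firing state has out-degree exactly one, its unique successor being obtained by incrementing all phases by one (since $\pert(\Phase,0,\epsilon)=0$ by Def.~\ref{def:pert} and $\refr(\Phase,0)=\Phase$). Iterating this successor strictly increases the largest occupied phase, so after finitely many steps some oscillator reaches $T$ and a firing state appears. This defines a map $g\colon Q \to Q^\prime$ sending each firing state and $\init$ to itself, and each non-firing state to the first firing state reached along its deterministic chain; crucially, no state on such a chain is a target, since synchronised firing states have $k_T = N > 0$ whereas non-firing states have $k_T = 0$. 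The reduced chain is then $Q^\prime = Q\setminus\nfstates$ with $\prmatrix^\prime(s,t) = \sum_{\state^\prime \,:\, g(\state^\prime) = t}\prmatrix(s,\state^\prime)$ for $s,t\in Q^\prime$; summing over $t$ shows each row still sums to $1$, so $D^\prime$ is a well-defined DTMC.

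For the state count I would count firing states directly. A firing state $\langle k_1,\dots,k_T\rangle$ has $k_T\ge 1$ and $\sum_\Phase k_\Phase = N$; substituting $k_T \mapsto k_T - 1$ turns this into a choice of $T$ nonnegative integers summing to $N-1$, of which there are $\binom{N+T-2}{T-1} = T^{(N-1)}/(N-1)!$, the rising factorial of the statement. Adding $\init$ yields $\lvert Q^\prime\rvert = 1 + T^{(N-1)}/(N-1)!$.

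For the transition bound the leverage is that $\prmatrix(\init,\state)>0$ for every $\state\neq\init$, because the multinomial coefficient defining $\prmatrix(\init,\cdot)$ never vanishes, so $\init$ reaches every firing state directly. Consider the map $\Theta(s,\state^\prime)=(s,g(\state^\prime))$ acting on those transitions of $D$ whose source $s$ lies in $Q^\prime$; its image is exactly the support of $\prmatrix^\prime$. Its domain has size $\lvert\prmatrix\rvert - \lvert\nfstates\rvert$, since precisely the $\lvert\nfstates\rvert$ transitions leaving non-firing states (one per such state) are discarded. Restricting to source $\init$, the $\lvert\nfstates\rvert$ transitions $\init\to\state^\prime$ into non-firing states each collapse, under $\Theta$, onto the already-present transition $\init\to g(\state^\prime)$, so $\Theta$ loses at least another $\lvert\nfstates\rvert$ elements to collisions. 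Hence $\lvert\prmatrix^\prime\rvert = \lvert\mathrm{im}\,\Theta\rvert \le \lvert\prmatrix\rvert - 2\lvert\nfstates\rvert$, and together with $\lvert Q^\prime\rvert < \lvert Q\rvert$ this gives $\lvert D^\prime\rvert < \lvert D\rvert$.

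The substantive step, which I expect to be the main obstacle, is preservation of unbounded reachability of $\synchlab$-states. Writing $p^D_s$ for the probability of $\pctlsometime\,\synchlab$ from $s$ in $D$, I would first note that $p^D_{\state^\prime} = p^D_{g(\state^\prime)}$ for every non-firing $\state^\prime$, because the chain from $\state^\prime$ to $g(\state^\prime)$ is deterministic and meets no $\synchlab$-state, so the reachability value is constant along it. Substituting this into the equation $p^D_s = \sum_{\state^\prime}\prmatrix(s,\state^\prime)\,p^D_{\state^\prime}$ for a non-target $s\in Q^\prime$ and regrouping by $t = g(\state^\prime)$ shows that $(p^D_s)_{s\in Q^\prime}$ satisfies exactly the reachability equations of $D^\prime$. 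The delicate point is upgrading ``is a fixpoint'' to ``equals the reachability vector of $D^\prime$''; I would close this either by the least-fixpoint characterisation of reachability probabilities in finite DTMCs, or, more concretely, by grouping the finite $D$-paths that first hit $\synchlab$ according to the projection that deletes non-firing states: each deterministic non-firing segment contributes factor $1$, the branch into it contributes exactly $\sum_{g(\state^\prime)=t}\prmatrix(s,\state^\prime) = \prmatrix^\prime(s,t)$, and since targets lie in $Q^\prime$, first-hitting in $D$ corresponds to first-hitting in $D^\prime$, so the two measures of $\pctlsometime\,\synchlab$ coincide. Ensuring that this projection is genuinely probability-preserving is where the care is needed.
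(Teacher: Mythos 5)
Your proposal is correct and follows the paper's own strategy: remove $\nfstates$, short-circuit the deterministic chains via your map $g$ (the paper's $\sucskip$, extended to fix $\init$ and firing states), aggregate the probabilities, and prove preservation by projecting paths of $D$ onto paths of $D^\prime$ -- your fallback route for the preservation step is precisely the paper's argument, which sums the $\lvert\predstates(\fstate_1)\rvert+1$ prefixes in $\pref(\path^\prime)$ against $\prmatrix^\prime(\init,\fstate_1)$ and uses the fact that the deleted transitions all have probability $1$. The differences are minor and mostly to your advantage. For the state count, the paper computes $\lvert Q\rvert-\lvert\nfstates\rvert = 1+\binom{N+T-1}{N}-\binom{N+T-2}{N}$ and simplifies, whereas you count firing states directly as $\binom{N+T-2}{T-1}$; these are trivially the same number. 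For the transition bound, the paper argues informally that $\lvert\nfstates\rvert$ initial transitions and $\lvert\nfstates\rvert$ outgoing deterministic transitions vanish and that ``no additional transitions are introduced''; your map $\Theta$ together with the collision count at $\init$ (which needs, and you supply, positivity of the multinomial coefficients) is a more rigorous rendering of exactly that claim, and it correctly handles the point the paper glosses over, namely that a transition $\fstate\to\state^\prime$ into a non-firing state is \emph{replaced} by $\fstate\to g(\state^\prime)$ rather than merely deleted, so one must argue image cardinality rather than simple subtraction. Finally, your primary suggestion for the preservation step -- showing that the $D$-reachability vector restricted to $Q^\prime$ satisfies the Bellman equations of $D^\prime$ and then invoking the least-fixpoint characterisation -- is a genuinely different route that the paper does not take; it is viable, but as you yourself flag, fixpointness alone is not enough (the all-ones vector is always a fixpoint), so you would additionally need that the zero-probability states of $D$ and $D^\prime$ correspond under $g$, after which uniqueness of the solution on the remaining states closes the argument. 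Either closure is sound; the paper opts for the path-grouping one.
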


We now proceed to prove this theorem. To that end, we need some preliminary properties
of non-firing states and their relation to firing states.





\begin{lemma}
\label{lem:onesuccessor}
Every non-firing state $\state \in \nfstates$ has exactly one successor state,
and in that state all oscillator phases have increased by $1$.
\end{lemma}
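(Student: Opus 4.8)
The plan is to reduce everything to the failure-vector machinery: I would show that a non-firing state $\state = \langle k_1,\ldots,k_T\rangle$, which by Definition~\ref{def:firingstates} is exactly a state with $k_T = 0$, admits only a single possible failure vector, and then evaluate $\suc$ on it to read off that every phase advances by one. By Definition~\ref{def:next}, a singleton $\failvecs_{\state}$ immediately yields a singleton $\nextstates(\state)$, which is the first half of the claim.

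First I would trace the recursion of Definition~\ref{def:failvecompute} starting from $\fail_T(\state,\langle\rangle)$. The delicate point here---and the step I expect to be the main obstacle---is that $\fire^T$ evaluates to \emph{true} even though $\state$ is non-firing: since $\alpha^T(\state,\fvec)=0$ by the base case of \eqref{eq:alpha} and $\pert(T,0,\epsilon)=0$ by Definition~\ref{def:pert}, we get $\update^T = T+1 > T$, so a phase-$T$ group is always past threshold and $\fail_T$ enters its firing branch. The resolution is that $k_T = 0$ collapses the union $\bigcup_{k=0}^{k_T}$ to the single term $k=0$, so we descend to $\fail_{T-1}(\state,\langle 0\rangle)$ with $f_T = 0$; one must be careful not to read ``$\fire^T$ holds'' as ``some oscillator actually fires''. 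At the next level $\alpha^{T-1} = \alpha^T + k_T - f_T = 0$, hence $\update^{T-1} = T$ and $\fire^{T-1}$ is \emph{false}, so the recursion hits the \emph{otherwise} clause and returns the single vector $\fvec = {\{\star\}^{T-1}}^\frown\langle 0\rangle$. Thus $\failvecs_{\state} = \{\fvec\}$.

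For the second half I would evaluate $\suc(\state,\fvec)$ via Definition~\ref{def:fsucc}, which first needs the phase transition function. A short downward induction on \eqref{eq:alpha} shows $\alpha^\Phase(\state,\fvec)=0$ for every $\Phase$: for $\Phase < T-1$ the relevant entry $f_{\Phase+1}$ is $\star$, so the ``otherwise'' case of \eqref{eq:alpha} simply propagates the zero upward. With $\alpha^\Phase = 0$ we get $\pert(\Phase,0,\epsilon)=0$ and $\refr(\Phase,0)=\Phase$ regardless of the refractory interval, so $\update^\Phase = \Phase+1$. Hence by Definition~\ref{def:taufailure}, $\ptf(\state,\Phase,\fvec)=\Phase+1$ for $\Phase<T$ (since then $\update^\Phase\le T$, so $\fire^\Phase$ is false) and $\ptf(\state,T,\fvec)=1$.

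Finally I would assemble the successor: the preimage sets satisfy $\phiupdate_\Phase(\state,\fvec)=\{\Phase-1\}$ for $2\le\Phase\le T$ and $\phiupdate_1(\state,\fvec)=\{T\}$, so $\suc(\state,\fvec)=\langle 0, k_1,\ldots,k_{T-1}\rangle$. Since $k_T=0$, every populated phase $\Phase$ has moved to $\Phase+1$, which is exactly the statement. The degenerate case $T=1$ is vacuous, because then $k_T=k_1=N>0$ makes every state a firing state, so there are no non-firing states to consider.
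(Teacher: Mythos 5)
Your proof is correct and follows the same overall strategy as the paper's own proof: establish that $\failvecs_\state$ is a singleton, conclude from Definition~\ref{def:next} that $\nextstates(\state)$ is a singleton, and then evaluate $\suc$ explicitly to see that every phase advances by one. The one substantive divergence is which failure vector you identify as the unique element of $\failvecs_\state$. The paper's proof takes it to be $\{\star\}^T$, in line with the informal remark following Definition~\ref{def:failvec}; you instead trace the recursion of Definition~\ref{def:failvecompute} and obtain ${\{\star\}^{T-1}}^\frown\langle 0\rangle$, because $\fire^T(\state,\fvec)$ holds irrespective of $k_T$ (indeed $\update^T(\state,\fvec)=T+1>T$), so $\fail_T$ enters its firing branch and the union $\bigcup_{k=0}^{k_T}$ collapses to the single term $k=0$. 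Your reading is the one the formal machinery actually supports---it agrees with the paper's own worked example in Table~\ref{tab:fvecexample}, where empty phases above the firing threshold receive the entry $0$ rather than $\star$---and it exposes a minor internal inconsistency between the paper's prose and its recursive definition. The discrepancy is harmless for the lemma: both vectors induce $\alpha^\Phase(\state,\fvec)=0$ for every $\Phase$, hence the same phase transition function, the same unique successor $\langle 0, k_1, \ldots, k_{T-1}\rangle$, and (since $\pfail(0,0)=1$) the same transition probability $1$ that Corollary~\ref{cor:determinstictransition} relies on. Your handling of the degenerate case $T=1$, which the paper passes over silently, is also correct.
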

\begin{proof}
Given a non-firing state $\state = \langle k_1, \ldots, k_T \rangle$
observe that as $k_T = 0$
there is only one possible failure vector for $\state$, namely $\{\star\}^T$.
The set of all successor states of $\state$ is then the singleton
$\{\suc(\state, \{\star\}^T)\}$.
By construction we can then see that $update^\Phase(\state, \{\star\}^T) = 1$
and $\phiupdate_\Phase(\state, \{\star\}^T) = \{\Phase - 1\}$
for $1 \le \Phase \le T$. The single successor state is then given
by $\suc(\state, \{\star\}^T) = \langle 0, k_1, \ldots, k_{T - 1} \rangle$.
\qed
\end{proof}
\begin{corollary}
\label{cor:determinstictransition}
An immediate corollary of Lemma~\ref{lem:onesuccessor} is that a
transition from any non-firing state is taken deterministically, since
for any $\state \in \nfstates$ we have $\pfailvec(\state, \{\star\}^T) = 1$.
\end{corollary}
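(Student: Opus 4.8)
The plan is to verify the displayed equation $\pfailvec(\state, \{\star\}^T) = 1$ by a direct evaluation of the definition of $\pfailvec$, and then read off determinism from the model construction. First I would recall from the proof of Lemma~\ref{lem:onesuccessor} the fact that is actually needed: a non-firing state $\state = \langle k_1, \ldots, k_T \rangle$ has $k_T = 0$, and consequently its \emph{only} possible failure vector is $\{\star\}^T$, i.e.\ $\failvecs_\state = \{\{\star\}^T\}$. Lemma~\ref{lem:onesuccessor} itself is phrased in terms of the unique successor \emph{state}, so the step that matters here is to extract the stronger statement that the failure vector is unique, which is exactly what the first sentence of that proof establishes.

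The core computation is the evaluation of $\pfailvec(\state, \{\star\}^T)$ using~(\ref{eqn:pfailvec}). Since every component $f_\Phase$ of the failure vector $\{\star\}^T$ equals $\star$, each of the $T$ factors in the defining product falls into the \emph{otherwise} branch and contributes the value $1$; no factor ever invokes $\pfail$. Hence the product collapses to $\prod_{\Phase=1}^T 1 = 1$, which gives $\pfailvec(\state, \{\star\}^T) = 1$ as claimed. Determinism then follows immediately from the model construction of Sect.~\ref{subsec:model}: for any $\state \in Q \setminus \{\init\}$ and any $F \in \failvecs_\state$ we have $\prmatrix(\state, \suc(\state, F)) = \pfailvec(\state, F)$, and all remaining transition probabilities are $0$. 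Because $\failvecs_\state$ is the singleton $\{\{\star\}^T\}$ for a non-firing state, the only outgoing edge with nonzero probability is the one to $\suc(\state, \{\star\}^T)$, and it carries probability exactly $1$, so the entire outgoing distribution is a point mass.

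There is essentially no obstacle here, and this is consistent with the statement being labelled an immediate corollary. The single point that requires care is the one flagged in the first paragraph: one must invoke the uniqueness of the failure vector from Lemma~\ref{lem:onesuccessor} (not merely the uniqueness of the successor state), since it is precisely this uniqueness that turns the pointwise evaluation $\pfailvec(\state, \{\star\}^T) = 1$ into a statement about the whole outgoing distribution of $\state$. With that in hand, the claim reduces to the two-line calculation above.
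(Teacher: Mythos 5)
Your proof is correct and follows exactly the route the paper intends: the corollary is stated without a separate proof precisely because, as you compute, every component of $\{\star\}^T$ falls into the \emph{otherwise} branch of~(\ref{eqn:pfailvec}), so the product is $1$, and the uniqueness of the failure vector established in the proof of Lemma~\ref{lem:onesuccessor} makes this the entire outgoing distribution. Your explicit remark that one needs the uniqueness of the failure vector, not merely of the successor state, is a careful and accurate reading of where that fact comes from.
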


\paragraph{Reachable State Reduction.}

Given a path
$\path = \state_0 \cdots \state_{n-1} \state_n$
where $\state_i \in \nfstates$ for $0 < i < n$ and $\state_0, \state_n \in \fstates$,
we omit transitions $(\state_i, \state_{i + 1})$ for $0 \le i < n$, and
instead introduce a direct transition from $\state_0$, the first firing state,
to $\state_n$, the next firing state in the sequence.
For any $\state = \langle k_1, \ldots, k_T \rangle \in \states$
let $\delta_\state = \max\{\Phase \mid k_\Phase > 0 \text{ and } 1 \le \Phase \le T \}$
be the highest phase of any oscillator in $\state$. The successor state
of a non-firing state is then the state where all phases have increased by
$T - \delta_\state$. Observe that
$T - \delta_\state = 0$ for any $\state \in \fstates$.

\begin{definition}
The \emph{deterministic successor function} $\sucskip : \states \to \fstates$, given by
\begin{align*}
\sucskip(\langle k_1, \ldots, k_T\rangle) = {\{0\}^{T-\delta_\state}}^\frown \langle k_1, \ldots, k_{\delta_\state} \rangle ,
\end{align*}
maps a state $\state \in \states$ to the next firing state reachable by taking
$T-\delta_\state$ deterministic transitions. Observe that for any firing state
$\state$ we have $\delta_\state = T$, and hence that $\sucskip(\state) = \state$.
\label{def:dsucc}
\end{definition}
We now update the definition for the set of all successor states
for some global state $\state \in \states$ to incorporate the deterministic
successor function. 
\begin{definition}
Given a global state $\state \in \states$, we define $\nextstatesskip(\state)$
to be the set of all successor states of $\state$, where
\begin{align*}
	\nextstatesskip(\state) =
	\{\sucskip(\suc(\state, \fvec)) \mid \fvec \in \failvecs_{\state}\}.
\end{align*}
\end{definition}

\begin{definition}
\label{def:predecessors}
Given a firing state $\state \in \fstates$ let $\predstates(\state)$ be
the set of all non-firing predecessors of $\state$, where $\state$ is reachable from the
predecessor by taking some positive number of transitions deterministically. Formally,
\[\predstates(\state) = \{\state^\prime \mid \state^\prime \in \nfstates \text{ and } \sucskip(\state^\prime) = \state \}.\]
We refer to all states $\state^\prime \in \predstates(\state)$ as
\emph{deterministic predecessors} of $\state$.
\end{definition}
Then given $D = (Q, \init, \prmatrix, L)$ with
$Q = \{\init\} \cup \states$, we define
$Q^\prime = Q \setminus \bigcup_{\state \in \fstates} \predstates(\state)$
to be the reduction of $Q$ where all non-firing states from which a
firing state can be reached deterministically are removed.

\begin{lemma}
\label{lem:qprimeisfiring}
For any $D(\psystem) = (Q, \init, \prmatrix, L)$ with
$Q = \states \cup \{\init\}$,
the reduction $Q^\prime$ is equal to $\fstates \cup \{\init\}$.
\end{lemma}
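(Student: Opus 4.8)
The plan is to reduce the whole statement to a single set equality. Since $Q = \states \cup \{\init\}$ and the firing/non-firing distinction partitions $\states$, we have the disjoint decomposition $Q = \fstates \cup \nfstates \cup \{\init\}$, where $\init \notin \states$ (the initial DTMC state is not a global state). By definition $Q^\prime = Q \setminus \bigcup_{\state \in \fstates} \predstates(\state)$, so if I can show $\bigcup_{\state \in \fstates} \predstates(\state) = \nfstates$, then $Q^\prime = (\fstates \cup \nfstates \cup \{\init\}) \setminus \nfstates = \fstates \cup \{\init\}$, which is exactly the claim. Thus everything hinges on proving that the union of all deterministic-predecessor sets over firing states is precisely the set of non-firing states.

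The inclusion $\bigcup_{\state \in \fstates} \predstates(\state) \subseteq \nfstates$ is immediate from Definition~\ref{def:predecessors}, since each $\predstates(\state)$ is defined to consist only of non-firing states, and hence so does their union. For the reverse inclusion I would take an arbitrary $\state^\prime \in \nfstates$ and set $\state := \sucskip(\state^\prime)$. By Definition~\ref{def:dsucc} the deterministic successor function has codomain $\fstates$, so $\state$ is a firing state; concretely, writing $\state^\prime = \langle k_1, \ldots, k_T\rangle$, the last component of $\sucskip(\state^\prime)$ equals $k_{\delta_{\state^\prime}}$, which is positive because $\delta_{\state^\prime}$ is by definition the highest occupied phase. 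Since $\state^\prime \in \nfstates$ and $\sucskip(\state^\prime) = \state$, Definition~\ref{def:predecessors} yields $\state^\prime \in \predstates(\state)$, whence $\state^\prime \in \bigcup_{\state \in \fstates} \predstates(\state)$.

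Combining the two inclusions establishes $\bigcup_{\state \in \fstates} \predstates(\state) = \nfstates$, and the conclusion $Q^\prime = \fstates \cup \{\init\}$ follows. The argument is essentially a bookkeeping unfolding of the relevant definitions, so no step is genuinely hard. The only point deserving a moment's care is the verification that $\sucskip(\state^\prime)$ really is a firing state for every non-firing $\state^\prime$, i.e.\ that its top phase is occupied; this relies on $\delta_{\state^\prime}$ being well defined, which in turn uses $N \ge 1$ so that at least one phase is occupied in every global state.
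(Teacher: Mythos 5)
Your proof is correct and follows essentially the same route as the paper's: both reduce the claim to the set equality $\bigcup_{\state \in \fstates} \predstates(\state) = \nfstates$, obtain one inclusion directly from Definition~\ref{def:predecessors}, and obtain the other by observing that every $\state^\prime \in \nfstates$ lies in $\predstates(\sucskip(\state^\prime))$ with $\sucskip(\state^\prime) \in \fstates$. Your additional check that the top phase of $\sucskip(\state^\prime)$ is occupied (via $k_{\delta_{\state^\prime}} > 0$) is a detail the paper leaves implicit, but it is not a different argument.
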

\begin{proof}
Let $P = \bigcup_{\state \in \fstates} \predstates(\state)$ be the set
of all predecessors of firing states in $\fstates$.
Since $Q = \states \cup \{\init\}$ and 
$Q^\prime = Q \setminus P$ we can see that $Q^\prime = \fstates \cup \{\init\}$
if, and only if, $P = \nfstates$.
From Definition~\ref{def:predecessors} it follows that
$P \subseteq \nfstates$.
In addition, for any $\state \in \nfstates$ there is some state
$\state^\prime$ such that $\state \in \predstates(\state^\prime)$ and
$\state^\prime = \sucskip(\state) \in \fstates$,
hence $\nfstates \subseteq P$
and the lemma is proved.
\qed
\end{proof}

\begin{lemma}
\label{lem:statereduction}
For a population model $\psystemfull$ and its corresponding DTMC
$D = (Q, \init, \prmatrix, L)$ with $Q = \states \cup \{\init\}$,
the number of states in the reduction of $Q$ is given by
$
\lvert Q^\prime \rvert = 1 + \frac{T^{(N-1)}}{(N-1)!},
$
where $x^{(n)}$ is the rising factorial.
\end{lemma}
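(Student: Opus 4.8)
The plan is to reduce the counting of $Q'$ to a purely combinatorial problem and then evaluate it. By Lemma~\ref{lem:qprimeisfiring} we already know that $Q' = \fstates \cup \{\init\}$. Since the initial state $\init$ is distinct from every global state in $\states$, this union is disjoint, so $\lvert Q' \rvert = 1 + \lvert \fstates \rvert$, and the entire task reduces to counting the firing states.

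By Definition~\ref{def:firingstates}, a firing state is a tuple $\langle k_1, \ldots, k_T \rangle$ of non-negative integers with $\sum_{\Phase=1}^T k_\Phase = N$ together with the constraint $k_T > 0$. First I would count these by a standard stars-and-bars argument: substituting $k_T' = k_T - 1 \ge 0$ rewrites the constraint as $\sum_{\Phase=1}^{T-1} k_\Phase + k_T' = N - 1$ over the non-negative integers, and the number of such solutions is $\binom{N + T - 2}{T - 1}$.

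It then remains to verify the identity $\binom{N+T-2}{T-1} = \frac{T^{(N-1)}}{(N-1)!}$. Writing the binomial coefficient as $\frac{(N+T-2)!}{(T-1)!\,(N-1)!}$ and observing that $\frac{(N+T-2)!}{(T-1)!} = T(T+1)\cdots(T+N-2)$ is a product of exactly $N-1$ consecutive integers starting at $T$ --- which is precisely the rising factorial $T^{(N-1)}$ --- yields the claim immediately, giving $\lvert Q' \rvert = 1 + \frac{T^{(N-1)}}{(N-1)!}$.

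The main obstacle here is essentially bookkeeping rather than conceptual difficulty: one must check carefully that the rising factorial $T^{(N-1)}$ has exactly $N-1$ factors, running from $T$ up to $T+N-2$, so that the cancellation of factorials lines up correctly with the binomial coefficient. As an independent cross-check that I would keep in reserve, one may instead count firing states as $\lvert \states \rvert - \lvert \nfstates \rvert = \binom{N+T-1}{T-1} - \binom{N+T-2}{T-2}$, where the non-firing states are those with $k_T = 0$, and simplify via Pascal's rule to the same $\binom{N+T-2}{T-1}$ before converting to rising-factorial form.
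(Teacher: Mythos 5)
Your proof is correct, and it takes a genuinely (if mildly) different route from the paper's. Both arguments start from Lemma~\ref{lem:qprimeisfiring}, but the paper then counts by \emph{complementation}: it computes $\lvert Q \rvert = \binom{N+T-1}{N} + 1$ (all global states plus $\init$) and $\lvert \nfstates \rvert = \binom{N+T-2}{N}$ (the states with $k_T = 0$, i.e.\ distributions of $N$ oscillators over $T-1$ phases), and concludes via $\lvert Q^\prime \rvert = \lvert Q \rvert - \lvert \nfstates \rvert = 1 + \binom{N+T-1}{N} - \binom{N+T-2}{N} = 1 + \frac{T^{(N-1)}}{(N-1)!}$, asserting the final equality without detail. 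You instead count $\fstates$ \emph{directly}: the substitution $k_T^\prime = k_T - 1$ turns the constraint $k_T > 0$ into an unconstrained stars-and-bars count, giving $\binom{N+T-2}{T-1}$ in one step (note this equals the paper's implicit difference, since $\binom{N+T-2}{T-1} = \binom{N+T-2}{N-1}$), and you then verify explicitly that this binomial is $\frac{T^{(N-1)}}{(N-1)!}$ by checking that $\frac{(N+T-2)!}{(T-1)!} = T(T+1)\cdots(T+N-2)$ has exactly $N-1$ factors. Your route buys two things: it avoids manipulating a difference of binomials altogether (your reserve cross-check via Pascal's rule is precisely the paper's computation), and it spells out the rising-factorial identity that the paper leaves to the reader. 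What the paper's route buys is symmetry with its surrounding development: it explicitly exhibits $\lvert \nfstates \rvert$, the quantity that also drives the transition-reduction bound in Lemma~\ref{lem:transitionreduction} and Theorem~\ref{theorem:reduction}, and it directly mirrors the set identity $Q^\prime = Q \setminus \nfstates$. Both arguments are sound and arrive at the same closed form.
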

\begin{proof}
Observe that there are ${{N + T - 1}\choose{N}}$ ways to assign
$T$ distinguishable phases to $N$ indistinguishable oscillators~\cite{feller1968introduction}.
Since $Q = \states \cup \{\init\}$ and $\states$ is the set of all
possible configurations for oscillators we can see that
$\lvert Q \rvert = {{N + T - 1}\choose{N}} + 1$.
For any non-firing state $\state = \langle k_1, \ldots k_T \rangle \in \nfstates$
we know from Definition~\ref{def:globalstates} that
$\sum_{\Phase=1}^{T} k_\Phase = N$
and from Definition~\ref{def:firingstates} that $k_T = 0$, so it must be the
case that $\sum_{\Phase=1}^{T-1} k_\Phase = N$. That is, there must be
${{N + T - 2}\choose{N}}$ ways to assign $T - 1$ distinguishable phases to
$N$ indistinguishable oscillators, and so 
$\lvert \nfstates \rvert = {{N + T - 2}\choose{N}}$.
From Lemma~\ref{lem:qprimeisfiring} we know that
$Q^\prime = Q \setminus \nfstates$ so it must be the case that
$ 
	\lvert Q^\prime \rvert = \lvert Q \vert - \lvert \nfstates \rvert 
	= 1 + {{N+T-1}\choose{N}} - {{N+T-2}\choose{N}} 
	= 1 + \frac{T^{(N-1)}}{(N-1)!}.
$ 
\qed
\end{proof}

\paragraph{Transition Matrix Reduction.}

Here we describe the reduction in the number of non-zero transitions
in the model. We ilustrate how initial transitions to non-firing
states are removed by using a simple example, and then describe how
we remove transitions from firing states to any successor non-firing
states.. 

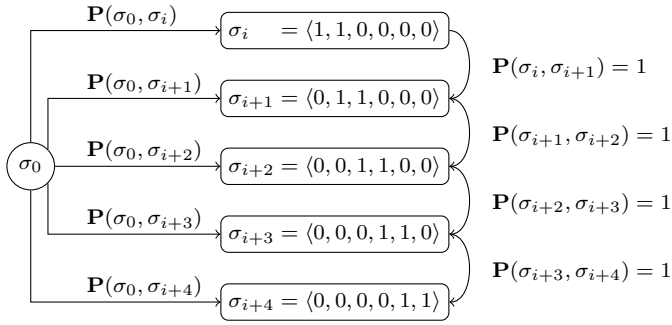
\begin{figure}[bt]
	\begin{center}
		\begin{tikzpicture}
			\tikzstyle{arrow} = [->];
			\tikzstyle{dasharrow} = [->, dashed];
			\def \dist {4.0cm}
			\def \sep {0.9cm};
			\def \phpos {-2.5cm};
			\def \pvpos {0.2cm};
			\def \pphpos {3.25cm};
			\def \ppvpos {-0.5cm};
			\def \sone {$\state_{i\phantom{+1}} = \langle 1, 1, 0, 0, 0, 0 \rangle$};
			\def \stwo {$\state_{i+1} = \langle 0, 1, 1, 0, 0, 0 \rangle$};
			\def \sthree {$\state_{i+2} = \langle 0, 0, 1, 1, 0, 0 \rangle$};
			\def \sfour {$\state_{i+3} = \langle 0, 0, 0, 1, 1, 0 \rangle$};
			\def \sfive {$\state_{i+4} = \langle 0, 0, 0, 0, 1, 1 \rangle$};
			
			\node (init) [draw, circle] at (0, 0) {$\init$};
			\node (s1) [rounded corners = 0.1cm, draw] at (\dist, {\sep * 2}) {\sone};
			\node (s2) [rounded corners = 0.1cm, draw] at (\dist, \sep) {\stwo};
			\node (s3) [rounded corners = 0.1cm, draw] at (\dist, 0) {\sthree};
			\node (s4) [rounded corners = 0.1cm, draw] at (\dist, {-\sep}) {\sfour};
			\node (s5) [rounded corners = 0.1cm, draw] at (\dist, {-(\sep * 2)}) {\sfive};
			\draw [arrow] (init.north)|-(s1.west);
			\draw [arrow] (init.north east)|-(s2.west);
			\draw [arrow] (init.east)|-(s3.west);
			\draw [arrow] (init.south east)|-(s4.west);
			\draw [arrow] (init.south)|-(s5.west);
			\node (s1p) at ([shift={(\phpos, \pvpos)}]s1) {$\prmatrix(\init, \state_{i})\phantom{+l}$};
			\node (s2p) at ([shift={(\phpos, \pvpos)}]s2) {$\prmatrix(\init, \state_{i+1})$};
			\node (s3p) at ([shift={(\phpos, \pvpos)}]s3) {$\prmatrix(\init, \state_{i+2})$};
			\node (s4p) at ([shift={(\phpos, \pvpos)}]s4) {$\prmatrix(\init, \state_{i+3})$};
			\node (s5p) at ([shift={(\phpos, \pvpos)}]s5) {$\prmatrix(\init, \state_{i+4})$};
			\draw [arrow] (s1.east) to [out = 0, in = 0] (s2.east);
			\draw [arrow] (s2.east) to [out = 0, in = 0] (s3.east);
			\draw [arrow] (s3.east) to [out = 0, in = 0] (s4.east);
			\draw [arrow] (s4.east) to [out = 0, in = 0] (s5.east);
			\node (s12p) at ([shift={(\pphpos, \ppvpos)}]s1) {$\prmatrix(\state_{i}, \state_{i+1})=1\phantom{+l}$};
			\node (s23p) at ([shift={(\pphpos, \ppvpos)}]s2) {$\prmatrix(\state_{i+1}, \state_{i+2})=1$};
			\node (s34p) at ([shift={(\pphpos, \ppvpos)}]s3) {$\prmatrix(\state_{i+2}, \state_{i+3})=1$};
			\node (s45p) at ([shift={(\pphpos, \ppvpos)}]s4) {$\prmatrix(\state_{i+3}, \state_{i+4})=1$};		
		\end{tikzpicture}
	\end{center}
	\caption{Five possible initial configurations in $Q$ for $N = 2$, $T = 6$.}
	\label{fig:initialstates}	
\end{figure}

Figure~\ref{fig:initialstates} shows five possible initial
configurations $\state_i, \ldots, \state_{i+4} \in Q$ for $N = 2$ oscillators with
$T = 6$ values for phase, where a transition is taken from $\init$ to each $\state_k$
with probability $\prmatrix(\init, \state_k)$.
Any infinite run of $D$ where a transition is taken
from $\init$ to one of the configured states $\state_i, \ldots, \state_{i+3}$ will
pass through $\state_{i+4}$, since all
transitions $(\state_{i+k}, \state_{i+k+1})$ for $0 \le k \le 3$
are taken deterministically.
Also, observe that states $\state_i, \ldots, \state_{i+3}$ are not in
$Q^\prime$, since $\state_{i+4}$ is reachable from each by taking some
number of deterministic transitions. We therefore set the probability
of moving from $\init$ to $\state_{i+4}$ in $\prmatrix^\prime$ to be
the sum of the probabilities of moving from $\init$ to $\state_{i+4}$
and each of its predecessors in $\prmatrix$. Generally, given a
state $\state \in Q^\prime$ where $\state \ne \init$, we set
$\prmatrix^\prime(\init, \state) =
	\prmatrix(\init, \state) +
	\sum_{\state^\prime \in \predstates(\state)}
		\prmatrix(\init, \state^\prime)$.

We now define how we calculate the probability with which a transition is
taken from a firing state to each of its
possible successors. For each firing state $\state \in Q^\prime$ we
consider each possible successor $\state^\prime \in \nextstatesskip(\state)$
of $\state$ and define $\failvecs_{\state \to \state^\prime}$ to be the set
of all possible failure vectors for $\state$ for which the successor of $\state$
is $\state^\prime$, given by
$\failvecs_{\state \to \state^\prime} = 
	\{\fvec \in \failvecs_\state \mid \sucskip(\suc(\state, \fvec)) = \state^\prime\}$.
We then set the probability with which a transition from $\state$ to $\state^\prime$
is taken to
$
\prmatrix^\prime(\state, \state^\prime) = \sum_{\fvec \in \failvecs_{\state \to \state^\prime}} \pfailvec(\state, \fvec).
$

\begin{lemma}
\label{lem:transitionreduction}
For a population model $\psystemfull$, the corresponding DTMC
$D = (Q, \init, \prmatrix, L)$ with
$Q = \{\init\} \cup \states$,
and its reduction
$D^\prime(\psystem) = (Q^\prime, \init, \prmatrix^\prime, L^\prime)$,
the transitions in $\prmatrix$ are reduced in $\prmatrix^\prime$ such that 
$\lvert \prmatrix^\prime \rvert \le \lvert \prmatrix \rvert - 2 \lvert \nfstates \rvert$	
\end{lemma}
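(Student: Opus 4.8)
The plan is to partition the non-zero transitions of $\prmatrix$ according to their source state --- the initial state $\init$, the firing states $\fstates$, and the non-firing states $\nfstates$ --- then count each block in both $\prmatrix$ and $\prmatrix'$, and show that the net effect of the reduction is to delete at least $2\lvert\nfstates\rvert$ transitions.

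First I would count the three blocks in the original matrix. Because the multinomial coefficient $\binom{N}{k_1,\ldots,k_T}$ is strictly positive for every tuple in $\states$, every state receives a non-zero initial transition, so there are exactly $\lvert\states\rvert = \lvert\fstates\rvert + \lvert\nfstates\rvert$ transitions out of $\init$. By Lemma~\ref{lem:onesuccessor} and Corollary~\ref{cor:determinstictransition} each non-firing state has exactly one (deterministic) outgoing transition, contributing $\lvert\nfstates\rvert$ transitions. Writing $B = \sum_{\state\in\fstates}\lvert\nextstates(\state)\rvert$ for the transitions leaving firing states, we obtain $\lvert\prmatrix\rvert = \lvert\fstates\rvert + B + 2\lvert\nfstates\rvert$.

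Next I would count the same blocks in $\prmatrix'$. Since $Q' = \fstates \cup \{\init\}$ by Lemma~\ref{lem:qprimeisfiring}, non-firing states contribute nothing, which removes the $\lvert\nfstates\rvert$ deterministic outgoing transitions. The redefined initial transition $\prmatrix'(\init,\state) = \prmatrix(\init,\state) + \sum_{\state'\in\predstates(\state)}\prmatrix(\init,\state')$ is positive for exactly the firing states, so there are $\lvert\fstates\rvert$ transitions out of $\init$, removing a further $\lvert\nfstates\rvert$. For the firing-state block I would use that $\nextstatesskip(\state) = \{\sucskip(s) \mid s \in \nextstates(\state)\}$ is the image of $\nextstates(\state)$ under the function $\sucskip$, hence $\lvert\nextstatesskip(\state)\rvert \le \lvert\nextstates(\state)\rvert$; summing over $\fstates$ gives $B' \le B$. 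Putting the pieces together, $\lvert\prmatrix'\rvert = \lvert\fstates\rvert + B'$, so $\lvert\prmatrix\rvert - \lvert\prmatrix'\rvert = 2\lvert\nfstates\rvert + (B - B') \ge 2\lvert\nfstates\rvert$, which is the claim.

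The main obstacle is the firing-state block: I must be sure the redirection through $\sucskip$ never introduces a transition that was not present before. This follows because each reduced target is the $\sucskip$-image of an original target, so redirection can only merge several successors into one --- when distinct failure vectors yield successor states that share a common deterministic descendant --- but never create a genuinely new edge. This merging is precisely the source of the slack that turns the identity into the inequality $\le$; I would remark that the equality $\lvert\prmatrix'\rvert = \lvert\prmatrix\rvert - 2\lvert\nfstates\rvert$ holds exactly when no such merging occurs.
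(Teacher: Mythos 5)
Your proof is correct and follows essentially the same route as the paper: both identify the $\lvert\nfstates\rvert$ removed initial transitions and the $\lvert\nfstates\rvert$ removed deterministic outgoing transitions (via Lemmas~\ref{lem:qprimeisfiring} and~\ref{lem:onesuccessor}), and both argue that the reduction introduces no new transitions. Your explicit partition-by-source count, and the observation that $\nextstatesskip(\state)$ is the image of $\nextstates(\state)$ under $\sucskip$, simply make rigorous what the paper states as ``no additional transitions are introduced in the reduction.''
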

\begin{proof}
From Lemma~\ref{lem:qprimeisfiring} we know that
$\lvert Q^\prime \rvert = \lvert Q \setminus \nfstates \rvert$, and hence
that $\lvert \nfstates \rvert$ transitions from $\init$ to non-firing
states are not in $\prmatrix^\prime$, and from Lemma~\ref{lem:onesuccessor}
we also know that there is one transition from each non-firing
state to its unique successor state that is not in $\prmatrix^\prime$.
Since no additional transitions are introduced in the reduction it is clear that
$\lvert \prmatrix^\prime \rvert \le \lvert \prmatrix \rvert - 2 \lvert \nfstates \rvert$.
\qed
\end{proof}

\begin{lemma}
\label{lem:reachabilityproperties}
For every population model DTMC $D = (Q, \init, \prmatrix, L)$,
unbounded-time reachability
properties with respect to synchronised firing states in $D$ are preserved
in its reduction $D^\prime$.
\end{lemma}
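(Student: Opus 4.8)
The property at issue is a PCTL formula of the form $\pctlp_{\bowtie \lambda}[\pctlsometime \synchlab]$, where $\synchlab$ labels exactly the synchronised firing states, that is, the states $\langle k_1, \ldots, k_T \rangle$ with $k_T = N$. The plan is to show that the probability measure of the set of paths that eventually reach such a state is identical in $D$ and in $D^\prime$. The first observation is that any synchronised firing state has $k_T = N > 0$ and is therefore a firing state by Definition~\ref{def:firingstates}; by Lemma~\ref{lem:qprimeisfiring} we have $Q^\prime = \fstates \cup \{\init\}$, so every synchronised firing state survives the reduction and the target set of the reachability property is left unchanged.

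Next I would set up a correspondence between the paths of $D$ and those of $D^\prime$. By Lemma~\ref{lem:onesuccessor} and Corollary~\ref{cor:determinstictransition}, every non-firing state has a single successor reached with probability $1$, and applying this unique successor repeatedly shows that any non-firing state $\state$ reaches the firing state $\sucskip(\state)$ after exactly $T - \delta_\state$ deterministic steps; hence no run of $D$ remains trapped among non-firing states forever. I would then define a projection $\rho$ from paths of $D$ to paths of $D^\prime$ that deletes every non-firing state. Concretely, I decompose an arbitrary path of $D$ into maximal blocks, each consisting of a firing state followed by the (possibly empty) deterministic chain of non-firing states leading to the next firing state, so that $\rho(\path)$ is precisely the subsequence of firing states visited by $\path$, prefixed by $\init$.

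The core step is to verify that $\rho$ preserves the probability measure on cylinder sets. Because the transitions out of non-firing states all have probability $1$, the probability of traversing the block from a firing state $\state$ through its non-firing chain to the next firing state $\state^\prime$ equals $\pfailvec(\state, \fvec)$ for the unique failure vector $\fvec$ realising that block. Summing over all $\fvec \in \failvecs_{\state \to \state^\prime}$ that skip to $\state^\prime$ yields exactly $\prmatrix^\prime(\state, \state^\prime) = \sum_{\fvec \in \failvecs_{\state \to \state^\prime}} \pfailvec(\state, \fvec)$, the reduced transition probability. The same accounting applies to the initial step: the total probability in $D$ of entering $\state^\prime$ either directly from $\init$ or via one of its deterministic predecessors collapses to $\prmatrix^\prime(\init, \state^\prime) = \prmatrix(\init, \state^\prime) + \sum_{\state^{\prime\prime} \in \predstates(\state^\prime)} \prmatrix(\init, \state^{\prime\prime})$. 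Consequently the measure of each cylinder set of $D^\prime$ equals the measure of its $\rho$-preimage, and this extends to the generated $\sigma$-algebra.

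Finally, since non-firing states are never synchronised firing states and synchronised firing states are never deleted by $\rho$, a path $\path$ of $D$ satisfies $\pctlsometime \synchlab$ if, and only if, $\rho(\path)$ does. Combined with the measure-preservation of $\rho$, this gives $\pmeasure\{\path \in \paths^D \mid \path \models \pctlsometime \synchlab\} = \pmeasure\{\path \in \paths^{D^\prime} \mid \path \models \pctlsometime \synchlab\}$, so $\init \models \pctlp_{\bowtie \lambda}[\pctlsometime \synchlab]$ holds in $D$ exactly when it holds in $D^\prime$. I expect the main obstacle to be the rigorous verification that $\rho$ is measure-preserving on the $\sigma$-algebra of cylinder sets, resting on the sum-over-failure-vectors identity above; I would also stress that the argument is confined to \emph{unbounded} reachability, since $\rho$ alters the number of steps and therefore does not preserve bounded properties $\pctlsometime^{\leqslant k} \synchlab$.
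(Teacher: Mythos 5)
Your proposal is correct and follows essentially the same route as the paper's proof: both delete the deterministic non-firing chains to obtain a correspondence between paths of $D$ and paths of $D^\prime$, and both establish measure preservation by matching the summed probabilities of initial prefixes through deterministic predecessors against $\prmatrix^\prime(\init, \cdot)$ and the firing-to-firing block probabilities against $\sum_{\fvec \in \failvecs_{\state \to \state^\prime}} \pfailvec(\state, \fvec)$. If anything, your treatment of the inner firing-to-firing blocks is slightly more explicit than the paper's, which dispatches them with the remark that only deterministic transitions were removed.
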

\begin{proof}
We want to show that for every
$\bowtie \ \in \{<,\leqslant,\geqslant,>\}$ and every $\lambda \in [0, 1]$,
if
$ 
\init \models \pctlp_{\bowtie \lambda}[\lsometime \ \synchlab]
$ 
holds in $D$ then it also holds in $D^\prime$.
From the semantics of PCTL over a DTMC we have
\begin{align*}
\init \models \pctlp_{\bowtie \lambda}[\lsometime \ \synchlab]
	\quad \Leftrightarrow \quad
		\pmeasure\{\path \in \paths^D \mid
			\path \models \lsometime \ \synchlab\}
			\bowtie \lambda.
\end{align*}
Therefore we need to show that
\begin{align*}
		\pmeasure^D\{\path \in \paths^D \mid
			\path \models \lsometime \ \synchlab\}
		=_\state
		\pmeasure^{D^\prime}\{\path^\prime \in \paths^{D^\prime} \mid
			\path^\prime \models \lsometime \ \synchlab\},			
\end{align*}
where $\pmeasure^D$ and $\pmeasure^{D^\prime}$
denote the probability measures with respect to the sets of infinite
paths from $\init$ in $D$ and $D^\prime$ respectively.

Given a firing state $\fstate \in Q$ we denote by
$\paths^{D}_{\fstate}$ the set of all infinite paths of $D$ starting in
$\init$ where the first firing state reached   along that path is
$\fstate$. All such sets for all firing states in 
$Q$ form a partition, such that
$\bigcup_{\fstate \in \fstates} \paths^{D}_{\fstate} = \paths^{D}$.
That is, for all firing states $\fstate, \fstateprime \in Q$ where
$\fstate \ne \fstateprime$ we have that
$\paths^{D}_{\fstate} \cap \paths^{D}_{\fstateprime} = \emptyset$.

Now observe that any infinite path $\path$ of $D$
can be written in the form
$\path = \init \nfpath_{1} \fstate_1 \nfpath_{2} \fstate_2 \cdots$
where $\fstate_i$ is the $i^{th}$ firing state in the path and each
$\nfpath_i = \state_i^1 \state_i^2 \cdots \state_i^{k_i}$ is a possibly
empty sequence of $k_i$ non-firing states. Then for every such path
in $D$ there is a corresponding path $\path^\prime$ of $D^\prime$ without
non-firing states, and of the form 
$\path^\prime = \init \fstate_1 \fstate_2 \fstate_3 \cdots$,
as for any $i$ we have $\state_i^j \in \predstates(\fstate_i)$ for all
$1 \le j \le k^i$.
As only deterministic transitions have been removed
in $D^\prime$ we can see that
$\pmeasure^{D}\{\fstate_1 \nfpath_{2} \fstate_2 \cdots\} =
		\pmeasure^{D^\prime}\{\fstate_1 \fstate_2 \fstate_3 \cdots\}$.
Hence, we only have to consider the finite paths from $\init$ to $\fstate_1$.
To that end, observe that there are $\left|\predstates(\fstate_1)\right|$
possible prefixes for each path from $\init$ to $\fstate_1$
where the initial transition
is taken from $\init$ to some non-firing predecessor of $\fstate_1$, plus
the single prefix where the initial transition is taken to $\fstate_1$
itself. Overall there are exactly $\left|\predstates(\fstate_1)\right| + 1$ distinct
finite prefixes that have $\path^\prime$ as their corresponding path in $D^\prime$.
We denote the set of these prefixes for a path $\path^\prime$ in $D^\prime$
by $\pref(\path^\prime)$.
Since the measure of each finite prefix extends to a measure over the
set of infinite paths sharing that prefix, it is sufficient
to show that the sum of the probabilities for these finite prefixes
is equal to the probability of the unique prefix $\state_0, \fstate_1$
of $\path^\prime$, that is
$\pmeasure^{D} \pref(\path^\prime) = \pmeasure^{D^\prime} \{\init, \fstate_1\}$.
We can then write
\begin{align*}
	\pmeasure^{D} \pref(\path^\prime)
	& =	\prmatrix(\init, \fstate_1) + 
		\sum_{\state^\prime \in \predstates(\fstate_1)} \prmatrix(\init, \state^\prime)
			\cdot 1^{k_{\state^\prime}} \\
	& = \prmatrix(\init, \fstate_1) + 
		\sum_{\state^\prime \in \predstates(\fstate_1)} \prmatrix(\init, \state^\prime),
\end{align*}
where $k_{\state^\prime}$ is the number of deterministic transitions that lead
from $\state^\prime$ to $\fstate_1$ in $D$.
Now recall that for any $\state \in Q^\prime \setminus \{\init\}$ we have
\begin{align*}
\prmatrix^\prime(\init, \state) = 
	\prmatrix(\init, \state) +
	\sum_{\state^\prime \in \predstates(\state)} \prmatrix(\init, \state).
\end{align*}
So we have shown that
$\pmeasure^{D} \pref(\path^\prime)  =
	\pmeasure^{D^\prime} \{\init, \fstate_1\}$
and the lemma is proved.
\qed
\end{proof}

\begin{proof}[of Theorem~\ref{theorem:reduction}]
Follows from Lemmas~\ref{lem:statereduction} and~\ref{lem:transitionreduction}
for the reduction of states and
transitions respectively, and from Lemma~\ref{lem:reachabilityproperties}
for the preservation of unbounded time reachability properties.
\end{proof}

\subsection{Empirical Analysis}

Table~\ref{tab:reductionstates} shows the number of reachable states
and transitions of the DTMC, and corresponding reduction,
for different population sizes ($N$) and oscillation cycle
lengths ($T$), using the Mirollo and Strogatz model of
synchronisation~\cite{mirollo1990synchronization}.
The number of reachable
states is stable under changes to the parameters $R$,
$\epsilon$, and $\mu$, since every
possible firing state is always reachable from the initial state.
For the results shown here the parameters were arbitrarily set to
$R = 1$, $\epsilon = 0.1$. The underlying
graph of the DTMC, and hence the number of transitions, is stable
under changes to the parameter $\mu$, and is not if interest
here.

\begin{table}[tb]
	\begin{center}
		\caption{Reduction in state space and transitions for different 
			model instances.}
		\label{tab:reductionstates}
		\setlength{\tabcolsep}{8.5pt}
		\begin{tabular}{llrrrrrr}
			\hline
			& & \multicolumn{2}{c}{$D$} & \multicolumn{2}{c}{$D^\prime$}
			& \multicolumn{2}{c}{Reduction (\%)} \\
			$N$ & $T$ & States & Transitions & States & Transitions & States & Transitions \\
			\hline
			3 & 6 & 113 & 188 & 22 & 52 & 80.5 & 72.3 \\
			5 & 6 & 505 & 1030 & 127 & 389 & 74.9 & 62.2 \\
			8 & 6 & 2575 & 7001 & 793 & 3154 & 69.2 & 54.9 \\
			3 & 8 & 241 & 410 & 37 & 97 & 84.6 & 76.3 \\
			5 & 8 & 1585 & 3250 & 331 & 1097 & 79.1 & 66.2\\
			8 & 8 & 12871 & 34615 & 3433 & 14519 & 73.3 & 58.1\\
			3 & 10 & 441 & 752 & 56 & 156 & 87.3 & 79.3 \\
			5 & 10 & 4005 & 8114 & 716 & 2484 & 82.1 & 69.4 \\
			8 & 10 & 48621 & 128936 & 11441 & 50883 & 76.5 & 60.5\\
			\hline
		\end{tabular}
	\end{center}
\end{table}

Table~\ref{tab:reductiontrans} shows the number of transitions of the DTMC,
and corresponding reduction, for various population model instances, and again
uses the Mirollo and Strogatz model of synchronisation. 
Increasing the length of the refractory period ($R$)
results in an increase in the
reduction of transitions in the model. A longer refractory period leads
to more firing states where the firing of a group of oscillators
is ignored. This results in successor states having oscillators with lower
values for phase, and hence a longer sequence of deterministic transitions
(later removed in the reduction) leading to the next firing state.
Conversely, increasing the strength of the coupling between oscillators
($\epsilon$) results in a decrease in the reduction of transitions
in the model. For the Mirollo and Strogatz model of synchronisation used here,
increasing the coupling strength results in a linear increase in the pertubation
to phase induced by the firing of an oscillator. This results in successor
states of firing states having oscillators with higher values for phase, and
hence a shorter sequence of deterministic transitions leading to the next
firing state.  

\begin{table}[tb]
	\begin{center}
		\caption{Reduction in transitions for different population
			model instances.}
		\label{tab:reductiontrans}
		\setlength{\tabcolsep}{15.5pt}
		\begin{tabular}{llllllr}
			\hline
			& & & & \multicolumn{2}{c}{Transitions} \\
			$N$ & $T$ & $R$ & $\epsilon$ & $D$ & $D^\prime$ & Reduction (\%)  \\
			\hline
			\color{gray} 5 & \color{gray} 10 & 1 & \color{gray} 0.1 & 8114 & 2484 & 69.4 \\
			\color{gray} 5 & \color{gray} 10 & 3 & \color{gray} 0.1 & 7928 & 2391 & 69.8 \\
			\color{gray} 5 & \color{gray} 10 & 5 & \color{gray} 0.1 & 7568 & 2211 & 70.8 \\
			\color{gray} 5 & \color{gray} 10 & 7 & \color{gray} 0.1 & 6976 & 1915 & 72.5 \\
			\color{gray} 5 & \color{gray} 10 & 9 & \color{gray} 0.1 & 6006 & 1430 & 76.2 \\
			\color{gray} 5 & \color{gray} 10 & \color{gray} 1 & 0.01 & 6006 & 1430 & 76.2 \\
			\color{gray} 5 & \color{gray} 10 & \color{gray} 1 & 0.05 & 6426 & 1640 & 74.5 \\
			\color{gray} 5 & \color{gray} 10 & \color{gray} 1 & 0.1 & 8114 & 2484 & 69.4 \\
			\color{gray} 5 & \color{gray} 10 & \color{gray} 1 & 0.25 & 8950 & 2902 & 67.6 \\
			\color{gray} 5 & \color{gray} 10 & \color{gray} 1 & 0.5 & 9382 & 3118 & 66.7 \\
			\hline
		\end{tabular}
	\end{center}
\end{table}

\subsection{Reward Structures for Reductions}

While probabilistic reachability properties allow us to quantitatively
analyse models with respect to the likelihood of reaching a synchronised
state, they do not allow us to reason about other properties of interest, for
instance the expected time taken for the network to
synchronise~\cite{gainer2017investigating}, or the expected
energy consumption of the network~\cite{gainer2017power}. Therefore, we will
often want to augment the DTMC corresponding to a population model with rewards. 
We do this by annotating states and transitions with real-valued rewards 
(respectively costs, should values be negative) that are
awarded when states are visited, or transitions taken.

\begin{definition}
Given a DTMC $D = (Q, \init, \prmatrix, L)$ a \emph{reward structure}
for $D$ is a pair $\rstruct = (\rstate, \rtrans)$ where
$\rstate : Q \to \reals$ and $\rtrans : Q \times Q \to \reals$ are the
\emph{state reward} and \emph{state transition} functions that
respectively map real valued rewards to states and transitions in $D$.
\end{definition}

For any finite path $\path = \state_0 \cdots \state_k$  of $D$ we define the total
reward accumulated along that path up to, but not including, $\state_k$ as
\begin{align}
\label{eq:tot}
\rtotal_{\rstruct}(\state_0 \cdots \state_k) = 
	\sum_{i=0}^{k-1}
		\left(
			\rstate(\state_i) +
			\rtrans(\state_i, \state_{i+1})
		\right).
\end{align}

Given a DTMC $D = (Q, \init, \prmatrix, L)$ augmented with a reward structure
$\rstruct$, and some state $\state \in Q$,
we will often want to reason about the reward that is accumulated
along a path $\path = \init \state_1 \state_2 \cdots \in \paths$
that eventually passes
through some set of target states $\targets \subset Q$. We first define a
random variable over the set of infinite paths
$\revar_\targets : \paths \to \reals \cup \{\infty\}$. Given the
set $\path_\targets = \{j \mid \state_j \in \targets\}$ of indices of states in
$\path$ that are in $\targets$ we define the random variable
\begin{align*}
\revar_\targets(\path) =
\begin{cases}
	\begin{array}{ll}
	\infty	&
		\text{   if } \path_\targets = \emptyset \\
	\rtotal_{\rstruct}(\state_0 \cdots \state_k)	
		& \text{   otherwise, where } k = \min \path_\targets,
	\end{array}	
\end{cases}
\end{align*}
and define the expectation of $\revar_\targets$ with respect to $\pmeasure_\state$ by
\begin{align*}
\expect[\revar_\targets] = \int_{\path \in \paths} \revar_\targets(\path)
	\ d \pmeasure
=
\sum_{\path \in \paths} \revar_\targets(\path) \pmeasure\{\path\}.
\end{align*}

The logic of PCTL can be extended to include reward properties by introducing
the state formula $\pctlr_{\bowtie r}[\pctlsometime \ \Psi]$, where
$\bowtie \in \{<,\leqslant,\geqslant,>\}$ and
$r \in \reals$~\cite{kwiatkowska2007stochastic}. Given a state
$\state \in Q$, a real value $r$, and a PCTL path formula $\Psi$, the semantics
of this formula is given by
\[
\state \models \pctlr_{\bowtie r}[\pctlsometime \ \Psi]
		\Leftrightarrow \expect[\revar_{\sat({\Psi})}] \bowtie r,
\]
where $\sat(\Phi)$ denotes the set of states in $Q$ that satisfy $\Phi$.

\begin{theorem}
For every population model $\psystem$ with corresponding DTMC
$D = (Q, \init, \prmatrix, L)$ and a reduction
$D^\prime = (Q^\prime, \init, \prmatrix^\prime, L^\prime)$ of $D$,
and for every reward structure
$\rstruct = (\rstate, \rtrans)$ for $D$, there is a reward structure
$\rstruct^\prime = (\rstate^\prime, \rtrans^\prime)$ for $D^\prime$
such that unbounded-time reachability reward properties with respect
to synchronised firing states in $D$ are preserved in
$D^\prime$.
\label{theorem:reductionreward}
\end{theorem}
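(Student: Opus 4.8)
The plan is to reduce the claim to a single equality of expected reachability rewards, namely $\expect^D[\revar_{\targets}] = \expect^{D^\prime}[\revar_{\targets}]$ with $\targets = \sat(\synchlab)$, from which preservation of every property $\pctlr_{\bowtie r}[\pctlsometime \ \synchlab]$ is immediate by the reward semantics. The structural facts needed are exactly those already used for Lemma~\ref{lem:reachabilityproperties}: every infinite path of $D$ is obtained from a path of $D^\prime$ by inserting maximal blocks of non-firing states, and by Lemma~\ref{lem:onesuccessor} with Corollary~\ref{cor:determinstictransition} each such block is a \emph{deterministic corridor} of probability $1$ ending in a firing state. Since a synchronised state has $k_T = N$ and hence $k_T > 0$, every target is a firing state and survives the reduction (Lemma~\ref{lem:qprimeisfiring}), so the first target reached along a path of $D$ is the first target reached along its image in $D^\prime$.

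To build $\rstruct^\prime$, I would keep state rewards unchanged on the surviving states, $\rstate^\prime(\state) = \rstate(\state)$ for $\state \in Q^\prime = \fstates \cup \{\init\}$, and push the rewards of whole corridors into the transition rewards. For a state $\state^\prime$ with $\sucskip(\state^\prime) = \fstate$, let $c(\state^\prime)$ be the reward accumulated along the unique deterministic chain $\state^\prime \to \cdots \to \fstate$, counting the state rewards of every state strictly before $\fstate$ and every transition reward on the chain, with $c(\fstate) = 0$ when $\fstate$ is already firing. For each transition of $D^\prime$ between firing states (so $\prmatrix^\prime(\fstate, \fstate^\prime) > 0$) I take the probability-weighted average over the failure vectors realising it,
\begin{align*}
\rtrans^\prime(\fstate, \fstate^\prime) = \frac{1}{\prmatrix^\prime(\fstate, \fstate^\prime)} \sum_{\fvec \in \failvecs_{\fstate \to \fstate^\prime}} \pfailvec(\fstate, \fvec)\Bigl(\rtrans(\fstate, \suc(\fstate, \fvec)) + c(\suc(\fstate, \fvec))\Bigr),
\end{align*}
and for the initial transitions I average the direct route against the routes through the non-firing predecessors of Definition~\ref{def:predecessors} (recalling $c(\fstate) = 0$),
\begin{align*}
\rtrans^\prime(\init, \fstate) = \frac{1}{\prmatrix^\prime(\init, \fstate)}\Bigl(\prmatrix(\init, \fstate)\rtrans(\init, \fstate) + \sum_{\state^\prime \in \predstates(\fstate)} \prmatrix(\init, \state^\prime)\bigl(\rtrans(\init, \state^\prime) + c(\state^\prime)\bigr)\Bigr).
\end{align*}

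The engine of the proof is the per-segment identity obtained by clearing these denominators: for every macro-transition the probability-weighted micro-reward equals the macro-reward, $\sum_{\fvec} \pfailvec(\fstate, \fvec)\bigl(\rstate(\fstate) + \rtrans(\fstate, \suc(\fstate, \fvec)) + c(\suc(\fstate, \fvec))\bigr) = \prmatrix^\prime(\fstate, \fstate^\prime)\bigl(\rstate^\prime(\fstate) + \rtrans^\prime(\fstate, \fstate^\prime)\bigr)$, with the analogous equality at $\init$. This is precisely what allows one transition of $D^\prime$ to stand in for a whole bundle of corridors that may have \emph{different lengths}, and it uses only $\rstate^\prime(\fstate) = \rstate(\fstate)$ and the definition of $\prmatrix^\prime$ as the sum of the $\pfailvec(\fstate, \fvec)$ over $\failvecs_{\fstate \to \fstate^\prime}$.

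I would conclude by partitioning the paths of $D$ according to their image macro-path $\sigma^\prime = \init\,\fstate_1 \cdots \fstate_n$ in $D^\prime$, where $\fstate_n$ is the first target. Because the path measure is multiplicative over segments while $\rtotal_{\rstruct}$ is additive over segments, the expected reward over the preimage of $\sigma^\prime$ factorises as a sum over segments of (the product of the other segments' macro-probabilities) times (that segment's weighted micro-reward); applying the identity above to each segment and recombining yields the preimage equality $\sum_{\sigma \mapsto \sigma^\prime} \pmeasure^D\{\sigma\}\,\rtotal_{\rstruct}(\sigma) = \pmeasure^{D^\prime}\{\sigma^\prime\}\,\rtotal_{\rstruct^\prime}(\sigma^\prime)$, with the measure bookkeeping (each corridor contributing a factor $1$) being exactly that of Lemma~\ref{lem:reachabilityproperties}. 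Summing over all macro-paths reaching a target gives $\expect^D[\revar_{\targets}] = \expect^{D^\prime}[\revar_{\targets}]$; paths that never reach a target carry equal measure in $D$ and $D^\prime$ by Lemma~\ref{lem:reachabilityproperties}, so both expectations are $\infty$ together or both reduce to the sum over reaching paths. I expect the main obstacle to be this averaging step: distinct failure vectors may reach the same firing successor through corridors of different lengths ($\suc(\fstate, \fvec)$ differs while $\sucskip(\suc(\fstate, \fvec))$ agrees), so no path-by-path reward equality holds and one is forced to match only the conditional expectation; similar care is needed at $\init$, where the collapsed predecessors must be weighted by their individual initial-transition probabilities rather than treated uniformly.
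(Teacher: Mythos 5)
Your proposal is correct, and its skeleton is the same as the paper's: keep state rewards on the surviving states, fold the reward of each deterministic corridor into the reward of the macro-transition that replaces it, normalise the initial transition rewards by probability (your formula for $\rtrans^\prime(\init,\fstate)$ is exactly the paper's (\ref{eq:rstructprimeinit})), and prove the expectation equality by partitioning the paths of $D$ according to their image in $D^\prime$, with the measure bookkeeping delegated to Lemma~\ref{lem:reachabilityproperties} and the non-synchronising case giving $\infty$ on both sides. Where you genuinely diverge is at transitions between firing states: you define $\rtrans^\prime(\fstate,\fstate^\prime)$ as the conditional expectation of the micro-reward over $\failvecs_{\fstate\to\fstate^\prime}$, because you suspect that distinct failure vectors realising the same macro-transition may pass through corridors of different lengths, so that only an in-expectation (not path-by-path) matching is possible. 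The paper instead takes a single corridor $\fstate\,\nfstate_1\cdots\nfstate_k\,\fstate^\prime$ and sets $\rtrans^\prime(\fstate,\fstate^\prime)=\rtotal_{\rstruct}(\fstate\,\nfstate_1\cdots\nfstate_k\,\fstate^\prime)$, which yields exact per-path reward equality after the first firing state. Your feared obstacle in fact cannot occur in this model: every successor of a firing state has $k_1\geq k_T>0$ (phase-$T$ oscillators always fire and reset to $1$), whereas on the chain of deterministic predecessors of $\fstate^\prime$ at most one state has a non-zero first component; hence all $\fvec\in\failvecs_{\fstate\to\fstate^\prime}$ give the same successor $\suc(\fstate,\fvec)$, the corridor is unique, and your weighted average collapses to the paper's definition. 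So your construction is a sound generalisation---it would survive reductions in which one macro-transition bundles corridors of different lengths, at the price of a slightly heavier recombination argument---while the paper's simpler definition silently relies on corridor uniqueness, which it never proves. One further point in your favour: your $c(\cdot)$ excludes the source state's reward, whereas the paper's $\rtrans^\prime(\fstate_1,\fstate_2)=\rtotal_{\rstruct}(\fstate_1\cdots\fstate_2)$ literally contains $\rstate(\fstate_1)$, which is then counted a second time as $\rstate^\prime(\fstate_1)$ when rewards are accumulated in $D^\prime$; your bookkeeping avoids this double count.
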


Given a reward structure $\rstruct = (\rstate, \rtrans)$ for $D$ we construct
the corresponding reward structure \(\rstruct^\prime = (\rstate^\prime, \rtrans^\prime)\) as follows:
\begin{itemize}

\item There is no reward for the initial state and we set $\rstate(\init) = 0$.
\item For every firing state $\fstate$ in $Q$ with $\rstate(\fstate) = r$ we set
$\rstate^\prime(\fstate) = r$.

\item For every pair of distinct firing states $\fstate_1, \fstate_2 \in Q^\prime$,
where there is a non-zero transition from $\fstate_1$ to $\fstate_2$ in $D^\prime$,
there is a (possibly empty) sequence $\nfstate_1 \cdots \nfstate_k$ of
$k$ deterministic predecessors of $\fstate_2$ in $Q$ such that $k > 0$ implies
$\prmatrix(\fstate_1, \nfstate_1) > 0$, $\prmatrix(\nfstate_k, \fstate_2) = 1$,
and $\prmatrix(\nfstate_i, \nfstate_{i+1}) = 1$ for $1 \leqslant i < k$.
We set the reward for taking the transition from $\fstate_1$ to
$\fstate_2$ in $D^\prime$ to be
the sum of the rewards that would be accumulated across that sequence by
a path in $D$, formally 
\begin{align*}
\rtrans^\prime(\fstate_1, \fstate_2) =
	\rtotal_{\rstruct}(\fstate_1 \nfstate_1 \cdots \nfstate_k \fstate_2).
\end{align*}

\item For every firing state $\fstate$ in $Q^\prime$ there is a non-zero
transition from the initial state $\init$ to $\fstate$ in $\prmatrix^\prime$.
Therefore, all paths of $D^\prime$ where $\fstate$ is the first firing state
along that path share the same prefix, namely $\init, \fstate$. For paths
of $D$ this is not necessarily the case, since $\fstate$ is the first firing
state not only along the path where the initial transition is taken to
$\fstate$ itself, but also along any path where the initial transition is
taken to a non-firing state from which a sequence of deterministic transitions
leads to $\fstate$ (that state is a deterministic predecessor of $\fstate$).
We therefore set the reward along a path
$\path^\prime = \init \fstate_1 \fstate_2 \cdots$ for taking the initial
transition to $\fstate$ in $D^\prime$ to be the sum of the total rewards
accumulated along all distinct path prefixes of the form
$\init \nfpath \fstate$, normalised by the total probabilitiy of taking any
of these paths, where $\nfpath$ is a possibly empty sequence
of deterministic predecessors of $\fstate$, and where the total reward for
each prefix is weighted by the probability of taking the transitions
along that sequence, 
\begin{align}
\label{eq:rstructprimeinit}
\rtrans^\prime(\init, \fstate) = &
	\frac{
		\sum_{\path_{\mathit{pre}} \in \pref(\path^\prime)}
			\rtotal_{\rstruct}(\path_{\mathit{pre}})	
			\pmeasure^{D}\{\path_{\mathit{pre}}\}
	}
	{\pmeasure^{D^\prime}\{\init \fstate_1\}}
\end{align}
\end{itemize}

\begin{proof}[of Theorem~\ref{theorem:reductionreward}]
We want to show that for every reward structure $\rstruct$
for $D$ and corresponding reward structure
$\rstruct^\prime$ for $D^\prime$, 
every $\bowtie \ \in \{<,\leqslant,\geqslant,>\}$ and every
$r \in \reals$, if
$\init \models \pctlr_{\bowtie r}[\lsometime \ \synchlab]$
holds in $D$ then it also holds in $D^\prime$.
Let $\revar_{\sat({\pctlsometime \synchlab})}$ and
$\revar^\prime_{\sat({\pctlsometime \synchlab})}$ respectively
denote the random variables over $\paths^D(\init)$ and
$\paths^{D^\prime}(\init)$ whose expectations correspond to
$\rstruct$ and $\rstruct^\prime$.
From the semantics of PCTL over a DTMC we have
\begin{align*}
\init \models \pctlr_{\bowtie r}[\lsometime \ \synchlab]
	& \quad \Leftrightarrow 
\expect[\revar_{\sat({\synchlab})}] \bowtie r \\
	& \quad \Leftrightarrow 
\sum_{\path \in \paths} \revar_{\sat({\synchlab})}
\pmeasure^D_{\init}\{\path\} \bowtie r.
\end{align*}
Therefore, we need to show that
\begin{align}
	\label{eq:theorem2equationone}
	\sum_{\path \in \paths^D} \revar_{\sat({\synchlab})}(\path)
	\pmeasure^D\{\path\}
		=
	\sum_{\path^\prime \in \paths^{D^\prime}} \revar^\prime_{\sat({\synchlab})}(\path^\prime)
	\pmeasure^{D^\prime}\{\path^\prime\},
\end{align}
where $\pmeasure^D$ and $\pmeasure^{D^\prime}$
denote the probability measures with respect to the sets of infinite
paths from $\init$ in $D$ and $D^\prime$ respectively.
There are two cases:

Firstly, if there exists some path of $D$ that does not synchronise then by definition
$\revar_{\sat({\synchlab})} = \infty$. Also,
from Lemma~\ref{lem:reachabilityproperties} we know that there is a corresponding path
of $D^\prime$ that does not synchronise,
and hence that $\revar^\prime_{\sat({\synchlab})} = \infty$.
By definition the probability measure of all paths of $D$ and $D^\prime$
are strictly positive. Therefore, all summands
of Equation~\ref{eq:theorem2equationone} are defined,
and the expectation of both 
$\revar_{\sat({\synchlab})}$ and
$\revar^\prime_{\sat({\synchlab})}$ is $\infty$.

Secondly, we consider the case where all possible paths of $D$ and $D^\prime$
synchronise. First we define the function
$\reduce : \paths^D \to \paths^{D^\prime}$ that
maps paths of $D$ to their corresponding path in the reduction $D^\prime$,
\begin{align*}
	\reduce(\state_0 \nfpath_1 \fstate_1 \nfpath_2 \fstate_2 \cdots) =
		\state_0 \fstate_1 \fstate_2 \cdots,
\end{align*}
where $\nfpath_i$ is the (possibly empty) sequence of deterministic
predecessors of the firing state $\fstate_i$.
Let $\reduce^{-1}(\path)$ denote the preimage of $\path$ under $\reduce$.
Then, we can
rewrite the left side of (\ref{eq:theorem2equationone}) to
\begin{align*}
	\sum_{\path^\prime \in \paths^{D^\prime}}
	\sum_{\path \in \reduce^{-1}(\path^\prime)}
		\revar_{\sat({\synchlab})}(\path)
		\pmeasure^{D}\{\path\}.
\end{align*}

For any path $\path$ of $D$ or $D^\prime$ let $\synchprefix(\path)$
be the prefix of that path whose last state is the first firing
state along that path that is in the set $\sat(\synchlab)$.
So we want to show that the following holds for any
path $\path^\prime$ of $D^\prime$, 
\begin{align}
		\sum_{\path \in \reduce^{-1}(\path^\prime)}
			\revar_{\sat({\synchlab})}(\path)
			\pmeasure^{D}\{\path\}
		& = 
		\revar^\prime_{\sat({\synchlab})}(\path^\prime)
			\pmeasure^{D^\prime}\{\path^\prime\} \nonumber \\
		\label{eq:expandvardef}
		\sum_{\path \in \reduce^{-1}(\path^\prime)}
			\rtotal_{\rstruct}(\synchprefix(\path))
			\pmeasure^{D}\{\path\}
		& = 
		\rtotal_{\rstruct^\prime}(\synchprefix(\path^\prime))
			\pmeasure^{D^\prime}\{\path^\prime\}. 
\end{align}

Given some path $\path$ let $\path[i:j]$ denote the sequence
of states in $\path$ from the $i^{th}$ firing state to the
$j^{th}$ firing state along that path (inclusively). The notation
$\path[-:j]$ indicates that no states are removed from the start of
the path i.e. the first state is $\init$, and the notation
$\path[i:-]$ indicates that no states
are removed from the end of the path.
By recalling that
$\pmeasure(\state_0 \state_1 \cdots \state_n) =
	\prod_{i=1}^n \prmatrix(\state_{i-1},\state_i)$ 
we can see that
$\pmeasure(\state_0 \state_1 \cdots \state_n) =
	\pmeasure(\state_0 \cdots \state_i)
	\pmeasure(\state_i \cdots \state_n)$
for any $0 < i < n$.
Also from (\ref{eq:tot}) it is clear that for any reward structure $\rstruct$,
$\rtotal_\rstruct(\init \cdots \state_n) =
	\rtotal_\rstruct(\init \cdots \state_i) +
	\rtotal_\rstruct(\state_i \cdots \state_n)$
holds for all $0 < i < n$.
Now we can rewrite (\ref{eq:expandvardef}) to 
\begin{align}
	\label{eq:splitprandtot}
	\begin{split}
		\sum_{\path \in \reduce^{-1}(\path^\prime)} &
			\left(
				\rtotal_{\rstruct}(\synchprefix(\path)[-:1]) + 
				\rtotal_{\rstruct}(\synchprefix(\path)[1:-])
			\right)		
			\pmeasure^{D}\{\path[-:1]\}
		= \\
		&
			\left(		
				\rtotal_{\rstruct^\prime}(\synchprefix(\path^\prime)[-:1]) + 
				\rtotal_{\rstruct^\prime}(\synchprefix(\path^\prime)[1:-])
			\right)		
			\pmeasure^{D^\prime}\{\path^\prime[-:1]\}.
	\end{split}
\end{align}

By the definition of $\rstruct^\prime$ we can write the right hand
side of (\ref{eq:splitprandtot}) as
\begin{align*}
	\begin{split}
		\left(
			\left(
				\frac
				{	
					\sum_{\path_{\mathit{pre}} \in \pref(\path^\prime)}
						\rtotal_{\rstruct}(\path_{\mathit{pre}})	
					\pmeasure^{D}\{\path_{\mathit{pre}}\}
				}
				{\pmeasure^{D^\prime}\{\path^\prime[-:1]\}}
			\right)
			+ 
			\rtotal_{\rstruct^\prime}(\synchprefix(\path^\prime)[1:-])
		\right)		
		\pmeasure^{D^\prime}\{\path^\prime[-:1]\}
		= \\
		\sum_{\path_{\mathit{pre}} \in \pref(\path^\prime)}
		\left(		
			\rtotal_{\rstruct}(\path_{\mathit{pre}})	
			\pmeasure^{D}\{\path_{\mathit{pre}}\}
		\right)
		+	
		\rtotal_{\rstruct^\prime}(\synchprefix(\path^\prime)[1:-])
		\pmeasure^{D^\prime}\{\path^\prime[-:1]\}.
	\end{split}
\end{align*}
From Lemma~\ref{lem:reachabilityproperties} we know that
\begin{align*}
	\pmeasure^{D^\prime}\{\path^\prime[-:1]\} = 
	\pmeasure^D \pref(\path^\prime) =
	\sum_{\path_{\mathit{pre}} \in \pref(\path^\prime)}
		\pmeasure^D\{\path_{\mathit{pre}}\},
\end{align*}
and hence obtain
\begin{align}
	\label{eq:factorise}
	\begin{split}
		\sum_{\path_{\mathit{pre}} \in \pref(\path^\prime)} &
		\left(		
			\rtotal_{\rstruct}(\path_{\mathit{pre}})	
			\pmeasure^{D}\{\path_{\mathit{pre}}\}
		\right)
		+
		\sum_{\path_{\mathit{pre}} \in \pref(\path^\prime)}
			\rtotal_{\rstruct^\prime}(\synchprefix(\path^\prime)[1:-])
		\pmeasure^{D}\{\path_{\mathit{pre}}\}
		= \\
		\sum_{\path_{\mathit{pre}} \in \pref(\path^\prime)} &
		\left(		
			\rtotal_{\rstruct}(\path_{\mathit{pre}})	
			+
			\rtotal_{\rstruct^\prime}(\synchprefix(\path^\prime)[1:-])
		\right)
		\pmeasure^{D}\{\path_{\mathit{pre}}\}.
	\end{split}
\end{align}

Since $\pref(\path^\prime)$ is the set of all possible finite prefixes from the
initial state $\init$ to the first firing state $\fstate_1$, and since
$\path[-:1] = \synchprefix(\path)[-:1]$ clearly holds, we know that
\begin{align*}
	\bigcup_{\path_{\mathit{pre}} \in \pref(\path^\prime)}
	\{\path_{\mathit{pre}}\}
	=
	\bigcup_{\path \in \reduce^{-1}(\path^\prime)}
	\{\path[-:1]\}.	
	=
	\bigcup_{\path \in \reduce^{-1}(\path^\prime)}
	\{\synchprefix(\path)[-:1]\}.
\end{align*}
Using this fact, and by observing that by definition
\begin{align*}
	\rtotal_{\rstruct^\prime}(\synchprefix(\path^\prime)[1:-]) =
	\rtotal_{\rstruct}(\synchprefix(\path)[1:-]),
\end{align*}
we can write (\ref{eq:factorise}) as
\begin{align*}
	\sum_{\path \in \reduce^{-1}(\path^\prime)} &
		\left(		
			\rtotal_{\rstruct}(\synchprefix(\path)[-:1]) +
			\rtotal_{\rstruct^\prime}(\synchprefix(\path^\prime)[1:-])
		\right)
		\pmeasure^{D}\{\path[-:1]\}.
\end{align*}
This is the same as the left hand side of (\ref{eq:splitprandtot}) and
the theorem is proved.
\qed
\end{proof}

\section{Connecting the Concrete Model and the Population Model}
\label{sec:connection}
In this section, we define the abstraction function to connect a concrete model with a
population model. To that end, let \(D_c = (\concstates, \concstate_0, \prmatrix_c)\) be a concrete
model of a network of \(N\) PCOs with a clock cycle length \(T\), a refractory period
 \(R\), a phase response function \(\pert\), a coupling
 \(\epsilon\) and broadcast failure probability of \(\failureprob\). Furthermore,
 let
 \(D_p = (\popstates, \popstate_0, \prmatrix_p)\) be the DTMC of a population model for the same
 parameters.
For simpler notation, we introduce some general notation for transitions in DTMCs.
If there is a possible transition between two states \(q\) and \(q^\prime\) in a
DTMC,
that is \(\prmatrix(q,q^\prime) > 0\), then we also write \(q \transition q^\prime\). Observe that
for this simplification, \(q\) and \(q^\prime\) are either both in \(\concstates\) or
both in \(\popstates\).
We also denote the reflexive, transitive closure of \(\transition\) by \(\transseq\).

\subsection{Proving the Correspondence between Concrete and Population Models}
\label{sec:correspondence} 
We need to associate states in \(D_c\) to states in \(D_p\).
In general, several concrete states will be mapped to a single
population state, since we do not distinguish between different orders
of oscillators in the latter, while we do in the former.

Furthermore, 
we want to abstract from different modes of the oscillators. However,
it is not sensible to associate all modes within a phase to the same
population state, since in the transitions from one mode to the
next the system chooses, whether an oscillator fails to broadcast
its pulse or not. If we want to be able to define a simulation relation,
we need  to represent the failures described by the
transitions in the population model.
To have an exact correspondence, 
we first collect all the concrete states where the counter and all oscillators are 
at the start mode into a single set.
\begin{align*}
\concstates^\prime &= \{\concstate \in \concstates \mid s = (\envstate, \netstate) \land \proj{\location}{\envstate} = \start \land \forall u \colon \proj{\location}{\netstate(u)} = \start\}  
\end{align*}

The abstraction function \(  \abstSing \colon \concstates^\prime \to \popstates\)
takes a concrete state \(\concstate\) and counts the number of oscillators sharing the
same phase, mapping \(\concstate = (\envstate, \netstate)\) to the corresponding state of the
population model,
\begin{align*}
  \abst{\concstate} & = \seq{ \size{\{u \mid \proj{\Phi}{\netstate(u)} = 1\}},
  		\dots,  \size{\{u \mid \proj{\Phi}{\netstate(u)} = T\}} }.
\end{align*}
To show that this abstraction is sensibly defined, we need to show that 
the concrete model can weakly simulate the transitions allowed by the population model, and
vice versa.
That is, if the abstraction \(\popstate_1\) of a concrete state \(\concstate_1\)
allows a transition to another population state \(\popstate_2\), then there is
a sequence of transitions from \(\concstate_1\) leading to \(\concstate_2\), whose
abstraction is \(\popstate_2\). Furthermore, if there is a transition
sequence from one concrete state \(\concstate_1\) to \(\concstate_2\),
where both statescan be abstracted
to  population states \(\popstate_1\) and \(\popstate_2\), respectively,
then there is also a sequence of transitions connecting \(\popstate_1\) with
\(\popstate_2\). This situation is visualised in Fig.~\ref{fig:abstract-simulate}.

\begin{figure}
\begin{center}
\begin{tikzpicture}
  \node (c1) {\(\concstate_1\)};
  \node[below =1cm of c1] (c2) {\(\concstate_2\)};
  \node[left =1cm of c1] (p1) {\(\popstate_1\)};
  \node[below =1cm of p1] (p2) {\(\popstate_2\)};

  \draw[->,double,double distance=2pt, >=implies] (p1) to (p2);
  \draw[->, double,double distance=2pt, >=implies] (c1) to (c2);
  \draw[-> ] (c1) to node[above] {\(\abstSing\)} (p1);
  \draw[->] (c2) to node[above] {\(\abstSing\)} (p2);

\end{tikzpicture}
\end{center}
\caption{Weak Simulation Relation of Concrete States and Populations}
\label{fig:abstract-simulate}
\end{figure}
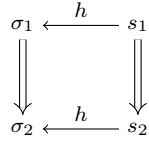

For the first direction, we actually show this condition for a single transition
in the population model. However, this result can be straightforwardly extended to transition sequences.
\begin{lemma}
  \label{lem:conc-sim-pop}
Let \(\concstate_1 \in \concstates^\prime\) and \(\popstate_1, \popstate_2 \in \popstates\) such that \(\abst{\concstate_1} = \popstate_1\) and
 \(\popstate_1 \poptrans \popstate_2\). Then there is a \(\concstate_2\in\concstates^\prime\) such that \(\concstate_1 \conctransseq \concstate_2\)
and \(\abst{\concstate_2} = \popstate_2\). Furthermore, the sum of the probabilities of transition sequences from \(\concstate_1\) 
to an instantiation \(\concstate_2\)
of \(\popstate_2\) is equal to the probability of the transition from \(\popstate_1\) to \(\popstate_2\).
\end{lemma}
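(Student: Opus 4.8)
The plan is to reduce the single population transition $\popstate_1\poptrans\popstate_2$ to a sum over the failure vectors inducing it, and to match each summand against a family of concrete runs. By the construction of $\prmatrix_p$ (Sect.~\ref{subsec:model}) we have $\prmatrix_p(\popstate_1,\popstate_2)=\sum_{\fvec}\pfailvec(\popstate_1,\fvec)$, the sum ranging over all $\fvec\in\failvecs_{\popstate_1}$ with $\suc(\popstate_1,\fvec)=\popstate_2$; since $\popstate_1\poptrans\popstate_2$, at least one such $\fvec$ exists. I would fix one failure vector $\fvec=\langle f_1,\dots,f_T\rangle$, exhibit the concrete transition sequences out of $\concstate_1$ that \emph{realise} $\fvec$ (those producing exactly $f_\Phase$ broadcast failures among the firing oscillators of each phase $\Phase$), show each terminates in some $\concstate_2\in\concstates^\prime$ with $\abst{\concstate_2}=\suc(\popstate_1,\fvec)$, and show their total probability equals $\pfailvec(\popstate_1,\fvec)$. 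Summing over admissible $\fvec$ then gives the probability identity, and any single $\fvec$ already supplies a witness $\concstate_2$ for the existence claim.

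First I would fix the shape of a round. Starting from $\concstate_1\in\concstates^\prime$ (counter and all oscillators in mode $\start$), every run back into $\concstates^\prime$ is the mandatory reset~(\ref{trans:env_reset}) of probability $1$, followed by a sequence of per-oscillator transitions, closed by the simultaneous update~(\ref{trans:update_add}) of probability $1$. The ordering conditions in (\ref{trans:perturbed_but_not_fire_add})--(\ref{trans:perturbed_fire_add}) force the oscillators to be handled in strictly decreasing phase, since a phase-$\Phase$ oscillator may move only once every higher-phase oscillator is already in mode $\update$. The technical heart is an invariant proved by downward induction on $\Phase$ from $T$ to $1$: \emph{when the phase-$\Phase$ group begins processing, the counter equals $\alpha^\Phase(\popstate_1,\fvec)$}. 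The base $\alpha^T=0$ holds after the reset; in the step a firing phase $\Phase{+}1$ contributes exactly $k_{\Phase+1}-f_{\Phase+1}$ successful broadcasts, each incrementing the counter once via (\ref{trans:fire_add}) or (\ref{trans:perturbed_fire_add}), whereas a non-firing phase contributes none, matching the two cases of the recursion~(\ref{eq:alpha}). Because $\pert$ is increasing in its phase argument (Def.~\ref{def:pert}), firing is upward-closed in the phase: once a phase fails to fire, no lower phase fires, so every non-firing phase $\Phase$ has $\alpha^\Phase$ equal to the \emph{final} counter value. Consequently the concrete firing test at $c=\alpha^\Phase$ agrees with $\fire^\Phase(\popstate_1,\fvec)$, and the perturbation read off by the final update~(\ref{trans:update_add}) coincides with $\pert(\Phase,\alpha^\Phase,\epsilon)$, so the phase-$\Phase$ group is mapped precisely to $\ptf(\popstate_1,\Phase,\fvec)$.

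For the probability I would treat each phase independently, using that the normalising denominator $\lvert\initloc{\concstate}{\Phase}\rvert$ counts down from $k_\Phase$ to $1$ as the group is processed. For a firing phase, summing over the $k_\Phase!$ processing orders and the $\binom{k_\Phase}{f_\Phase}$ choices of which oscillators fail — each run contributing $\tfrac{1}{k_\Phase!}\,\mu^{f_\Phase}(1-\mu)^{k_\Phase-f_\Phase}$ — telescopes to $\binom{k_\Phase}{f_\Phase}\mu^{f_\Phase}(1-\mu)^{k_\Phase-f_\Phase}=\pfail(k_\Phase,f_\Phase)$. A non-firing phase contributes the factor $1$, as its oscillators take (\ref{trans:perturbed_but_not_fire_add}) with no $\mu$-factor; the degenerate non-firing $\popstate_1$, whose only failure vector is $\{\star\}^T$ (Lemma~\ref{lem:onesuccessor}), is handled identically through (\ref{trans:nothing_fires_add}) and has total probability $1$. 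Multiplying over all phases, together with the two probability-$1$ bracketing transitions, yields exactly $\pfailvec(\popstate_1,\fvec)$ as defined in~(\ref{eqn:pfailvec}).

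It remains to assemble the pieces. By the invariant the final update maps the group at phase $\Phase$ to $\ptf(\popstate_1,\Phase,\fvec)$ and resets all modes to $\start$, so the reached $\concstate_2$ lies in $\concstates^\prime$ and, since $\abst{\concstate_1}=\popstate_1$, satisfies $\abst{\concstate_2}=\suc(\popstate_1,\fvec)=\popstate_2$ by Def.~\ref{def:fsucc}. Distinct failure vectors induce disjoint sets of concrete runs, because a run records the number of broadcast failures per phase and thus determines $\fvec$; hence summing the per-$\fvec$ totals over all $\fvec$ with $\suc(\popstate_1,\fvec)=\popstate_2$ reproduces $\prmatrix_p(\popstate_1,\popstate_2)$, as required. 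I expect the main obstacle to be the inductive invariant of the second paragraph: making rigorous that the forced top-down ordering realises precisely the recursion defining $\alpha^\Phase$ and $\fire^\Phase$, and in particular reconciling the refractory period — the concrete tests apply $\pert$ directly, while the population tests apply $\refr\circ\pert$ — so that the two models agree on firing and perturbation at the boundary phase $R$.
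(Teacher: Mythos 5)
Your proposal is correct and takes essentially the same route as the paper's own proof: fix a failure vector \(\fvec\) realising \(\popstate_1 \poptrans \popstate_2\), construct the concrete round as the mandatory reset followed by per-oscillator transitions processed in decreasing phase order, establish by downward induction that the counter at the start of each phase-\(\Phase\) group equals \(\alpha^\Phase(\popstate_1,\fvec)\) (this is precisely the paper's Claim~\ref{clm:flashing}), and close with the update transition~(\ref{trans:update_add}) to obtain \(\concstate_2 \in \concstates^\prime\) with \(\abst{\concstate_2} = \suc(\popstate_1,\fvec) = \popstate_2\). The only organisational difference is that the per-phase probability bookkeeping you fold in (orderings times choices of failing oscillators yielding \(\pfail(k_\Phase,f_\Phase)\), multiplied over phases to give \(\pfailvec(\popstate_1,\fvec)\)) is what the paper defers to its separate Lemma~\ref{lem:prob-sum}, and the refractory-boundary obstacle you flag at the end --- that the concrete firing tests apply \(\pert\) without \(\refr\), so a refractory oscillator whose raw perturbation exceeds \(T\) can only take a counter-affecting transition --- is a genuine subtlety that the paper's own proof also leaves unaddressed.
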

\begin{proof}
  Let \(\concstate_1 = (\envstate_1, \netstate_1) \in \concstates^\prime\), i.e.,  a concrete state where \(\proj{\location}{\envstate_1} = \start\) and
  \(\proj{\location}{\netstate_1(u)} = \start\) for all \(1 \leq u \leq N\). 
Then \(\abst{\concstate_1} = \seq{ \size{\{u \mid \proj{\phase}{\netstate_1(u)} = 1\}}, \dots,  \size{\{u \mid \proj{\phase}{\netstate_1(u)} = T\}} } = \popstate_1\).
Let \(\popstate_1 \poptrans \popstate_2\). Note that there is only a single outgoing transition from \(\concstate_1\) according to
condition~(\ref{trans:env_reset}). That is, in the successor state of \(\concstate_1\), we have \(\proj{c}{\envstate} = 0\) and
\(\proj{\location}{\envstate} = \update\), while the oscillator states are not changed. To keep the notation tidy, we identify this successor state
with \(\concstate_1\) in the following.

Now consider two cases. If \(\size{\{u \mid \proj{\phase}{\netstate_1(u)} = T\}} = 0\), 
then \(\popstate_2 =  \seq{0, \size{\{u \mid \proj{\phase}{\netstate_1(u)} = 1\}}, \dots,  \size{\{u \mid \proj{\phase}{\netstate_1(u)} = T-1\}} }\), 
since no set of oscillators in \(\popstate_1\) is perturbed by a firing oscillator. In particular, there is no \(u\) such that
\(\proj{\phase}{\netstate_1(u)} = T\). Hence, for all possible successors \(\concstate^\prime\) of \(\concstate_1\), we have
that only condition~(\ref{trans:nothing_fires_add}) is satisfied. Furthermore, this is the case until all oscillators changed their
mode to \(\update\). Let us call this state \(\concstate_1^\prime = (\envstate_1^\prime, \netstate_1^\prime)\). 
Now, the environment was not changed from \(\concstate_1\) to \(\concstate_1^\prime\), i.e., 
\(\proj{c}{\envstate} = \proj{c}{\envstate^\prime} = 0\).

Hence, condition~(\ref{trans:update_in_cycle_no_pulse_add}) is satisfied for all oscillators.  
Since \(\pert(\Phase,  0, \epsilon) = 0\) for all \(\Phase\), the phase of
each oscillator is increased by one. 
This implies that there  is a single successor of \(\concstate_1^\prime\), which we call
\(\concstate_2 = (\envstate_2, \netstate_2)\) and that  
 \(\proj{\phase}{\netstate_2(u)} = \proj{\phase}{\netstate_1^\prime(u)} + 1\).  In particular,
we have that for all oscillators \(u\), \(\proj{\phase}{\netstate_2(u)} > 0\), and 
\(\{u \mid \proj{\phase}{\netstate_2(u)} = \Phase\} = \{u \mid \proj{\phase}{\netstate^\prime_1(u)} = \Phase - 1\}\)
for all \(0 < \Phase  \leq T\). Hence \(\concstate_2\) is the required state of the concrete model.
 
 Now let \(\size{\{u \mid \proj{\phase}{\netstate_1(u)} = T\}} > 0\). Then, each transition in the population
 model is induced by a failure vector \(F=\langle f_1, \dots, f_T\rangle\). In particular, there is a maximal number \(k\), such 
 that for all \(l < k\), we have \(f_{l} = \star\). That is, \(k\) denotes the lowest phase in which oscillators possibly
 fire.

 First, we introduce some notation, where \(\Phase > R\) and \(\concstate = (\envstate, \netstate)\). 
 \begin{align*}
   \concstatesFire{\concstate}& =  \{u \mid \proj{\phase}{\netstate(u)} = T \}\\
   \concstatesPertFirePhase{\concstate}{\Phase} & = 
                                                  \{u \mid \proj{\phase}{\netstate(u)} = \Phase \land \pert(\proj{\phase}{\netstate(u)},\alpha^{\proj{\phase}{\netstate(u)}}(\popstate_1, F), \epsilon) +1 > T \} \\
   \concstatesPertPhase{\concstate}{\Phase}&= \{u \mid \proj{\phase}{\netstate(u)} = \Phase \land \pert(\proj{\phase}{\netstate(u)}, \alpha^{\proj{\phase}{\netstate(u)}}(\popstate_1, F), \epsilon) +1 \leq  T \}
 \end{align*}
 That is, \(\concstatesFire{\concstate}\) denotes the set of oscillators possibly firing in \(\concstate\). The
 sets \(\concstatesPert{\concstate}\) and \(\concstatesPertFire{\concstate}\) denote the sets of oscillators being perturbed
  but not firing (since the perturbation is not sufficient for the oscillators to reach the end of their cycle), and
 possibly firing, respectively. We can only say that elements of \(\concstatesFire{\concstate}\) and \(\concstatesPertFire{\concstate}\) 
 \emph{possibly} fire, since they may be affected by a broadcast failure.

We now have to construct a sequence of transitions, where we draw the firing oscillators from the sets \(\concstatesFire{\concstate_1}\) and
\(\concstatesPertFirePhase{\concstate_1}{\Phase}\), according to the broadcast failure vector \(F\). Furthermore, all elements of \(\concstatesFire{\concstate_1}\)
and the sets \(\concstatesPertFirePhase{\concstate_1}{\Phase}\) have to take transitions such that their phase value in the next
iteration is \(1\).


 Let \(\popstate_1 = \langle k_1, k_2, \dots, k_T\rangle\).
 Now consider an arbitrary sequence \(u_1, \dots, u_{k_T}\) of all \(k_{T}\)  elements from \(\concstatesFire{\concstate_1}\).
Additionally, let \(\failstates{T} \subseteq \concstatesFire{\concstate_1} \) be the set of oscillators in phase \(T\) with a broadcast failure, i.e.,
\(|\failstates{T}| = f_T\). Observe that \(\proj{\phase}{\netstate_1(u_j)} = T\) for all \(1 \leq j \leq k_T\).
 Furthermore, let \(r_0 = \concstate_1\). Then we define a sequence of successors of \(r_0= (\envstate_0, \netstate_0)\) as follows,
where \(1 \leq j \leq k_T\). If \(k_i \not\in \failstates{T}\), then
 \begin{align*}
   \netstate_j & = \override{\netstate_{j-1}}{u_j}{(T, \update)}\\
   \envstate_j & = (\update, \proj{c}{\envstate_{j-1}} + 1)
 \end{align*}
otherwise
 \begin{align*}
   \netstate_j & = \override{\netstate_{j-1}}{u_j}{(T, \update)}\\
   \envstate_j & = (\update, \proj{c}{\envstate_{j-1}})
 \end{align*}
 Observe 
that these states
define a sequence of transitions from \(r_0\) to \(r_{k_{T}}\) according to conditions (\ref{trans:fire_add}) and (\ref{trans:fire_failure_add}).

Now, for each phase \(\Phase\), with \(k \leq \Phase < T\), we proceed similarly. That is, we first choose a sequence
 \(u^\Phase_1, \dots, u^\Phase_{k_\Phase} \) of oscillators and a set \(\failstates{\Phase} \subseteq\concstatesPertFirePhase{\concstate_1}{\Phase}\) with
\(|\failstates{\Phase}| = f_\Phase\). 

Subsequently, we define each \(r^\Phase_j\) to be 
 \begin{align*}
   \netstate^\Phase_j & = \override{\netstate_{j-1}}{u^\Phase_j}{(\Phase, \update)}\\
   \envstate^\Phase_j & = \left\{\begin{array}{ll} 
                     (\update, \proj{c}{\envstate^\Phase_{j-1}} + 1) & \text{, if } j \not\in \failstates{\Phase}\\
                     (\update, \proj{c}{\envstate^\Phase_{j-1}}) & \text{, otherwise}
                   \end{array} \right.                                                      
 \end{align*}
where \(r^\Phase_0 = r^{\Phase + 1}_{k_\Phase}\). Observe again, that these sequences exhaust \(\concstatesPertFirePhase{\concstate_1}{\Phase}\) for
each phase \(\Phase\). Furthermore, we claim that the number of firing oscillators that are not inhibited by a broadcast failure in the concrete model
coincides with the number of perceived firing oscillators in the population model in this phase.
\begin{claim}
\label{clm:flashing}
  For each \(\Phase\) with \(k \leq \Phase \leq T\), we have \(\proj{c}{\envstate^\Phase_0} = \alpha^{\Phase}(\popstate_1, F)\).
\end{claim}
\begin{proof}
For \(\Phase = T\), we have \(\proj{c}{\envstate^T_0} = 0 = \alpha^T(\popstate_1,F)\). Now let \(\Phase < T\) and assume
\(\proj{c}{\envstate^{\Phase+1}_0} = \alpha^{\Phase+1}(\popstate_1,F)\). 
By definition, we have 
\begin{align*}
\alpha^{\Phase}(\popstate_1, F)&= \alpha^{\Phase + 1}(\popstate,F) +k_{\Phase +1} - f_{\Phase+1}, 
\end{align*}
since \(\Phase < T\) and \(f_{\Phase+1} \neq \star\). 
Now, in the sequence  \(r^{\Phase+1}_0, \dots, r^{\Phase+1}_{k_{\Phase+1}}\),
we increase \(\proj{c}{\envstate^{\Phase+1}_0}\) exactly \(k_{\Phase+1} - f_{\Phase+1}\) times, i.e,
\begin{align*}
\proj{c}{\envstate^\Phase}_0 & = \proj{c}{\envstate^{\Phase+1}_0} +k_{\Phase +1} - f_{\Phase+1}. 
\end{align*}
By assumption, we then get 
\begin{align*}
\proj{c}{\envstate^\Phase_0} & = \alpha^{\Phase + 1}(\popstate_1,F) +k_{\Phase +1} - f_{\Phase+1} = \alpha^{\Phase}(\popstate_1,F). 
\end{align*}
This proves the claim. \qed  
\end{proof}

This claim in particular states that the perturbation within the population model and the
concrete model is the same.

Since \(\pert\) is a monotonically increasing function in \(\alpha\), every oscillator in \(\concstatesPertFirePhase{\concstate_1}{\Phase}\)
is still perturbed to firing after other oscillators in the same phase fired. Hence, for 
each pair of states \(u^\Phase_{j-1}\) and \(u^\Phase_j\) with \(1 \leq j \leq k_\Phase - f_\Phase\),
a transition according to condition (\ref{trans:perturbed_fire_add}) is well-defined. Similarly,
for 
 oscillators that should fire, but are affected by a broadcast failure, \(u^\Phase_{j-1}\) and \(u^\Phase_j\) with \( k_\Phase - f_\Phase+1 \leq j \leq k_\Phase \),
the transition is defined according to condition (\ref{trans:perturbed_fire_failure_add}). 

Now, for every \(\Phase < k\), we know that \(\Phase + \pert(\Phase, \alpha^{\Phase}(\popstate_1, F), \epsilon)+1 \leq T\) and
\(\alpha^{\Phase -1}(\popstate_1, F) = \alpha^{\Phase}(\popstate_1,F)\), according to 
equation~(\ref{eq:alpha}).  Hence, for every phase \(\Phase < k\), we arbitrarily enumerate the oscillators of \(\concstatesPertPhase{\concstate_1}{\Phase} =
u^\Phase_1, \dots, u^\Phase_{k_\Phase}\) and define the following sequence of states \(r_j^\Phase\), where \(r_0^\Phase = r^{\Phase+1}_{k_{\Phase+1}}\).
 \begin{align*}
   \netstate^\Phase_j & = \override{\netstate_{j-1}}{u^\Phase_j}{(\Phase, \update)}\\
   \envstate^\Phase_j & = (\update, \proj{c}{\envstate^\Phase_{j-1}}) 
 \end{align*}
For each \(\Phase\) and  pair of states \(r^\Phase_j\) and \(r^\Phase_{j+1}\), there is a transition according
to condition (\ref{trans:perturbed_but_not_fire_add}). So, all in all, we have a sequence of transitions
from \(\concstate_1\) to \(r^0_{k_0}\).

Now, in \(r^0_{k_0} = ( \envstate^0_{k_0}, \netstate^0_{k_0})\), we have that 
 \(\proj{\location}{\envstate^0_{k_0}} = \update\) and for all \(u\), \(\proj{\location}{\netstate^0_{k_0}(u)} = \update\). 
Then let \(\concstate_2 = ( \envstate, \netstate)\) be defined by the following formulas. 
\begin{align}
&\proj{\location}{\envstate} = \start \text{ and } \proj{c}{\envstate} = \proj{c}{\envstate^0_{k_0}} \label{eq:reset_env}\\
&\forall u \colon \proj{\location}{\netstate(u)} = \start \label{eq:reset_location}\\
&\forall u \colon \proj{\phase}{\netstate^0_{k_0}(u)} = T \implies \proj{\phase}{\netstate(u)} = 1 \label{eq:flashing_osc}\\
&\forall u \colon \proj{\phase}{\netstate^0_{k_0}(u)} < T \land \proj{\phase}{\netstate^0_{k_0}(u)} \leq R  \implies \proj{\phase}{\netstate(u)} = \proj{\phase}{\netstate^0_{k_0}(u)} +1 
\label{eq:refractory_osc}\\
&\forall u \colon \proj{\phase}{\netstate^0_{_0}(u)} < T \land \proj{\phase}{\netstate^0_{k_0}(u)} > R\; \land \nonumber \\
&\hspace{1cm}\proj{\phase}{\netstate^0_{k_0}(u)} + \pert(\proj{\phase}{\netstate^0_{k_0}(u)}, \proj{c}{\envstate^0_{k_0}}, \epsilon) + 1 \leq T \nonumber \\
& \hspace{2cm} \implies \proj{\phase}{\netstate(u)} = \proj{\phase}{\netstate^0_{k_0}(u)}  + \pert(\proj{\phase}{\netstate^0_{k_0}(u)}, \proj{c}{\envstate^0_{k_0}}, \epsilon)  +1 \label{eq:perturbed_not_flashing_osc}\\ 
&\forall u \colon \proj{\phase}{\netstate^0_{k_0}(u)} < T \land \proj{\phase}{\netstate^0_{k_0}(u)} > R\; \land \nonumber\\
&\hspace{1cm}\proj{\phase}{\netstate^0_{k_0}(u)} + \pert(\proj{\phase}{\netstate^0_{k_0}(u)}, \proj{c}{\envstate^0_{k_0}},\epsilon) + 1 > T \nonumber\\
& \hspace{2cm} \implies \proj{\phase}{\netstate(u)} = 1  \label{eq:perturbed_flashing_osc}
\end{align}
Then \(r^0_{k_0}\) and \(\concstate_2\) satisfy all parts of condition (\ref{trans:update_add}). Hence, we have a sequence
of transitions from \(\concstate_1\) to \(\concstate_2\).
To prove \(h(\concstate_2) = \popstate_2\), we need to show that the number of oscillators possessing
a phase \(\Phase\) in \(\concstate_2\) matches the \(\Phase\)-th entry of \(\popstate_2 = \langle k^\prime_1, \dots, k^\prime_T\rangle \). To that end, recall
that by Def.~\ref{def:fsucc}, each \(k^\prime_\Phase =\sum_{\PhaseTwo \in \phiupdate_\Phase(\state, \fvec)} k_\PhaseTwo \), where \(\popstate_1 = \langle k_1, \dots, k_T\rangle\)
and \(	\phiupdate_\Phase(\state, \fvec) = \{\PhaseTwo \mid \PhaseTwo \in \{1,\ldots,T\} \land \ptf(\state, \PhaseTwo, \fvec) = \Phase\} \). 
Observe that both the concrete model and the population model use the same
perturbation function \(\pert\) and that \(\ptf\) is defined in terms of \(\pert\).
In particular, we have
\begin{align*}
  	\ptf(\state, \Phase, \fvec) =
	\begin{cases}
		1 & \text{if } \fire^\Phase(\state, \fvec) \\
		\update^\Phase(\state, \fvec) & \text{otherwise}. \\
	\end{cases}
\end{align*}
Now let us distinguish three cases for \(\Phase\). In the following, let 
 \(\concstate_1 = (\envstate_1, \netstate_1) \), \(\concstate_2 = (\envstate_2, \netstate_2)\), 
\(\popstate_1 = \langle k_1, \dots, k_T \rangle\) and \(\popstate_2 = \langle k^\prime_1, \dots, k^\prime_T \rangle\).
\begin{enumerate}
\item If \(\Phase \leq R\), then \(\update^\Phase(\popstate_1, \fvec) = \Phase + 1 \), due to the definition
of the refractory function \(\refr\). Similarly, for all \(u\) such that \( \proj{\phase}{\netstate_1(u)} = \Phase\), we get that 
\(\proj{\phase}{\netstate_2(u)} = \Phase +1 \).  Hence, for all \(\Phi \leq R\), we have that 
\(|\{u \mid \proj{\phase}{\netstate_2(u)} = \Phase+1\}| = |\{u \mid \proj{\phase}{\netstate_1(u)} = \Phase\}|\).  

\item If \(\Phase > R\) and \(\update^\Phase(\popstate_1, \fvec) = \PhaseTwo\), with \(\PhaseTwo \leq T\). Then
\(\proj{\phase}{\netstate_2(u)} = \PhaseTwo\), by formula~(\ref{eq:perturbed_not_flashing_osc}). Observe 
that the number of oscillators in \(\concstate_1\) with a phase of \(\Phase\) is \(k_\Phase\).
So, the number of oscillators that get perturbed to be in \(\PhaseTwo\) is the union 
of the oscillators \(u\) in phases \(\Phase\),  where \(\pert(\Phase,\proj{c}{\envstate_2}, \epsilon)+1 = \PhaseTwo\).
That is, \(\{u \mid \proj{\Phase}{\netstate_2(u)} = \PhaseTwo\} = \{u \mid \pert(\proj{\phase}{\netstate_1(u)},\proj{c}{\envstate_2}, \epsilon)+1 =\PhaseTwo \}\).
By the definition of \(\ptf\) and claim~\ref{clm:flashing}, we get that \(\ptf(\popstate_1, \Phase, \fvec) = \PhaseTwo\). 
That is, for a specific \(\PhaseTwo\), we have that the phases \(\Phase\) of oscillators perturbed to \(\PhaseTwo\) are in \(\phiupdate_\PhaseTwo(\popstate_1,\fvec)\).
Hence, since the sets of oscillators in each phase are disjoint, 
\(|\{u \mid \proj{\phase}{\netstate_2(u)} = \PhaseTwo\}| = \sum_{\Phase \in \phiupdate_\PhaseTwo(\popstate_1, \fvec)}k_\Phase\). 

\item Finally, let \(\update^\Phase(\popstate_1, \fvec) = \PhaseTwo\) and \(\PhaseTwo > T\). Then \(\ptf(\popstate_1, \Phase,\fvec) = 1\). Furthermore,
by formulas (\ref{eq:flashing_osc}) and (\ref{eq:perturbed_flashing_osc}), we have \(\proj{\phase}{\netstate_2(u)} = 1\) for
all \(u\) with phase \(\Phase\). With similar reasoning as above, we get that 
\(|\{u \mid \proj{\phase}{\netstate_2(u)} = 1\}| = \sum_{\Phase \in \phiupdate_1(\popstate_1, \fvec)}k_\Phase\). 
\end{enumerate}
Hence, we get 
\(h(\concstate_2) = \popstate_2\), and we are done.\qed
\end{proof}

Now we turn our attention to the other direction. That is, if we have a sequence of transitions
in the concrete model, we can find a corresponding transition sequence in the population model.

\begin{lemma}
  \label{lem:pop-sim-conc}
  Let \(D_c = (\concstates, \concstate_0, \prmatrix_c)\) be a concrete network of oscillators and 
  \(D_p = (\popstates, \popstate_0, \prmatrix_p)\) be its abstraction as a population model. Furthermore,
let \(\concstate_1,\concstate_2 \in \concstates^\prime\) and \(\popstate_1 \in \popstates\) such that \(\abst{\concstate_1} = \popstate_1\) and
\(\concstate_1 \transseq \concstate_2\).
Then there is a \(\popstate_2\in\popstates\) such that
\(\abst{\concstate_2} = \popstate_2\) and \(\popstate_1 \transseq \popstate_2\).
\end{lemma}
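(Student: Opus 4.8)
The plan is to reduce the arbitrary member of the reflexive--transitive closure \(\concstate_1 \transseq \concstate_2\) to a composition of complete computation \emph{rounds} and then to reverse the construction of Lemma~\ref{lem:conc-sim-pop} on a single round. Since \(\abstSing\) is total on \(\concstates^\prime\), the witness \(\popstate_2 := \abst{\concstate_2}\) is forced, so the only real content is the existence of the sequence \(\popstate_1 \transseq \popstate_2\). First I would observe that, because both endpoints lie in \(\concstates^\prime\) (all oscillators and the counter in mode \(\start\)), the only transition leaving such a state is the mandatory reset (\ref{trans:env_reset}); thereafter the counter sits in \(\update\), oscillators move one-by-one into \(\update\) via (\ref{trans:fire_add})--(\ref{trans:perturbed_fire_add}), and only the simultaneous update (\ref{trans:update_add}) restores every component to \(\start\). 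Hence each return to a \(\concstates^\prime\) state closes one round, and \(\concstate_1 \transseq \concstate_2\) factors as \(\concstate_1 \transseq \bar{\concstate} \transseq \concstate_2\), where the first segment is a single round and the second consists of the remaining rounds. An induction on the number of rounds then leaves only the single-round case, the base case \(\concstate_1 = \concstate_2\) being discharged by reflexivity of \(\transseq\).

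For a single round from \(\concstate_1 = (\envstate_1, \netstate_1)\) to \(\bar{\concstate}\), both in \(\concstates^\prime\), I would read a failure vector \(F = \langle f_1, \dots, f_T\rangle\) directly off the run: let \(f_\Phase\) be the number of oscillators that started the round in phase \(\Phase\), fired (via (\ref{trans:fire_failure_add}) or (\ref{trans:perturbed_fire_failure_add})) without incrementing the counter, and set \(f_\Phase = \star\) whenever no oscillator in phase \(\Phase\) fires. The ordering clauses in (\ref{trans:perturbed_but_not_fire_add})--(\ref{trans:perturbed_fire_add}) force oscillators to be evaluated from the highest phase downward, so the counter value seen when the phase-\(\Phase\) oscillators are processed is exactly the running total of successful firings from strictly higher phases; this is the recurrence defining \(\alpha^\Phase\), so the argument of Claim~\ref{clm:flashing} applies verbatim to give \(\proj{c}{\envstate} = \alpha^\Phase(\popstate_1, F)\) at the start of each phase block. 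As both models use the same \(\pert\), the concrete firing test \(\Phase + \pert(\Phase, \proj{c}{\envstate}, \epsilon) + 1 > T\) coincides with \(\fire^\Phase(\popstate_1, F)\); since non-firing propagates to all lower phases (as \(\pert\) is increasing and the counter then stops growing), the extracted \(F\) meets the \(\star\)-suffix requirement of Definition~\ref{def:failvec} and is produced by the recursion of Definition~\ref{def:failvecompute}, i.e. \(F \in \failvecs_{\popstate_1}\).

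It remains to check \(\suc(\popstate_1, F) = \abst{\bar{\concstate}}\), for which I would reuse the three-case phase-counting computation at the end of Lemma~\ref{lem:conc-sim-pop}: the final update (\ref{trans:update_add}), through (\ref{trans:update_end_cycle_add})--(\ref{trans:update_in_cycle_pulse_exceeds_cycle_add}), sends an oscillator of phase \(\Phase\) to the same target as \(\ptf(\popstate_1, \Phase, F)\), so grouping the oscillators of \(\bar{\concstate}\) by target phase gives \(\abst{\bar{\concstate}} = \suc(\popstate_1, F)\) by Definition~\ref{def:fsucc}. Together with \(F \in \failvecs_{\popstate_1}\) this provides the single population transition \(\popstate_1 \poptrans \abst{\bar{\concstate}}\); composing it with the sequence \(\abst{\bar{\concstate}} \transseq \popstate_2\) supplied by the induction hypothesis closes the inductive step.

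I expect the main obstacle to be the structural claim of the first paragraph rather than the arithmetic: one must argue that an arbitrary element of \(\transseq\) starting and ending in \(\concstates^\prime\) decomposes cleanly into whole rounds, with no interleaving of partial rounds and no premature termination at a \(\concstates^\prime\) state. This rests entirely on the bookkeeping discipline enforced by the mode annotations and the mandatory reset, and it is precisely what licenses importing the counter/\(\alpha^\Phase\) correspondence and the phase-update case analysis wholesale from Lemma~\ref{lem:conc-sim-pop}.
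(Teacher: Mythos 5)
Your overall route is the paper's own: factor the run into complete rounds (the paper phrases this as the assumption that all intermediate states have counter mode \(\update\), with a partition argument at the end for the general case), extract a failure vector from each round, prove that the concrete counter values agree with \(\alpha^\Phase(\popstate_1,\fvec)\) by downward induction on phases, and then match the final update transition (\ref{trans:update_add}) against \(\ptf\) case by case to conclude \(\abst{\bar{\concstate}} = \suc(\popstate_1,\fvec)\) via Definition~\ref{def:fsucc}. (The paper re-proves the counter/\(\alpha\) correspondence for the given run rather than citing Claim~\ref{clm:flashing}, which concerns the sequence \emph{constructed} in Lemma~\ref{lem:conc-sim-pop}, but the induction is the same.) So there is no divergence of method; the difference lies in how the failure vector is written down, and that is where your proposal has a genuine flaw.

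Your extraction rule sets \(f_\Phase = \star\) ``whenever no oscillator in phase \(\Phase\) fires.'' Taken literally, this is vacuously satisfied by every \emph{empty} phase, including empty phases lying above the firing cutoff. The paper's own running example triggers this: for \(\state_3 = \langle 0,0,0,0,0,2,1,0,0,5\rangle\) (Table~\ref{tab:fvecexample}), phases \(8\) and \(9\) are empty while phases \(7\) and \(10\) fire, so your rule yields \(f_8 = f_9 = \star\) with \(f_7 \neq \star\). Such a tuple violates the suffix condition of Definition~\ref{def:failvec} (\(f_i = \star\) forces \(f_j = \star\) for all \(j \le i\)), is not produced by the recursion of Definition~\ref{def:failvecompute} (which assigns \(0\), not \(\star\), to empty phases above the cutoff --- see iterations \(2\) and \(3\) of Table~\ref{tab:fvecexample}), and hence is not in \(\failvecs_{\popstate_1}\). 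Consequently your step ``together with \(F \in \failvecs_{\popstate_1}\) this provides the single population transition'' fails: by Definition~\ref{def:next} and the model construction, \(\prmatrix_p\) only assigns positive probability to successors reached via vectors in \(\failvecs_{\popstate_1}\). The repair is exactly what the paper's arithmetic definition of \(f_\Phase\) achieves: place the \(\star\)'s according to the \emph{firing test} with the counter value current when phase \(\Phase\) is processed (numeric iff \(\Phase + \pert(\Phase, \proj{c}{\envstate_{r\Phase}}, \epsilon) + 1 > T\)), and let the numeric entries count failed firings, so that an empty firing phase gets \(0\). Since \(\alpha^\Phase\), \(\suc\) and \(\pfailvec\) are insensitive to the difference between \(\star\) and \(0\) on empty phases (\(k_\Phase - f_\Phase = 0\) and \(\pfail(0,0)=1\)), the remainder of your argument goes through unchanged once this correction is made.
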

\begin{proof}
  If \(\concstate_1 = \concstate_2\), then the lemma holds trivially. Otherwise,
  assume that for all \(\concstate^\prime = (\envstate, \netstate)\) different
  from \(\concstate_1\) and \(\concstate_2\), such that \(\concstate_1 \conctransseq s^\prime \conctransseq \concstate_2\),
  we have \(\proj{\location}{\envstate} = \update\). Furthermore, let \(\concstate_1 = (\envstate_1, \netstate_1 )\) and \(\concstate_2 = (\envstate_2, \netstate_2 )\).  
  By definition of the abstraction function, we have
  \begin{align*}
    \popstate_1 & = \langle \size{\{u \mid \proj{\phase}{\netstate_1(u) = 1}\}}, \dots, \size{\{u \mid \proj{\phase}{\netstate_1(u) = T}\}}\rangle\\
    \popstate_2 & = \langle \size{\{u \mid \proj{\phase}{\netstate_2(u) = 1}\}}, \dots, \size{\{u \mid \proj{\phase}{\netstate_2(u) = T}\}}\rangle
  \end{align*}
  We now distinguish two cases. First, assume that \(\{u \mid \proj{\phase}{\netstate_1(u) = T}\} = \emptyset\), and let \(\concstate = (\envstate_s, \netstate_s)\) be such that
  \(\concstate_1 \transseq \concstate\) and \(\proj{\location}{\netstate_s(u)} = \update\) for all \(u\). Then there is exactly one transition
  \(\concstate \transition \concstate_2\), which is defined according to equation (\ref{trans:update_add}). Furthermore, due to the assumption that no oscillator fires,
  we have \(\proj{c}{\envstate_s} = 0\), which implies \(\pert(\Phase, \proj{c}{\envstate_s}, \epsilon)=0\) for all \(\Phase\) by Definition~\ref{def:pert}. Hence, for all \(u\), we have
  \(\proj{\phase}{\netstate_2(u)} = \proj{\phase}{\netstate_s(u)} + 1 = \proj{\phase}{\netstate_1(u)}+1\). That is,
  \begin{align*}
    \popstate_2 & =\langle 0, \size{\{u \mid \proj{\phase}{\netstate_1(u) +1 = 2}\}}, \dots,\size{\{u \mid \proj{\phase}{\netstate_1(u) +1 = T}\}}\rangle\\
    &= \langle 0, \size{\{u \mid \proj{\phase}{\netstate_1(u) = 1}\}}, \dots,\size{\{u \mid \proj{\phase}{\netstate_1(u)  = T -1}\}}\rangle \enspace .
  \end{align*}
  That is, we have \(\popstate_1 \transition \popstate_2\) due to a deterministic transition, which, in
  particular, implies \(\popstate_1 \transseq \popstate_2\).

  The second case is more involved. Let us assume \(\{u \mid \proj{\phase}{\netstate_1(u) = T}\} \neq \emptyset\), that is, at least one oscillator fires. Hence, due to the preconditions of the
  transitions, we can divide the transition sequence from \(\concstate_1\) to \(\concstate_2\) as follows:
  \begin{align*}
    \concstate_1 \conctransseq r_T \conctransseq r_{T-1} \conctransseq \dots \conctransseq r_1 \conctransseq \concstate_2 \enspace, 
  \end{align*}
  where \(r_\Phase = (\envstate_{r\Phase}, \netstate_{r\Phase})\) denotes the state where all oscillators with
  phase \(\Phase\) changed their mode to \(\update\). Our goal now is to find a broadcast
  failure vector \(\fvec\), such that \(\suc(\popstate_1, \fvec) = \popstate_2\). To that end,
  let
  \begin{align*}
    f_\Phase & = \size{\{ u \mid \proj{\phase}{\netstate_1(u)} \geq \Phase\}} - \left( \proj{c}{\envstate_{r\Phase}} + \sum_{\PhaseTwo = \Phase+1}^{T} f_\PhaseTwo \right)
  \end{align*}
  for all \(\Phase\) where \(\Phase + \pert(\Phase, \proj{c}{\envstate_{r\Phase}}, \epsilon) +1 \ge T\). For the remaining \(\Phase\), set \(f_\Phase = \star\). Then \(\fvec = \langle f_1, \dots, f_T\ \rangle\). With this broadcast failure vector at hand, we now have to show that
  \begin{align*}
    \sum_{\PhaseTwo \in \phiupdate_\Phase(\popstate_1, \fvec) }\size{\{u \mid \proj{\phase}{\netstate_1(u)} = \PhaseTwo\}} = \size{\{u \mid \proj{\phase}{\netstate_2(u)} = \Phase\}} \enspace . 
  \end{align*}
  Recall that \(\phiupdate_\Phase(\popstate_1, \fvec) = \{\PhaseTwo \mid \ptf(\popstate_1, \PhaseTwo, \fvec) = \Phase\}\). Together with the condition we want to prove,
  this implies, that we  need to show \(\proj{\phase}{\netstate_2(u)} =
  \ptf(\popstate_1, \proj{\phase}{\netstate_1(u)}, \fvec)\) for all oscillators \(u\).
  We now need again to distinguish several cases, according to the different cases
  of the transition defined by condition (\ref{trans:update_add}).

  First, let \(u\) be such that \(\proj{\phase}{\netstate_1(u)} \leq R\), i.e., oscillator
  \(u\) is within its refractory period. If \(\proj{\phase}{\netstate_1(u)} = T\), then we have
  \begin{align*}
    \proj{\phase}{\netstate_2(u)} &= 1 &  \{\text{Cond.}~(\ref{trans:update_end_cycle_add})\}\\
    &=\ptf(\popstate_1, \proj{\phase}{\netstate_1(u)}, \fvec) & \{\text{Def.}~\ref{def:taufailure}\}
    \end{align*}
  Otherwise, if \(\proj{\phase}{\netstate_1(u)} < T\), we have
  \begin{align*}
    \proj{\phase}{\netstate_2(u)} &= \proj{\phase}{\netstate_1(u)} + 1 &  \{\text{Cond.}~(\ref{trans:update_in_cycle_no_pulse_add})\}\\
    &=\ptf(\popstate_1, \proj{\phase}{\netstate_1(u)}, \fvec) & \{\text{Def.}~\ref{def:taufailure}\}
  \end{align*}
  Now assume that \(\proj{\phase}{\netstate_1(u)} \ge R\), i.e., oscillator
  \(u\) is outside of its refractory period and thus will be perturbed by firing oscillators.
  If \(\proj{\phase}{\netstate_1(u)} = T\), then we proceed as in the previous case. So, let us
  assume \(\proj{\phase}{\netstate_1(u)} < T\). To show that the transition function of
  the population model coincides with the result within the concrete model, we need to ensure
  that the perceived firing oscillators are equal in both models for each oscillator.

  \begin{claim}
    Within each phase, the perceived oscillators in the population model coincide
    with the oscillators that fired up to the next higher phase in the concrete model.
    Formally, for each \(1 \leq \Phase < T\), we have 
    \(\proj{c}{\envstate_{r\Phase+1}} = \alpha^\Phase(\popstate_1,\fvec)\).
  \end{claim}
  \begin{proof}
    In the following, we use the notation \(\popstate_1 = \langle k_1, \dots, k_T\rangle\).
    Let \(f_\Phase \neq \star\). Then for \(\Phase = T-1\), we have
    \begin{align*}
      \proj{c}{\envstate_{rT}} &= \size{\{ u \mid \proj{\phase}{\netstate_1(u)} = T\}} - f_T & \{\text{Def. of } f_\Phase\}\\
                               & = 0 + k_T - f_T& \{\text{Def. of } \popstate_1\}\\
                               & = \alpha^T(\popstate_1,\fvec) + k_T - f_T& \{\text{Def. of } \alpha^\Phase\text{, Eq.}~(\ref{eq:alpha})\}\\
      & = \alpha^{T-1}(\popstate_1, \fvec) &
    \end{align*}
    Now assume that \(\proj{c}{\envstate_{r\Phase+1}} = \alpha^\Phase(\popstate_1,\fvec)\) for some
    \(1 < \Phase < T\) with \(f_\Phase \neq \star\). Then
    \begin{align*}      
      \proj{c}{\envstate_{r\Phase}}  & = \size{\{ u \mid \proj{\phase}{\netstate_1(u)} \geq \Phase\}} - \left(  \sum_{\PhaseTwo = \Phase}^{T} f_\PhaseTwo \right) & \{\text{Def of } f_\Phase \}\\
                                     & = \left( \sum_{\PhaseTwo = \Phase}^{T}k_\PhaseTwo\right) - \left(  \sum_{\PhaseTwo = \Phase}^{T} f_\PhaseTwo \right) &  \{\text{Def. of } \popstate_1\}\\
                                     &=  \left( \sum_{\PhaseTwo = \Phase+1}^{T}k_\PhaseTwo\right) - \left(  \sum_{\PhaseTwo = \Phase+1}^{T} f_\PhaseTwo \right) + k_\Phase - f_\Phase &\\
                                     & = \proj{c}{\envstate_{r\Phase+1}}  + k_\Phase - f_\Phase & \{\text{Def. of }f_\Phase\}\\
                                     & = \alpha^\Phase(\popstate_1, \fvec) + k_\Phase - f_\Phase & \{\text{Ass.}\}\\
      & = \alpha^{\Phase -1}(\popstate_1, \fvec) & \{\text{Def. of }\alpha^\Phase, \text{Eq.}~(\ref{eq:alpha})\}
    \end{align*}
    If \(f_\Phase = \star\), then the claim immediately holds.\qed
  \end{proof}
  Furthermore, observe that 
  for all \(\Phase\), if \(f_\Phase = \star\), then we have \(\proj{c}{\envstate_{r\Phase+1}} = \proj{c}{\envstate_{r1}}\). 
  We can now proceed to prove the final two cases.

  First, let \(\proj{\phase}{\netstate_1(u)} + \pert(\proj{\phase}{\netstate_1(u)},\proj{c}{\envstate_{r1}}, \epsilon) + 1 \leq T \). This implies
  \(f_{\proj{\phase}{\netstate_1(u)}} = \star\). Then, by the observation and the claim above, we
  also get
  \begin{align*}
    &\ptf(\popstate_1,\proj{\phase}{ \netstate_1(u)} , \fvec) \\
    &= 1 + \refr(\proj{\phase}{ \netstate_1(u)}, \pert(\proj{\phase}{ \netstate_1(u)}, \alpha^{\proj{\phase}{ \netstate_1(u)}}(\popstate_1, \fvec), \epsilon))&\{\proj{\phase}{ \netstate_1(u)} > R\} \\
    & = 1 + \proj{\phase}{ \netstate_1(u)} + \pert(\proj{\phase}{ \netstate_1(u)}, \alpha^{\proj{\phase}{ \netstate_1(u)}}(\popstate_1, \fvec), \epsilon) & \{\text{Claim}\}\\
    & = 1 + \proj{\phase}{ \netstate_1(u)} + \pert(\proj{\phase}{ \netstate_1(u)}, \proj{c}{ \envstate_{r\proj{\phase}{ \netstate_1(u)}+1}}), \epsilon) &\{\text{Obs.}\}\\
    &=\proj{\phase}{\netstate_1(u)} + \pert(\proj{\phase}{\netstate_1(u)},\proj{c}{\envstate_{r1}}, \epsilon) + 1& \{\text{Eq.}~(\ref{trans:update_in_cycle_pulse_stays_in_cycle_add})\}\\
    & = \proj{\phase}{\netstate_2(u)}\enspace. & 
  \end{align*}
  Finally, let \(\proj{\phase}{\netstate_1(u)} + \pert(\proj{\phase}{\netstate_1(u)},\proj{c}{\envstate_{r1}}, \epsilon) + 1 > T \), i.e. \(\proj{\phase}{\netstate_2(u)} = 1\). Then it has also to be the case that \(\proj{\phase}{\netstate_1(u)} + \pert(\proj{\phase}{\netstate_1(u)},\proj{c}{\envstate_{r\proj{\phase}{\netstate_1(u)}+1}}, \epsilon) + 1 > T\). By the claim above, this means
  \(\proj{\phase}{\netstate_1(u)} + \pert(\proj{\phase}{\netstate_1(u)},\alpha^{\proj{\phase}{\netstate_1(u)}}(\popstate_1,\fvec), \epsilon) + 1 > T\).  Hence, \(\ptf(\popstate_1,\proj{\phase}{ \netstate_1(u)} , \fvec) = 1\) as well.

  Now recall that we assumed initially that for all intermediate states \(\concstate = (\envstate, \netstate)\)  of the transition sequence, we have \(\proj{\location}{\envstate} = \update\).
  If this is not the case, we can partition the sequence into distinct subsequences, where this assumption holds for each subsequence, and apply the arguments above. 
 This proves the lemma. \qed
\end{proof}

Now we compare the probabilities of transition sequences in the different models.
\begin{lemma}
  \label{lem:prob-sum}
  Let \(D_c = (\concstates, \concstate_0, \prmatrix_c)\) be a concrete network of oscillators and 
  \(D_p = (\popstates, \popstate_0, \prmatrix_p)\) be its abstraction as a population model, as well
  as
 \(\concstate_1, \concstates^\prime\) and \(\popstate_1, \popstate_2 \in \popstates\), with \(h(\concstate_1) = \popstate_1\).
  Then, the sum of the probabilities of transition sequences from \(\concstate_1\) 
to all instantiations \(\concstate_2\)
with  \(h(\concstate_2) = \popstate_2\) is equal to the probability of the transition from \(\popstate_1\) to \(\popstate_2\).
\end{lemma}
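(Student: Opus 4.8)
The plan is to partition the transition sequences leaving $\concstate_1$ according to the failure vector they realise, to compute the total probability mass of each block, and to match it against the corresponding summand of $\prmatrix_p(\popstate_1,\popstate_2)$. First I would recall that, by the construction in Subsect.~\ref{subsec:model}, $\prmatrix_p(\popstate_1,\popstate_2)=\sum_{\fvec\in\failvecs_{\popstate_1},\,\suc(\popstate_1,\fvec)=\popstate_2}\pfailvec(\popstate_1,\fvec)$, so it suffices to exhibit for each $\fvec\in\failvecs_{\popstate_1}$ a block $B_\fvec$ of concrete sequences from $\concstate_1$ with total probability $\pfailvec(\popstate_1,\fvec)$, all ending in a single state $\concstate_2(\fvec)\in\concstates^\prime$ with $\abst{\concstate_2(\fvec)}=\suc(\popstate_1,\fvec)$, and to argue that these blocks partition all sequences from $\concstate_1$ into $\concstates^\prime$. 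The existence of $\concstate_2(\fvec)$ and the identity $\abst{\concstate_2(\fvec)}=\suc(\popstate_1,\fvec)$ are exactly Lemma~\ref{lem:conc-sim-pop}; its independence of the processing order holds because a broadcast failure only suppresses a counter increment and never alters the firing oscillator's own phase update, so the final phase of each oscillator depends only on its initial phase and on the order-independent values $\alpha^\Phase(\popstate_1,\fvec)$. To read a failure vector off a given concrete sequence I would, for each firing phase $\Phase$ with $k_\Phase>0$, set $f_\Phase$ to the number of firing phase-$\Phase$ oscillators whose transition did not increment the counter, and $f_\Phase=\star$ otherwise; the ``all-or-nothing'' firing within a phase (a consequence of $\pert$ being increasing together with the fact that a non-firing oscillator leaves the counter untouched) guarantees this is well defined and yields a genuine element of $\failvecs_{\popstate_1}$.

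The heart of the argument is the per-phase probability computation. Fix a firing phase $\Phase$ with $k_\Phase$ oscillators, $f_\Phase$ of which suffer a broadcast failure. As the $k_\Phase$ oscillators move from $\start$ to $\update$ one at a time, the normalising denominator $|\initloc{\concstate}{\Phase}|$ runs through $k_\Phase,k_\Phase-1,\ldots,1$, so any single completed ordering contributes the product $\prod_{m=1}^{k_\Phase}1/m=1/k_\Phase!$ of normalisation factors, times $\failureprob$ for each of the $f_\Phase$ failing transitions (rule~(\ref{trans:fire_failure_add}) or~(\ref{trans:perturbed_fire_failure_add})) and $1-\failureprob$ for each of the $k_\Phase-f_\Phase$ succeeding ones (rule~(\ref{trans:fire_add}) or~(\ref{trans:perturbed_fire_add})); hence each such sequence has weight $\failureprob^{f_\Phase}(1-\failureprob)^{k_\Phase-f_\Phase}/k_\Phase!$. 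The number of such sequences is $k_\Phase!\binom{k_\Phase}{f_\Phase}$, since a sequence is fixed by an ordering of the $k_\Phase$ oscillators ($k_\Phase!$ choices) together with a choice of which $f_\Phase$ processing steps fail ($\binom{k_\Phase}{f_\Phase}$ choices). Multiplying, the $k_\Phase!$ cancels and the phase contributes exactly $\binom{k_\Phase}{f_\Phase}\failureprob^{f_\Phase}(1-\failureprob)^{k_\Phase-f_\Phase}=\pfail(k_\Phase,f_\Phase)$. For a non-firing phase ($f_\Phase=\star$) the same telescoping over the $k_\Phase!$ orderings of rule-(\ref{trans:perturbed_but_not_fire_add}) transitions (each of weight $1/|\initloc{\concstate}{\Phase}|$) sums to $1$, matching the factor $1$ attached to a $\star$ entry in~(\ref{eqn:pfailvec}). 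Since the mandatory reset~(\ref{trans:env_reset}) and the simultaneous update~(\ref{trans:update_add}) each have probability $1$, the total weight of $B_\fvec$ is the product of these per-phase contributions, namely $\prod_{\Phase:\,f_\Phase\neq\star}\pfail(k_\Phase,f_\Phase)=\pfailvec(\popstate_1,\fvec)$.

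Finally I would assemble the pieces. Because the counts $f_\Phase$ can be read off uniquely, the blocks $B_\fvec$ partition all sequences from $\concstate_1$ into $\concstates^\prime$, and $\suc(\popstate_1,\fvec)=\popstate_2$ holds precisely when $\abst{\concstate_2(\fvec)}=\popstate_2$. Summing over all $\fvec$ with $\suc(\popstate_1,\fvec)=\popstate_2$ then yields $\sum_{\concstate_2:\,\abst{\concstate_2}=\popstate_2}\Pr(\concstate_1\conctransseq\concstate_2)=\sum_{\fvec:\,\suc(\popstate_1,\fvec)=\popstate_2}\pfailvec(\popstate_1,\fvec)=\prmatrix_p(\popstate_1,\popstate_2)$. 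The degenerate case $k_T=0$ is handled identically: there is a unique failure vector $\{\star\}^T$, the reset and update transitions are deterministic, and the intervening rule-(\ref{trans:nothing_fires_add}) transitions (normalised by the total count $|\initloc{\concstate}{}|$) telescope over their $N!$ orderings to probability $1=\pfailvec(\popstate_1,\{\star\}^T)$. I expect the main obstacle to be rigorously justifying the order-independence and the ``all-or-nothing'' firing within a phase; concretely, one must verify that the counter value governing both the firing decision and the final phase update of every phase-$\Phase$ oscillator coincides with $\alpha^\Phase(\popstate_1,\fvec)$ no matter how same-phase firings interleave, so that the per-phase factors genuinely decouple and each block $B_\fvec$ is well defined with a single target state $\concstate_2(\fvec)$.
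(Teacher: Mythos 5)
Your proposal is correct and follows essentially the same route as the paper's proof: both arguments fix a failure vector $\fvec$, count the $k_\Phase!$ orderings and $\binom{k_\Phase}{f_\Phase}$ failure-set choices per phase against the telescoping normalisation $\frac{1}{k_\Phase!}$ to recover $\pfail(k_\Phase,f_\Phase)$, multiply over phases to obtain $\pfailvec(\popstate_1,\fvec)$, and dispose of the non-firing case via the $N!\cdot\frac{1}{N!}=1$ computation. Your write-up is in fact somewhat more explicit than the paper's (which leaves the partition of concrete sequences by failure vector, the order-independence of the endpoint, and the final summation over $\fvec$ with $\suc(\popstate_1,\fvec)=\popstate_2$ implicit), but the underlying argument is the same.
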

\begin{proof}

Let \(\popstate = \langle k_1, \dots, k_T\rangle\). Furthermore, let \(N = \sum_{i=1}^T k_i\). 
Now, let \(\concstate\) be an arbitrary state corresponding to \(\popstate\). If no oscillator fires, we have 
\(N!\) possibilities to create an transition sequence, each of which has a probability of \(\frac{1}{N!}\) to happen.
Hence, we get that the probability that one of these transitions happen is \(N! \cdot \frac{1}{N!} = 1\), which coincides
with the definition in the population model.

For the case that at least one oscillator fires and thus perturbs the other oscillators, we consider the
construction in the proof of Lemma~\ref{lem:conc-sim-pop} with respect to a failure vector \(F = \langle f_1, \dots, f_T\rangle\) for \(\popstate\). 
During each phase \(\Phase\), we have to choose the particular order of the \(k_\Phase\) and in addition,
we have to choose the set \(\failstates{\Phase}\). That is,
we have \(k_\Phase!\) possible orders, and \(\mybinom{k_\Phase}{f_\Phase}\) possibilities for the choice
of \(\failstates{\Phase}\). Furthermore, the combined  probability for the transitions of 
the  oscillators that should fire but are inhibited by a broadcast failure is
\begin{align*}
  \frac{1}{|\initloc{\concstate}{\Phase}|!} \cdot (1-\failureprob)^{k_\Phase - f_\Phase} \cdot \failureprob^{f_\Phase} \enspace .
\end{align*}
Observe that at the start of the construction of each phase, \(|\initloc{\concstate}{\Phase}| = k_\Phase\).
Hence the probability above simplifies to 
\begin{align*}
  \frac{1}{k_\Phase!} \cdot (1-\failureprob)^{k_\Phase - f_\Phase} \cdot \failureprob^{f_\Phase} \enspace .
\end{align*}
Due to the possible choices during the construction of the transition sequence, we have
that the probability of \emph{one} of these sequences to happen is
\begin{align*}
 \mybinom{k_\Phase}{f_\Phase} \cdot {k_\Phase}! \cdot \frac{1}{k_\Phase!} \cdot (1-\failureprob)^{k_\Phase - f_\Phase} \cdot \failureprob^{f_\Phase} =  \mybinom{k_\Phase}{f_\Phase} \cdot (1-\failureprob)^{k_\Phase - f_\Phase} \cdot \failureprob^{f_\Phase}  \enspace ,
\end{align*}
which is exactly the function \(\pfail(k_\Phase, f_\Phase)\) as in the population model.
Furthermore, with similar reasoning as above, the transition probability for the sequences, where no oscillator
is perturbed anymore, is \(1\).
Hence, we get that the combined probability of the set of paths from one instantiation \(\concstate_1\) of a population model state
\(\popstate_1\) to an instantiation of a successor \(\popstate_2\) of \(\popstate_1\) is equal to the probability of
the the transition from \(\popstate_1\) to \(\popstate_2\). \qed
\end{proof}

From Lemmas~\ref{lem:conc-sim-pop}, \ref{lem:pop-sim-conc} and \ref{lem:prob-sum}, we immediately get that the same weak simulation relation holds between a population model and 
the concrete network of oscillators it represents.

\begin{theorem}
  \label{thm:simulation}
  Let \(D_c = (\concstates, \concstate_0, \prmatrix_c)\) be a concrete network of oscillators and 
  \(D_p = (\popstates, \popstate_0, \prmatrix_p)\) be its abstraction as a population model.
  If we have \(\concstate_1, \concstate_2 \in \concstates^\prime\) and \(\popstate_1, \popstate_2 \in \popstates\), with \(h(\concstate_i) = \popstate_i\), then 
  \(
    \popstate_1 \poptrans^\ast \popstate_2  \text{ if, and only if, } \concstate_1 \conctransseq \concstate_2\).
  Furthermore, the probabilities over all paths in both models coincide.
In particular, we have  \(h(\concstate_0) = \popstate_0\), and that \(\concstate_0\) weakly bisimulates \(\popstate_0\). 
\end{theorem}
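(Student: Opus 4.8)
The plan is to assemble Lemmas~\ref{lem:conc-sim-pop}, \ref{lem:pop-sim-conc} and~\ref{lem:prob-sum} into the biconditional and the probability correspondence, and then read off the weak bisimulation. First I would fix the relation $\mathcal{R} = \{(\concstate, \popstate) \mid \concstate \in \concstates^\prime \text{ and } h(\concstate) = \popstate\}$ and note that the two simulation lemmas are precisely its two closure conditions: Lemma~\ref{lem:conc-sim-pop} matches a single population step, from any $\concstate_1$ with $h(\concstate_1) = \popstate_1$, by a concrete sequence $\concstate_1 \conctransseq \concstate_2$ with $\concstate_2 \in \concstates^\prime$ and $h(\concstate_2) = \popstate_2$, while Lemma~\ref{lem:pop-sim-conc} matches a concrete sequence between states of $\concstates^\prime$ by $\popstate_1 \poptrans^\ast \popstate_2$.

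For the biconditional I would treat the two directions separately. The implication $\concstate_1 \conctransseq \concstate_2 \implies \popstate_1 \poptrans^\ast \popstate_2$ is immediate from Lemma~\ref{lem:pop-sim-conc}: since $h$ is a function, the population target it produces is forced to equal $h(\concstate_2) = \popstate_2$. For $\popstate_1 \poptrans^\ast \popstate_2 \implies \concstate_1 \conctransseq \concstate_2$ I would induct on the number of population steps. The base case $\popstate_1 = \popstate_2$ uses reflexivity of $\conctransseq$. For the step, decompose $\popstate_1 \poptrans^\ast \tilde{\popstate} \poptrans \popstate_2$; the induction hypothesis gives $\tilde{\concstate} \in \concstates^\prime$ with $\concstate_1 \conctransseq \tilde{\concstate}$ and $h(\tilde{\concstate}) = \tilde{\popstate}$, and one application of Lemma~\ref{lem:conc-sim-pop} to $\tilde{\popstate} \poptrans \popstate_2$ from $\tilde{\concstate}$ extends this to a concrete state abstracting to $\popstate_2$, staying within $\concstates^\prime$ at every firing boundary. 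Here I would flag the one genuine subtlety: Lemma~\ref{lem:conc-sim-pop} yields \emph{some} instantiation of $\popstate_2$ (indeed in the non-firing case the shift is deterministic and forces a unique instantiation), so the forward implication is existential in $\concstate_2$ and should be read that way; obtaining any prescribed instantiation would additionally require the permutation symmetry of the concrete model, under which states sharing an abstraction are interchangeable and the transition relation is invariant.

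The probability correspondence is the substantive part, and it is carried by Lemma~\ref{lem:prob-sum}: for a single population step the probability $\prmatrix_p(\popstate_1, \popstate_2)$ equals the \emph{summed} probability of all concrete sequences from a fixed $\concstate_1$ to the instantiations of $\popstate_2$. I would lift this to entire paths by factoring each path probability into a product of step probabilities and applying Lemma~\ref{lem:prob-sum} blockwise, so that the measure of the set of concrete paths that project under $h$ (block by block, each population macro-step corresponding to one concrete firing round) onto a given population path equals the measure of that population path; this collapse of several concrete micro-steps into one population step is exactly the weakness in the weak bisimulation. Finally I would verify $h(\concstate_0) = \popstate_0$ from the construction of the initial states, observing that the multinomial coefficient ${{N}\choose{k_1,\dots,k_T}}$ in $\prmatrix(\init, \popstate)$ counts exactly the concrete orderings abstracting to $\popstate$, each of weight $\frac{1}{T^N}$, so the two initial distributions agree under $h$ by the same counting used in Lemma~\ref{lem:prob-sum}. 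Combining the matched transition structure with the matched probabilities shows $\mathcal{R}$ is a weak bisimulation and, in particular, that $\concstate_0$ weakly bisimulates $\popstate_0$. The main obstacle I anticipate is the bookkeeping in this path-measure lifting: making the decomposition of concrete paths into firing-round blocks well-defined and exhaustive, so that Lemma~\ref{lem:prob-sum} applies independently to each block and the products telescope correctly.
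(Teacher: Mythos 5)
Your proposal follows the paper's route exactly: the paper gives no separate argument for Theorem~\ref{thm:simulation} beyond stating that it follows immediately from Lemmas~\ref{lem:conc-sim-pop}, \ref{lem:pop-sim-conc} and~\ref{lem:prob-sum}, which is precisely the assembly you carry out. Your added details---the induction on the number of population steps, the blockwise lifting of Lemma~\ref{lem:prob-sum} to path measures, the multinomial check of the initial states, and the flag that Lemma~\ref{lem:conc-sim-pop} is existential in the concrete target (so hitting a \emph{prescribed} instantiation needs the permutation symmetry of the concrete model)---are sound elaborations of steps the paper leaves implicit.
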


Hence, we can use population models to analyse the global properties of a network
of pulse-coupled oscillators following the concrete model as defined in Sect.~\ref{sec:concrete} without
loss of precision. In particular, this allows us to increase the size of the network to check such properties,
while still giving us the opportunity to analyse the internal behaviour of nodes, if we restrict the
network size.

\subsection{Experimental Validation} 

As Theorem~\ref{thm:simulation} implies, the synchronisation probabilities for a concrete model and its corresponding
population model coincide. However, we have to keep in mind that the PCTL formulas describing synchronisation are
 of course different. For a concrete model with four nodes and a cycle length \(T=10\), the synchronisation probability can be queried with  the following 
formula.
\begin{align*}
  \mathsf{sync}_c  & \equiv \pctlp_{=?}[\pctlsometime \proj{\phase}{1} = \proj{\phase}{2} \land \proj{\phase}{1} = \proj{\phase}{3} \land \proj{\phase}{1} = \proj{\phase}{4}  ]
\end{align*}
For a population model, the corresponding property is
\begin{align*}
  \mathsf{sync}_p  &\equiv \pctlp_{=?}[\pctlsometime \bigvee_{1 \leq i \leq 10} k_i = 4] 
\end{align*}
For both types of models, we defined a suitable input for the model checker \prism{}, and compared the results for different
values of \(R\), \(\epsilon\) and \(\failureprob\). As expected, the model checking results were matching exactly.   
Table~\ref{tab:comparison_time} shows the model construction and checking times for some
exemplary parameter combinations of the models as reported by \prism\footnote{The experiments were run on a computer equipped with an Intel Core i7-7700 CPU
at 3.6 GHz and with 16GB of RAM. The version of \prism{} used was 4.4 beta.}. In the concrete model, the bulk of time 
is spent in the model checking phase, while the construction is much faster. For the analysis of 
the population model, however, the situation is reversed. The model construction phase is an order of magnitude
 longer than the model checking phase. As expected, the model checker needs less time for the analysis of the population model,
 if we add the time needed  for model construction and checking. 




\begin{table}
  \centering
  \caption{Model Construction Times and Model Checking Times for Both the Concrete and Population Model with \(N=4\), \(T=10\), \(\failureprob = 0.2\) and \(\epsilon = 0.1\) (in seconds).}
  \label{tab:comparison_time}
  \begin{tabular}{l|r|r|r|r}
  &\multicolumn{2}{c|}{Concrete Model}& \multicolumn{2}{c}{Population Model}\\
\hline
  \(R\) & Constr. & Check. & Constr. & Check\\
\hline
1 & 0.014 & 0.89  & 0.388  &0.017\\
5 & 0.069 & 10.38 &0.420 &0.047\\
8 & 0.056 & 1.46  &0.356 &0.013 \\
  \end{tabular}
\end{table}
\section{Conclusion}
\label{sec:conclusion}

In this paper we have introduced a formal concrete model for a network of nodes synchronising
their clocks over a set of discrete values. Furthermore, we developed a population model that
can alleviate state-space explosion when reasoning about significantly larger networks. We encoded both models as
discrete-time Markov chains, and formally connected them by showing that a concrete model of a
network weakly simulates a population model of that same network. We then showed that these two
models are equivalent with respect to the reachability of distinguished states, namely those
where all nodes in the network have synchronised their clocks.



Formalising the individual nodes of a network allows for the analysis of their internal properties.
 However,  this internal structure also inhibits the verification of global network properties.
Modelling the
whole network as the product of the models for the individual nodes quickly, and unsurprisingly, results in a model that is too large to analyse with existing tools and techniques.
While the use of appropriate collective abstractions, such as population models, allow for the analysis of
larger networks, they often impose restrictions on the topologies of the network that can be
considered. We could, of course, simply take the product of individual population models to
represent network structures more specialised than the fully-connected graphs considered here,
but again we face the consequences of this approach when trying to analyse the resulting model.
In addition, when using population models we lose the possibility to distinguish between nodes
having the same internal state. However, this does not restrict our analysis when considering
networks of homogeneous nodes where the properties of interest relate to global behaviours of the network itself.



Our current definition of pulse-coupled oscillators only allows for non-negative results of
the phase response function. However, there are also oscillator definitions with
phase response functions with possibly negative values~\cite{wang2012energy}. That is, instead
of shifting the state of an oscillator towards the end of the cycle, the perturbation may
reduce the value of the oscillator's state. It would be interesting to study the impact
of negative-valued phase response functions in the setting of discrete clock values.


While a concrete model can be instantiated to incorporate different topologies by explicit
encoding of possible perturbances in the nodes' transitions, it is by no means obvious how
to encorporate topologies into a population model. By design, the nodes in the latter are
indistinguishable, hence the differences in the connections between nodes are lost. We could
alleviate this restriction slightly, by modelling the connection between networks of strongly connected
components. That is, each component can be modelled by a different population model, and the firings
within one model can perturb different models. However, this would mean to compute the cross-product
of the population model, and hence we are back at the state-space explosion problem. Furthermore, our abstract
relation would need to take the mapping of single nodes into different components into account.

Deductive approaches might serve as an additional way to verify larger systems. In particular, 
due to the regularity of population models, we conjecture the existence of an inductive
invariant that holds from a certain size of models onwards. That is, as soon as the population
grows to a size to be treated as a single entity, we can increase this size by one node, and
guarantee that synchronisation still occurrs. For the population model sizes below this threshold, 
we could still use our proposed model-checking technique as the induction base. 
However, is is not clear what such an invariant should be, and how it can be verified.



\bibliographystyle{spbasic}
\bibliography{lit,figo}

\end{document}